\documentclass[11pt]{article}


\usepackage{setspace}

\usepackage{url}
\usepackage{amssymb,amsmath,amsfonts}
\usepackage{graphicx,color,enumitem}
\usepackage{eurosym}
\usepackage{mathdots,mathtools}
\usepackage{calrsfs}	
\usepackage{amsthm}	
\usepackage[round]{natbib}
\usepackage[titletoc,title]{appendix}
\usepackage{titlesec}
\usepackage{subfig}
\usepackage{titling}
\usepackage{booktabs}


\allowdisplaybreaks

\definecolor{labelkey}{rgb}{0.6,0,1}

\definecolor{red}{rgb}{0,0.0,0}

 \usepackage[margin=1in]{geometry}

\flushbottom


\newcommand{\E}{{\mathbb E}}

\DeclareMathOperator{\Var}{Var}
\newcommand{\nv}{\boldsymbol}

\newcommand{\Id}{{\boldsymbol{I}}}
\newcommand{\bbr}{\mathbb{R}}

\DeclareMathOperator{\argmin}{argmin}
\DeclareMathOperator{\Diag}{Diag}

\newtheorem{theorem}{Theorem}

\newtheorem{assumption}{Assumption}

\newtheorem{corollary}{Corollary}

\newtheorem{definition}{Definition}
\newtheorem{example}{Example}

\newtheorem{lemma}{Lemma}

\newtheorem{proposition}{Proposition}
\newtheorem{remark}[theorem]{Remark}

\numberwithin{equation}{section}
\numberwithin{theorem}{section}
\numberwithin{proposition}{section}
\numberwithin{lemma}{section}
\numberwithin{corollary}{section}

\definecolor{Red}{rgb}{0.00, 0.00, 0.00}
    \newcommand{\Red}{\color{Red}}
\definecolor{Blue}{rgb}{0.00, 0.00, 1.00}
    
\definecolor{Green}{rgb}{0.00, .30, 0.00}

\thanksmarkseries{alph}
    

\begin{document}

\title{Market Efficient Portfolios in a Systemic Economy\thanks{The content of an early version of this manuscript is included in the PhD dissertation of \cite{awiszus2020}.}}
\author{Kerstin Awiszus\thanks{Institute of Actuarial and Financial Mathematics \& House of Insurance, Leibniz Universit\"at Hannover, Germany. email: \texttt{kerstin.awiszus@insurance.uni-hannover.de, stefan.weber@insurance.uni-hannover.de}.} \hspace{1cm}  Agostino Capponi\thanks{Department of Industrial Engineering \& Operations Research, Columbia University, New York, USA. email: \texttt{ac3827@columbia.edu}.} \hspace{1cm} Stefan Weber\protect\footnotemark[2]  
}

\maketitle

\begin{abstract}
	\doublespacing
	We study the ex-ante minimization of market inefficiency, defined in terms of minimum deviation of market prices from fundamental values, from a centralized planner's perspective. Prices are pressured from exogenous trading actions of leverage targeting banks, which rebalance their portfolios in response to asset shocks. We characterize market inefficiency in terms of two key drivers, the banks' systemic significance and the statistical moments of asset shocks, and develop an explicit expression for the matrix of asset holdings which minimizes such inefficiency. Our analysis shows that to reduce inefficiencies, portfolio holdings should deviate more from a full diversification strategy if there is little heterogeneity in banks' systemic significance.
\end{abstract}
\vspace{0.2cm}
\textbf{Keywords:}  Systemic economy; systemic significance; price pressure; leverage targeting; market efficiency.

\doublespacing
\section{Introduction}\label{sec:intro}

Forced asset sales and purchases have been widely observed in financial markets. The most popular form of forced trading is that of fire sales, and has been extensively implemented by hedge funds and broker dealers during the global 2007-2009 financial crisis; see \cite{brunnermeier2008market} and \cite{ khandani2011happened} for empirical evidence. 
 
An asset is sold at a depressed price by a seller who faces financial constraints that become binding, i.e., when the seller becomes unable to pay his own creditors without liquidating the asset. For example, members of a clearinghouse need to post additional collateral if the value of their portfolios drops by a significant amount (\cite{pirrong2011economics}). Similarly, a mutual fund may need to liquidate assets at discounted prices if it faces heavy redemption requests from its investors, and does not have enough cash reserves at disposal (\cite{chen2010payoff}). Banks manage their leverage based on internal value at risk models (\cite{adrianShinRFS, ShinGS}), and may need to liquidate assets if negative shocks hit their balance sheets. 

Forced purchases, despite less emphasized, are also important in financial markets. For instance, empirical evidence (\cite{CovalStafford2007}) suggests that equity mutual funds substantially increase their existing positions if they experience large inflows, thus creating upward pressure in the price of stocks held by these funds. Such inflow-driven purchases produce trading opportunities for outsiders, who would be able to sell their assets and earn a significant premium.

Asset purchases and sales triggered by financial constraints  push asset prices away from fundamental values (\cite{shleifer1992liquidation}), a form of inefficiency that we analyze in a systemic economy. Typically, when a firm must sell assets to fulfill a financial constraint, the potential buyers with the highest valuation for the asset are other firms belonging to the same industry or investors with appropriate expertise. Those firms are likely to be in a similar financial situation, and thus unable to supply liquidity. The buyers of these assets are then outsiders, who value these assets less. A symmetric argument holds if the firm executes inflow-driven purchases.

When a firm impacts asset prices through its trading actions, other market participants who happen to hold the same assets on their balance sheets are also affected, and may in turn violate their financial constraints, making it necessary for them to take trading actions. Through this process, the trading risk becomes {\it systemic}, i.e, it imposes cascading effects on asset prices and  impacts the equity of market participants through common asset ownership.

We consider an economy consisting of leveraged institutions (henceforth, called \emph{banks}) that track a fixed leverage ratio. Empirically, this behavior has been well documented for commercial banks in the United States, see, e.g., \cite*{adrian2010liquidity}. After a shock hits an asset class, prices change and so does the bank's leverage ratio. To fulfill the financial constraint of targeting its leverage, the bank must then liquidate or purchase assets, depending on whether the experienced shock was positive or negative. Banks trade assets with other nonbanking institutions that we model collectively as a representative nonbanking sector, assumed to have a downward sloping demand function as in \cite{capponi2015price}. The equilibrium price of the asset is uniquely pinned down by the point at which the demand of the banking and nonbanking sector intersect.

We study market efficiency, measured by the mean squared deviation of fundamental market capitalization, where all assets are valued at the fundamental values, from market capitalization where assets are valued at market prices. The latter prices internalize the pressure imposed by trading activities, as banks leverage or deleverage in response to exogenous shocks to asset values. Clearly, the closer prices are to their fundamental values, the more efficient the market is. 

{The role of prices in aggregating information that is dispersed in the economy is discussed in the seminal work of \cite{Hayek45}.\footnote{In the context of secondary markets, this issue is, e.g., discussed in \cite{Leland92}, \cite{Dow97}, \cite{Subrahmanyam2001}, \cite{Dow2003}, and \cite{Goldstein08}.} The importance of market efficiency} can be micro-founded in terms of increasing the information content of prices, and as a result, better guiding the resource allocation in the economy. \cite{BrunnChina} provide a simple economic setting, in which maximizing social welfare is consistent with the objectives of minimizing the deviation of asset prices from fundamentals, and of reducing asset market volatility. Prices thereby facilitate the efficient allocation of scarce resources.

We develop an explicit characterization for the distribution of banks' holdings that ex-ante maximize market efficiency. We refer to those as the \emph{f-efficient holdings}.\footnote{The terminology f-efficient is used to emphasize that the notion of efficiency we consider is related to fundamentals.} The main insight resulting from our approach is the identification of {\Red a key driver}, the {\it systemic significance} vector, which captures the contribution of each bank to increased price pressures.\footnote{\Red We study market efficiency from the point of view of a centralized planner who chooses banks' holdings. These holdings may not be compatible with banks' incentives. We do not provide here a formal analysis of privately optimal banks' holdings and regulatory frameworks leading towards f-efficiency. Nevertheless, we elaborate  on this incentive alignment problem in Section~\ref{sec:empirical}.} 
The systemic significance depends on the banks' target leverage, the banks' trading strategies, and the illiquidity characteristics of the assets. We identify an ``aggregate first then allocate'' procedure to construct a solution to the quadratic minimization problem yielding f-efficient holdings. First, using the {probability law} of asset shocks, we construct a vector of auxiliary weighted holdings. Then, we distribute the holdings to the banks based on their contributions to market inefficiency, which is directly proportional to their systemic significance.

We show that portfolio diversification is f-efficient if asset price shocks are homogeneous and banks are heterogeneous in terms of their systemic significance. Our analysis suggests that as the shocks hitting an asset class become (statistically) larger, it is beneficial to transfer the holdings of such an asset from a more systemically significant to a less systemically significant bank to raise efficiency.
We demonstrate that the more homogeneous the economy is in terms of banks' systemic significance, the further away the matrix of f-efficient holdings is from the matrix of full asset diversification.

\paragraph{Literature Review} 

Existing literature has identified two main channels through which banks are interlinked. The first channel is through the liability side of the balance sheet. Banks have claims on their debtors, and once they are hit by shocks, they may become unable to honor their liabilities, potentially causing cascading effects through the system. Seminal contributions in this direction include \cite{EN01}, which provide an algorithm to measure contagion triggered by sequential defaults in the contractual network, and \cite{acemoglu2015systemic}, who analyze the stability of various network structures and their resilience to shocks of different sizes.\footnote{Other related works include \cite{EGJ14} and \cite{Gai}. \cite{GY14}, \cite{CapChenYao}, and \cite{RV13} account for the impact of bankruptcy costs at defaults in a counterparty network model of financial contagion. Measures of systemic downside risk are analyzed in the works by \cite*{CIM13}, \cite*{frw17}, and \cite*{bffm19}. }

The second channel is through the asset side of the balance sheet, as banks are interlinked through common portfolio holdings. Financial contagion arises when banks take hits on their balance sheets, typically because the price of their assets is subject to pressure due to forced purchases or sales (see also the discussion in the introduction). Our study contributes to this stream of literature, and is related to that of \cite{greenwood2015vulnerable}, who calibrate a model of fire-sale spillovers, assuming an economy of leverage targeting banks. Their work has been extended by \cite{capponi2015price}, who consider the higher order effects of fire-sales externalities in a similar leverage targeting model. \cite{duarte2018fire} construct and empirically valuate a measure of systemic risk generated by fire-sales externalities. The main components of their measure, namely banks' sizes, leverages, and illiquidity concentration, also constitute the primary determinants of the systemic significance vector in our model. Other works have considered models where contagion happens both through the asset and liability side of the balance sheet; see, for instance, the earlier work of \cite{cifuentes2005liquidity}, and the more recent works of  \cite{AFM13}, \cite{CLY14}, and \cite{weber2017joint}. 

The analytical infrastructure of our model builds on the work of \cite{greenwood2015vulnerable} and \cite{capponi2015price}. As in \cite{greenwood2015vulnerable}, we restrict attention to the first order effects of price pressures, i.e., those caused by the first round of banks' trading actions in response to shocks. As in \cite{capponi2015price}, price impact is determined by the capacity of the unconstrained nonbanking sector to absorb the trading pressure of the leverage constrained banks. While \cite{greenwood2015vulnerable}, \cite{duarte2018fire}, and \cite{capponi2015price} consider an ex-post model of asset contagion, where banks manage their assets after the shock has  occurred, the present paper conducts an \emph{ex-ante} analysis of balance sheet holdings. 

\cite{wagner2011systemic} analyzes the tradeoff between diversity on the systemic level and diversification at the banking level. While we focus on the centralized problem of maximizing market efficiency, \cite{wagner2011systemic} considers the privately optimal solution. 
{\Red \cite{Suff} study the resilience of a financial system with respect to asset sales by distressed institutions. They define a system to be resilient if the number of shares sold during the fire-sales process goes to zero as the shock becomes small. Unlike their study, we study the implications of purchases and sales on market efficiency.}\footnote{Portfolio similarity has also been considered in other industry sectors than banking. In the insurance industry, \cite{Mila} find a strong positive relationship between the portfolio similarity of insurance companies, and their quarterly common sales during the following year.}


Our work is also related to a branch of literature that has analyzed the stability of portfolio allocations, diversification, stress testing, and heavy tail risks of portfolios. \cite{contschaaning2017} develops a systemic stress testing model, and compares the asset pricing implications of threshold based versus target leveraging. 
Using a generalized branching process approach, \cite*{caccioli2014stability} identify a critical threshold for leverage which separates stable from unstable portfolio allocations (see also \cite{raffestin2014diversification}). In a risk-sharing context, \cite{ibragimov2011diversification} analyze the tradeoff of diversity and diversification for heavy-tailed risk portfolio distributions. 
\cite*{beale2011individual} analyze the individually and systemically optimal allocations in a simplified loss model consisting of a small number of banks and assets.



{\Red Our paper is also related to the existing literature on bubbles. \cite{BubblesCredit} decompose the price of firm stocks into a fundamental and a bubble component which capture deviations from fundamentals. There are both similarities and differences between our study and theirs. In both models, the emergence of an asset price bubble hinges on the critical assumption that firms are debt financed and do not issue new equity to finance investment.
However, in our study, all firms have full knowledge about the laws of asset values, and the key friction which generates deviations from fundamentals is the leverage targeting constraint. By contrast, in their model, the main financial frictions are credit constraints, and both lenders and borrowers have beliefs on the collateral value.  
Bubbles mitigate the credit constraint by allowing firms to borrow.\footnote{\Red If both the lending and the investing firm overestimate the value of collateral because of a bubble, the lending firm will be willing to lend more because it can monetize the bubble if a default occurs.} As a result, the firm can finance more investment and increase profits and the value of its assets. This positive feedback loop 
leads to the emergence of a stock price bubble in equilibrium.
\cite{NutzSchein} consider a continuous-time model of trading which allows for short-selling. Unlike our setup, where all banks agree on the distributions of fundamental prices, in their model agents have heterogeneous beliefs about the dynamics of the Markov state process that determines the asset's payoff. In equilibrium, optimists hold long positions with the intention of benefiting from the option of reselling later. By contrast, pessimists go short as they enjoy the option of delaying trading. In their paper, \cite{NutzSchein} measure the size of the bubble as the difference between the equilibrium price and the counterfactual price prevailing if no re-trading was allowed. Unlike our paper, the bubble is driven by disagreement on the evolution of the state process, rather than by the need of complying with prescribed leverage requirements.
}

\paragraph{Outline} The paper is organized as follows. Section \ref{sec:model} summarizes the asset price contagion model. Our main contributions start from Section \ref{sec:deviations}, where we define the quantitative measure of market efficiency, and identify key drivers of this measure such as the vector of banks' systemic significance. In Section \ref{sec:efficientholdings}, we characterize f-efficient allocations. 
Section \ref{sec:casestudies} provides case studies for a calibrated version of our model, which highlight the trade-off between diversification and diversity. {\Red In Section \ref{sec:empirical}, we discuss empirical and policy implications of our study.} We conclude in Section \ref{sec:conclusion}. We delegate proofs of results stated in the main body to an Electronic Companion~(Part \ref{app:proofs}). We present additional discussions and supporting material in sections \ref{app:spectralradius}-\ref{app:liquidationstrategies} of the Companion.


{\section{Model}\label{sec:model}}

To begin with, we introduce a few notations and definitions used throughout the paper. For two (column) vectors $u=(u_1,\ldots,u_n)^\top$ and $v=(v_1,\ldots,v_n)^\top$, we let $u\circ v=(u_1v_1,\ldots, u_n v_n)^\top$ denote the componentwise product. Similarly, $\frac{u}{v}=(u_1/v_1,\ldots,u_n/v_n)^\top$ denotes the componentwise ratio. We use $\Diag(u)$ to denote the diagonal matrix with vector $u$ on the diagonal. The identity matrix is denoted by $\nv I$, the vector or matrix of ones is denoted by $\nv 1$, and the vector or matrix of zeros is denoted by $\nv 0$, where the dimension is either specified explicitly, e.g., $\nv 1_K\in\mathbb{R}^{K}$, or clear from the context.

\begin{sloppypar}
Our analysis is developed within the one-period version of the price contagion model by \cite{capponi2015price}. We briefly review the essential elements here and provide a more extensive review of their model in E-Companion~\ref{CappLarsreview}. They consider a financial market consisting of two sectors: a banking sector with $N$ banks and a nonbanking sector. Each bank manages its asset portfolio to track a fixed leverage ratio, consistently with empirical evidence reported in the seminal contribution of \cite{adrian2010liquidity}. The nonbanking sector consists of institutions that do not engage in leverage targeting.
\end{sloppypar} 

There are $K$ types of assets available, whose market prices at time $t=0,1$ are denoted by $P^k_t$. We write
$P_t = (P^1_t , P^2_t , \ldots , P^K_t)^\top$
for the column vector of asset prices. Each asset $k$ is hit by an exogenous shock $Z^k$, which is modeled as a random variable. We use $Z = (Z^1, \ldots , Z^K)^\top$ to denote the vector of shocks. The vector of {\it fundamental values} of the assets at time $1$ is $P_0+Z$, while we use  $P_1$ to denote the vector of equilibrium prices which internalize banks' responses to shocks.
The quantity (number of units) of asset~$k$ held by bank $i$ at time $0$ is denoted by $Q^{ki}_0$. We use $Q^i_0 = (Q^{1i}_0 , Q^{2i}_0 , \ldots , Q^{Ki}_0)^\top \in\mathbb{R}^{K}$ to denote the vector of bank $i$'s holdings at $0$, and $\nv Q:=(Q_0^{ki})_{k=1,\dots,K,i=1,\ldots,N}\in\mathbb{R}^{K\times N}$ to denote the matrix of banks' holdings at time zero.  We use $\kappa^i$ to denote the leverage ratio (debt to equity ratio) targeted by bank $i$. Each bank $i$ executes an exogenous trading strategy $\alpha^{i}\in\bbr^K$ with $\sum_{k=1}^K\alpha^{ki}=1$, which specifies how a change in the amount of debt is offset by purchases or sales of the different assets in the portfolio.  
We denote by $Q^{k,\,\rm nb}_0$ the quantity of asset $k$ held by the nonbanking sector at time $0$. The market-clearing condition is given by
$Q^{\rm nb}_t + \sum\nolimits_{i=1}^N Q^i_t = Q_{\rm tot},$ $t = 0,1,$ where the vector $Q_{\rm tot}$ of aggregate supply is constant through time.

\cite{capponi2015price} introduce the systemicness matrix  
$
\nv S =  \sum_{i=1}^N\frac{\alpha^i}{\gamma\circ Q_0^{\rm nb}} \kappa^i Q_0^{i\top}  \in \bbr^{K\times K},
$
where $\gamma = (\gamma^1,\ldots,\gamma^K)$, and $\gamma^k$ is the illiquidity characteristic of asset~$k$. In componentwise form, we can write it as
$S^{k\ell} = \sum\nolimits_{i=1}^N \alpha^{ki} \tfrac{\kappa^i Q^{\ell i}_0}{\gamma^k Q^{k,\,\rm nb}_0},$ $k,\ell=1,\ldots,K.$
The systemicness matrix is the primary determinant of asset prices, and captures how a shock to asset $\ell$ propagates to asset $k$ through the banks' deleveraging activities. 

\begin{proposition} [Proposition 2.1. in \cite{capponi2015price}] \label{A:P:dyn}
	The changes in asset prices are given by
	$
	\Delta P = \left(\Id - \nv S \right)^{-1} Z  , $
	assuming that the matrix inverse exists. The invertibility of the matrix $\nv I-\nv S$, with $(\nv I-\nv S)^{-1}=\sum\nolimits_{j=0}^\infty \nv S^j$, is guaranteed by the following assumption. 
\end{proposition}
\begin{assumption}\label{ass:SmatrixInvertibleAndSpectralRadius} \footnote{We discuss Assumption \ref{ass:SmatrixInvertibleAndSpectralRadius}
	 in E-Companion \ref{app:spectralradius}.} We will always assume that the spectral radius of $\nv S$ is smaller than one. 
\end{assumption}


\section{Market Inefficiencies and Systemic Significance}\label{sec:deviations}
In this section, we introduce the measure used to quantify price deviation from fundamentals, and characterize the key quantities that determine market efficiency. Section~\ref{sec:marketcap} derives an explicit expression for the price deviation from fundamentals{, and states the objective function of minimizing market inefficiency in a systemic economy}. Section~\ref{sec:systsign} introduces a {\Red key quantity}, the banks' systemic significance, that quantifies the contribution to price pressures of each bank in the economy.

\subsection{Market Capitalization and Deviation from Efficiency}\label{sec:marketcap}
Banks actively manage their balance sheets in response to shocks, and this imposes a pressure on asset prices, pushing them away from fundamental values. The value of market capitalization at the end of period 1 is given by
$
MC^e:=Q_{\rm tot}^{\top} P_1 =Q_{\rm tot}^{\top} (P_0+\Delta P) = Q_{\rm tot}^{\top} \left(P_0+ \left(\Id - \nv S \right)^{-1}  Z\right),\label{eq:MC}
$
where we recall that $Z$ is the vector of exogenous price shocks (cf. Proposition \eqref{A:P:dyn}). In the absence of leverage-tracking banks, changes in prices are driven solely by changes in fundamentals, i.e., by the exogenous shocks. The resulting market capitalization is denoted by $MC^{f}$, and given by
$ MC^{f}:= Q_{\rm tot}^{\top} (P_0+Z).$
The contribution to realized asset prices arising from the presence of leverage targeting banks, denoted by $D:= MC^e-MC^{f}$, is a measure of market inefficiency.

Under the conditions discussed in Assumption~\ref{ass:SmatrixInvertibleAndSpectralRadius}, we obtain the first order approximation to market capitalization $MC^a:=Q_{\rm tot}^{\top} \left(P_0+  (\nv I+\nv S) Z \right),$
where we consider only the first term in the power series expansion of $(\nv I - \nv S )^{-1}$. The accuracy of this approximation depends on the spectral radius of the matrix $\nv S$, as shown in E-Companion \ref{app:accuracy}.

Recall that the $i$'th column of the matrix $\nv Q$ denotes the initial asset holdings of bank $i$. We then obtain the following first order approximation for the fire-sales externalities:
$D \approx MC^a-MC^f = Q_{\rm tot}^\top (\nv S Z)=(\nv Qv)^\top Z,$
where
\begin{equation}\label{eq:Definitionv}
v:={\rm Diag}(\kappa)\nv \alpha^\top \frac{Q_{\rm tot}}{\gamma\circ Q_0^{\rm nb}}\in\bbr^{N}.
\end{equation}
We refer to the vector $v$ as the \emph{systemic significance} of the $N$ banks, and discuss its properties and economic implications in Section~\ref{sec:systsign}. The vector $\nv{Q}v$, i.e., the initial allocation of assets within the banking sector weighted by their systemic significance\footnote{\Red Systemic significances can be interpreted as weights on the initial allocation of assets. However, they are typically not probability weights, i.e., generally they do not add up to 1.}, is a {\it network multiplier}: it describes how {\Red deviations from fundamentals due to} an initial price shock $Z$ {\Red are} amplified through the network of balance sheet holdings due to the leverage targeting actions of the banks. 

\noindent {\bf Deviation from Efficiency}: To quantify {ex-ante} these inefficiencies, we use the \emph{mean squared deviation} criterion $$\mathbb{E}[D^2]\approx\mathbb{E}[(MC^a-MC^f)^2]=:MSD(\nv Q).$$ We use the average of the squares of price deviation from fundamentals, i.e., we penalize equally positive and negative deviations of market prices from fundamental values.

We use $\mu$ and ${\bf \Sigma}$ to denote, respectively, the expected value and covariance matrix of shocks, i.e., $\mathbb{E}[Z]=\mu=(\mu_1,\ldots,\mu_K)^\top$, and ${\rm Cov}[Z]=\nv \Sigma\in\bbr^{K\times K}$. Henceforth, we impose the following assumption on the distribution of asset shocks.

\begin{assumption}\label{as:assets}
	Price shocks $Z_k$, $k=1,2,\dots, K$, are uncorrelated with variances $\sigma^2=(\sigma_1^2,\ldots,\sigma_K^2)^\top$, i.e., $\nv \Sigma= {\rm Diag}(\sigma^2)$, $\sigma_k>0$, and the initial asset prices are normalized to $P_0^k=1$,  $k=1,\ldots,K$.
\end{assumption}

\begin{remark}
{\Red Admittedly, Assumption~\ref{as:assets} is typically not satisfied for primary assets in a given market. However, one can construct uncorrelated auxiliary assets with normalized prices as portfolios of the original assets. We provide a sketch of this construction in E-Companion \ref{app:dependentshocks}. All results of this paper apply to these auxiliary assets with uncorrelated shocks and normalized prices, if both illiquidity characteristics and demand of the nonbanking sector are specified in terms of these uncorrelated assets.} 
\end{remark}

\subsection{Systemic Significance}\label{sec:systsign}
This section discusses the properties of the systemic significance of a bank, and its dependence on the model primitives.  
We start observing that, for any bank $i$, its systemic significance equals
	$
	v_i  =  \kappa^i \sum\nolimits_{k=1}^{K} \tfrac{\alpha^{ki}}{\gamma^k} \cdot \tfrac{Q_{\rm tot}^k}{ Q_0^{k,{\rm nb}}},
	$
as it easily follows from Eq.~\eqref{eq:Definitionv}. The systemic significance $v_i$ depends on the targeted leverage $\kappa^i$, banks' strategies $\alpha^{ki}$, the vector $\gamma$ of price elasticities, and the initial proportion of assets held by the nonbanking sector $\tfrac{Q_0^{k,{\rm nb}}}{Q_{\rm tot}^k}$. If bank $i$ liquidates a large fraction of an illiquid asset (i.e., $\tfrac{\alpha^{ki}}{\gamma^k}$ is high), then it creates a larger price pressure, especially if it is targeting a high leverage $\kappa^i$. If the nonbanking sector holds a significant fraction of the assets in the economy, then it will be able to better absorb the aggregate demand of the banking sector, and thus the systemic significance of any bank in the system will be reduced.

\begin{remark}
Observe that both $D$ and the square-root of the mean squared deviation $ \sqrt{\mathbb{E}[D^2]} \approx \sqrt{\mathbb{E}[(MC^a-MC^{f})^2]}$ are positively homogeneous, when viewed as functions of the systemic significance vector $v$. In particular, price pressures vanish and the market becomes efficient when $v$ approaches zero; conversely, as $v$ gets large, inefficiencies are higher.
\end{remark}


The mean and variance of market capitalization can be uniquely characterized by the matrix of asset holdings and the systemic significance of the banks in the system, as stated next.

\begin{lemma}\label{lem:MC}
	The mean and variance of price pressures are given by $\E[D] \approx (\nv Q v)^\top\mu$ and $\Var [D] \approx (\nv Q v)^\top\, \nv\Sigma \, (\nv Q v)$, respectively. This results in the following formula for the mean squared deviation 
	\begin{equation}\label{eq:MSE}	
		\mathbb{E}[D^2] \Red \approx  (\nv Q v)^\top \underbrace{(\mu\mu^\top+{\rm Diag}(\sigma^2))}_{=:\nv G\in\bbr^{K\times K}}(\nv Q v)=MSD(\nv Q).
	\end{equation}
\end{lemma}
\vspace{-0.7cm}
The formula above indicates that the mean squared deviation is a function of $\nv Q v$, i.e., the initial allocation of assets within the banking sector weighted by their systemic significance, and  of statistics about fundamental shocks collected in the matrix $\nv G$. 
This is consistent with intuition: larger shocks require banks to trade a higher amount of assets to restore their leverage targets. As a result, through the network multiplier $\nv Q v$, these shocks are amplified more and lead to a higher pressure on prices.

\section{f-Efficient Holdings}\label{sec:efficientholdings}
In this section, we study the impact of banks' portfolio holdings on market efficiency, and develop an explicit construction of holding matrices which minimize the deviation of market capitalization from its fundamental value. We refer to those matrices as {\it f-efficient holdings}.  
Section~\ref{sec:fhold} introduces a two-step procedure, ``aggregate first then allocate'' to {compute} f-efficient holdings. Section~\ref{sec:fulldivers} discusses the relation between f-efficient holdings and the diversification benchmark where each bank fully diversifies its portfolio holdings. 

\subsection{Characterization of f-Efficient Holdings}\label{sec:fhold}

Taking the initial budget of the banks as fixed, we compute the matrix $\nv Q$ of initial holdings which is f-efficient, i.e., which minimizes the mean squared deviation $MSD(\nv Q)$.\footnote{In E-Companion~\ref{app:liquidationstrategies}, we study the sensitivity of market inefficiencies to banks' trading strategies $\alpha^i$, $i=1,\ldots,N$, taking the initial holdings as given. Consistently with intuition, we find that it is f-efficient for a bank to sell solely its most liquid asset. 
}
 
Let $q\in\bbr^K$ be the vector of total initial holdings of the banking sector, and $b\in\bbr^N$ the vector of banks' initial budgets. The set of initially feasible asset allocations within the banking sector is then given by 
$\mathcal{D}=\mathcal{D}(q,b):=\{\nv Q\in\mathbb{R}^{K\times N}\mid \nv Q\nv{1}_N=q\quad\text{and}\quad \nv{1}_K^\top\nv Q=b^\top\},$
where we have assumed that the initial prices of all assets are  normalized (see Assumption~\ref{as:assets}). We then obtain  $\sum\nolimits_{k=1}^K q_k=\sum\nolimits_{i=1}^{N} b_i =: T.$
\begin{example}\label{ex:diversification_q}
Full portfolio diversification corresponds to the holding matrix
\begin{equation}\label{eq:Qdiversified}
\nv Q^{{\scriptscriptstyle \rm diversified}}:=\frac{1}{T} q b^\top=\begin{psmallmatrix}
\tfrac{b_1 q_1}{T} & \cdots & \tfrac{b_N q_1}{T}\\
\vdots & \ddots & \vdots\\
\tfrac{b_1 q_K}{T} & \cdots & \tfrac{b_N q_K}{T}
\end{psmallmatrix}\in\mathcal{D}.
\end{equation}
	In the special case that initial holdings are the same for all assets, i.e., $q_1=q_2=\ldots=q_K$, then full diversification is characterized by the holding matrix  $\nv Q^{{\scriptscriptstyle \rm diversified}}=\begin{psmallmatrix}
	\tfrac{b_1}{K} & \cdots & \tfrac{b_N}{K}\\
	\vdots & \ddots & \vdots\\
	\tfrac{b_1}{K} & \cdots & \tfrac{b_N}{K}
	\end{psmallmatrix}.$
\end{example}

 
\begin{definition} A feasible allocation matrix $\nv Q^*\in\mathcal{D}$ such that \begin{equation}\label{eq:fefficientholdings}
\nv Q^*\in\underset{\nv Q\in\mathcal{D}}{{\argmin}}\quad \underbrace{(\nv Q v)^\top \nv G (\nv Qv)}_{=MSD(\nv Q)}
\end{equation}
is called \emph{f-efficient}.
\end{definition}
\noindent f-efficient holdings are those which minimize, ex-ante, the mean squared deviation of asset prices from fundamentals among all feasible allocations. To exclude trivial cases and make the problem interesting, we make the following assumption (discussed in E-Companion \ref{app:ass:v}) which is typically satisfied in practice. 

\begin{assumption}\label{ass:v}
There exist some $i,j\in\{1,\ldots,N\}$ such that $v_i\not=v_j$, and it holds that $b^\top v\not=0$.
\end{assumption}

We next describe the two-step procedure used to construct f-efficient holding matrices:
\begin{itemize}
	\item {\bf Step 1: Aggregation.} In this step, we recover the network multiplier $y$ that minimizes the mean squared deviation, and which is consistent with the market structure. Concretely, let $\mathcal{D}_y:=\{y\in\bbr^K\mid \boldsymbol{1}_K^\top y=b^\top v\}.$ Such a multiplier is obtained from banks' initial holdings (and budgets), upon weighting them with the systemic significance vector, i.e.,
	\begin{align}\label{eq:yproblem}
	&y^*\in\underset{y\in\mathcal{D}_y}{\argmin}\quad y^\top \nv G y.
	\end{align}
	Because of this weighting, this multiplier accounts for the banks' leverage tracking behavior, their liquidation strategies, and illiquidity characteristics of the assets. We refer to the minimizing vector $y^*$ as the aggregate $v$-weighted holdings.

	\item {\bf Step 2: Allocation.} In this step, we identify an allocation of asset holdings to banks, which is consistent with the vector of aggregate $v$-weighted holdings obtained from the previous step. Specifically, we denote by $\nv Q^*\in\mathcal{D}$ the matrix, consistent with the budget and supply constraints, which distributes 
	the aggregate $v$-weighted holdings to individual banks  according to their systemic significance, i.e., $\nv Q^* v=y^*$. 
\end{itemize}
The decomposition discussed above presents both conceptual and computational advantages. From a conceptual perspective, observe that the matrix $\nv G$ of shock statistics and the vector $v$ of banks' systemic significance are ``sufficient statistics'' for the problem. In the aggregation step, the minimized functional only depends on $\nv G$, while $v$ only enters into the constraint set. The allocation step instead, takes the aggregate holdings computed from the previous step as given, and determines the holding matrix only on the basis of $v$. From a computational point of view, observe that finding the f-efficient holdings requires solving a $K \times N$ dimensional quadratic problem with linear constraints. Using the proposed decomposition, we first solve a $K$ dimensional quadratic problem with one linear constraint, and subsequently solve a simple $K \times N$ dimensional unconstrained linear system.

The following proposition states that the two-step procedure described above identifies f-efficient holdings. 
\begin{proposition}\label{prop:solutionmethod}
	Let $N\ge 2,K\ge 1$. The following statements are equivalent:
	\begin{enumerate}
		\item[(i)] $\nv Q^*\in\mathcal{D}$ is f-efficient.
		\item[(ii)] $y^*=\nv Q^*v$ for some $\nv Q^*\in\mathcal{D}$, and $y^*$ solves the  problem \eqref{eq:yproblem}.
	\end{enumerate}
\end{proposition}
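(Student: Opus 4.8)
The plan is to exploit the fact that the objective $MSD(\nv Q)=(\nv Q v)^\top \nv G(\nv Q v)$ depends on $\nv Q$ only through the network multiplier $y=\nv Q v$. Writing $\Phi\colon \mathbb{R}^{K\times N}\to\mathbb{R}^K$, $\Phi(\nv Q)=\nv Q v$, for this linear map, the minimization of $MSD$ over $\mathcal{D}$ factors as $f\circ\Phi$ with $f(y)=y^\top\nv G y$. The proposition then reduces to identifying the image of the feasible set, and the key claim I would isolate as a lemma is
$$\Phi(\mathcal{D})=\{\,\nv Q v : \nv Q\in\mathcal{D}\,\}=\mathcal{D}_y.$$

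The inclusion $\Phi(\mathcal{D})\subseteq\mathcal{D}_y$ is immediate: for $\nv Q\in\mathcal{D}$ the column-sum constraint $\nv 1_K^\top\nv Q=b^\top$ gives $\nv 1_K^\top(\nv Q v)=(\nv 1_K^\top\nv Q)v=b^\top v$, so $\nv Q v\in\mathcal{D}_y$. The reverse inclusion is the substantive step, and I would prove it by an explicit rank-one perturbation of the diversified matrix. Given $y\in\mathcal{D}_y$, set $\delta:=y-\tfrac{b^\top v}{T}\,q$; since $\nv 1_K^\top y=b^\top v$ and $\nv 1_K^\top q=T$ one has $\nv 1_K^\top\delta=0$. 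By Assumption~\ref{ass:v} there are indices $i\neq j$ (recall $N\ge 2$) with $v_i\neq v_j$; putting $w:=e_i-e_j$ for the standard basis vectors $e_i,e_j\in\mathbb{R}^N$ yields $\nv 1_N^\top w=0$ while $v^\top w=v_i-v_j\neq 0$. Then
$$\nv Q:=\tfrac{1}{T}\,q\,b^\top+\tfrac{1}{v^\top w}\,\delta\,w^\top$$
lies in $\mathcal{D}$ and satisfies $\nv Q v=y$: the perturbation leaves the row sums $\nv Q\nv 1_N$ unchanged (because $w^\top\nv 1_N=0$) and the column sums $\nv 1_K^\top\nv Q$ unchanged (because $\nv 1_K^\top\delta=0$), while contributing $\tfrac{1}{v^\top w}\delta(w^\top v)=\delta$ to $\nv Q v$. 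Crucially, $\mathcal{D}$ imposes no sign constraints on the entries of $\nv Q$, so the construction is legitimate even when $\delta$ and $w$ produce negative holdings.

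With the lemma in hand the equivalence is routine. For (i)$\Rightarrow$(ii), suppose $\nv Q^*$ is f-efficient and set $y^*=\nv Q^* v\in\mathcal{D}_y$. Any $\tilde y\in\mathcal{D}_y$ equals $\Phi(\tilde{\nv Q})$ for some $\tilde{\nv Q}\in\mathcal{D}$ by surjectivity, whence $(y^*)^\top\nv G y^*=MSD(\nv Q^*)\le MSD(\tilde{\nv Q})=\tilde y^\top\nv G\tilde y$, so $y^*$ solves \eqref{eq:yproblem}. For (ii)$\Rightarrow$(i), if $\nv Q^*\in\mathcal{D}$ and $y^*=\nv Q^* v$ minimizes $f$ over $\mathcal{D}_y$, then for every $\nv Q\in\mathcal{D}$ we have $\nv Q v\in\mathcal{D}_y$ and hence $MSD(\nv Q)=(\nv Q v)^\top\nv G(\nv Q v)\ge (y^*)^\top\nv G y^*=MSD(\nv Q^*)$, so $\nv Q^*$ is f-efficient.

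The main obstacle is the surjectivity lemma: the equivalence itself is just the observation that a minimization factoring through a linear map can be solved along its fibers, but this is valid only once one knows the image of $\mathcal{D}$ is \emph{exactly} $\mathcal{D}_y$. Here the non-degeneracy in Assumption~\ref{ass:v} (that $v$ is not constant across banks) is indispensable: if $v=c\,\nv 1_N$ then $\Phi(\nv Q)=c\,q$ is a single point for all $\nv Q\in\mathcal{D}$, the image collapses, and (i) no longer forces $\nv Q^* v$ to equal the aggregate minimizer $y^*$. I would therefore state and prove the lemma first, flagging precisely where $N\ge 2$ and $v_i\neq v_j$ enter.
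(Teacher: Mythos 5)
Your proof is correct, and it follows the same high-level strategy as the paper---factoring the objective through the linear map $\nv Q\mapsto \nv Q v$ and comparing the image of $\mathcal{D}$ with the relaxed constraint set $\mathcal{D}_y$---but it differs in the key supporting lemma. The paper does not establish surjectivity of $\Phi$ onto $\mathcal{D}_y$; it only shows (via Proposition~\ref{prop:existenceallocation}) that the \emph{particular} minimizer $y^*$ computed explicitly in Lemma~\ref{lem:aggregateholdings} is attained by some $\nv Q^p\in\mathcal{D}$, using a rather involved matrix whose first two columns carry correction terms $\sum_{i\ge 3}(v_2-v_i)b_i\,e_1^K$ and whose remaining columns are $b_ie_1^K$. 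That suffices to show $\min_{\widetilde{\mathcal{D}}}=\min_{\mathcal{D}_y}$, but it makes Proposition~\ref{prop:solutionmethod} logically dependent on first solving the aggregate problem. Your rank-one perturbation $\nv Q=\tfrac{1}{T}qb^\top+\tfrac{1}{v^\top w}\,\delta w^\top$ with $w=e_i-e_j$ proves the stronger statement $\Phi(\mathcal{D})=\mathcal{D}_y$ for \emph{every} $y\in\mathcal{D}_y$, needs no knowledge of the explicit form of $y^*$ (nor even that a minimizer exists), and subsumes Proposition~\ref{prop:existenceallocation} as an immediate corollary; it also isolates cleanly where $N\ge 2$ and $v_i\neq v_j$ enter, exactly as in the paper's ``w.l.o.g.\ $v_1\neq v_2$'' step. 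The only thing your route gives up is the paper's explicit particular solution $\nv Q^p$, which the paper reuses later (Theorem~\ref{thm:efficientholdings}~b) and Remark~\ref{rem:FullSolutionSpace}) to parametrize the full solution set---though your construction could serve that purpose equally well.
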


In the rest of the section, we discuss in more detail each step of the procedure above, and highlight the key economic insights. 

\subsubsection{Step 1: Aggregation}
We first show that the minimization  problem \eqref{eq:yproblem} admits a unique solution. 
\begin{lemma}\label{lem:aggregateholdings}
	The unique solution to problem \eqref{eq:yproblem} is given by $y^*:=\frac{b^\top v}{\nv{1}_K^\top z}\cdot z,$ with $z:=\nv G^{-1}\nv{1}_K\in\bbr^K.$
\end{lemma}
The above expression indicates that aggregate $v$-weighted holdings are determined by the vector $v$, capturing the systemic significance of banks, weighted by the budget each bank is endowed with, and by the inverse of the matrix $\nv G$ which captures the size of the exogenous price shocks. Specifically, the aggregate systemically weighted budget vector $b^\top v$ is split into the $K$ available assets through the inverse of $\nv G$.

\begin{example}
	Consider the special case of zero mean shocks, i.e., set $\mu=0$. Then $y^*_k=(b^\top v) \frac{1/\sigma^2_k}{\sum\nolimits_{\ell=1}^K 1/\sigma_\ell^2}$. Hence, the higher the variance of price shocks $\sigma_k^2$, the lower the fraction of asset $k$ in the aggregate $v$-weighted holdings portfolio $y^*$. This is intuitive: an asset creating high price pressure when banks manage their assets to restore their target leverage should, in aggregate, be invested less. 
\end{example}

\subsubsection{Step 2: Allocation}

In the second step, the aggregate $v$-weighted holdings $y^*\in\bbr^K$ are allocated to the individual banks.

\begin{proposition}\label{prop:existenceallocation}
	For every $N\ge 2, K\ge 1$, there exists a matrix $\nv Q^*\in\mathcal{D}$ satisfying $\nv Q^* v=y^*$.
\end{proposition}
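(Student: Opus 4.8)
The plan is to exhibit $\nv Q^*$ explicitly, starting from a known feasible point of $\mathcal{D}$ and adding a rank-one correction that preserves the row- and column-sum constraints while steering the $v$-weighted image onto $y^*$. The natural base point is the diversified matrix $\nv Q^{{\scriptscriptstyle \rm diversified}}=\tfrac1T q b^\top$ of Example~\ref{ex:diversification_q}, which lies in $\mathcal{D}$; a direct computation gives $\nv Q^{{\scriptscriptstyle \rm diversified}} v=\tfrac{b^\top v}{T}\,q$. Setting $d:=y^*-\tfrac{b^\top v}{T}q$, the task reduces to finding a correction matrix $\Delta\in\bbr^{K\times N}$ with $\Delta\nv 1_N=\nv 0$, $\nv 1_K^\top\Delta=\nv 0$, and $\Delta v=d$; then $\nv Q^*:=\nv Q^{{\scriptscriptstyle \rm diversified}}+\Delta$ meets all three requirements.

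First I would record the compatibility identity $\nv 1_K^\top d=0$. This holds because membership $y^*\in\mathcal{D}_y$ forces $\nv 1_K^\top y^*=b^\top v$, while $\nv 1_K^\top q=T$ yields $\nv 1_K^\top(\tfrac{b^\top v}{T}q)=b^\top v$ as well, so the two contributions cancel. Orthogonality of $d$ to $\nv 1_K$ is precisely what makes a correction of the form $\Delta=d\,w^\top$ have vanishing column sums automatically, since $\nv 1_K^\top\Delta=(\nv 1_K^\top d)\,w^\top=\nv 0$.

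Next I would look for $\Delta$ in the rank-one form $\Delta=d\,w^\top$ with $w\in\bbr^N$. The remaining two conditions become the scalar requirements $w^\top\nv 1_N=0$ (giving $\Delta\nv 1_N=d\,(w^\top\nv 1_N)=\nv 0$) and $w^\top v=1$ (giving $\Delta v=d\,(w^\top v)=d$). Because Assumption~\ref{ass:v} guarantees that $v$ is not a scalar multiple of $\nv 1_N$, the vectors $\nv 1_N$ and $v$ are linearly independent in $\bbr^N$ (this is where $N\ge2$ enters), so the map $w\mapsto(w^\top\nv 1_N,\,w^\top v)$ is onto $\bbr^2$ and a suitable $w$ exists. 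Explicitly one may take $w=a\,\nv 1_N+c\,v$ and solve the resulting $2\times2$ system, whose determinant $N\sum_i v_i^2-(\sum_i v_i)^2$ is strictly positive by Cauchy--Schwarz together with the non-constancy of $v$. Substituting back confirms $\nv Q^* v=\tfrac{b^\top v}{T}q+d=y^*$, $\nv Q^*\nv 1_N=q$, and $\nv 1_K^\top\nv Q^*=b^\top$. (If $d=\nv 0$, the diversified matrix already works and no correction is needed.)

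The substantive content — and the only place the hypotheses are used — lies in the two consistency conditions: the orthogonality $\nv 1_K^\top d=0$, which renders the column-sum constraint compatible with the prescribed target $y^*$, and the linear independence of $v$ and $\nv 1_N$, which decouples the $v$-weighted constraint from the row-sum constraint so that it is solvable. Everything else is routine linear algebra; in particular, the optimality of $y^*$ plays no role here, only its membership in $\mathcal{D}_y$ via the identity $\nv 1_K^\top y^*=b^\top v$.
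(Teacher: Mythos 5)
Your proof is correct, and it takes a genuinely different route from the paper's. The paper exhibits an explicit particular solution $\nv Q^p$ by hand: banks $3,\dots,N$ park their entire budgets in asset $1$ (columns $b_i e_1^K$), and the first two columns absorb the residual via a $2\times 2$ elimination in $v_1,v_2$ (using, w.l.o.g., $v_1\neq v_2$ from Assumption~\ref{ass:v}); feasibility and $\nv Q^p v=y^*$ are then verified entry by entry. You instead perturb the diversified matrix by a rank-one correction $d\,w^\top$ with $d=y^*-\tfrac{b^\top v}{T}q$ and $w\perp\nv 1_N$, $w^\top v=1$. Both constructions rest on the same two facts — $\nv 1_K^\top y^*=b^\top v$ (membership in $\mathcal{D}_y$, which makes the column sums compatible) and the non-constancy of $v$ — but your version isolates these hypotheses cleanly, treats all banks symmetrically rather than privileging banks $1,2$ and asset $1$, and makes explicit that only $y^*\in\mathcal{D}_y$, not its optimality, is used. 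What the paper's asymmetric construction buys in exchange is downstream mileage: the closed form $\nv Q^{N=2}$ in Theorem~\ref{thm:efficientholdings}~b) is read off directly from $\nv Q^p$, and $\nv Q^p$ serves as the base point for the full parametrization of the solution set via the null-space matrix $\nv O$ in Remark~\ref{rem:FullSolutionSpace}. Your particular solution would serve equally well as a base point (and for $N=2$ necessarily coincides with theirs by uniqueness), so the difference is one of packaging rather than substance.
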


In the proof of the proposition, we construct a particular solution $\nv Q^*$, and describe the structure of the linear subspace of solutions. The following theorem quantifies the mean squared deviation achieved by a matrix of f-efficient holdings, and addresses the uniqueness of the allocation.
\begin{theorem}\label{thm:efficientholdings}
	Let $N,K \ge 2$.
	\begin{itemize}
		\item[a)] There exists an f-efficient holding matrix $\nv Q^*$ with mean squared deviation
		$MSD(\nv Q^*)=\tfrac{(b^\top v)^2}{\nv{1}_K^\top\nv G^{-1}\nv{1}_K}.$
		\item[b)] The f-efficient holding matrix is unique, if and only if there are exactly $N=2$ banks. In this case, the unique f-efficient holding matrix is given by
		$\nv Q^{N=2}:=\tfrac{1}{v_2-v_1}\begin{psmallmatrix}
		v_2 q-y^* & y^*-v_1 q\\
		\end{psmallmatrix}\in\bbr^{K\times 2}.
		$
	\end{itemize}
\end{theorem}

\subsection{When is Diversification f-Efficient?}\label{sec:fulldivers}

In this subsection, we provide the conditions under which full diversification is f-efficient. 
\begin{theorem}\label{thm:diversification}
Full diversification $\nv Q^{\rm diversified}$ is f-efficient, if and only if $q$ and $z=\nv G^{-1}\nv{1}_K$ are linearly dependent.
\end{theorem}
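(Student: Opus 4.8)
The plan is to characterize $f$-efficiency of $\nv Q^{\rm diversified}$ by combining the two-step procedure of Proposition~\ref{prop:solutionmethod} with the explicit solution of the aggregation step in Lemma~\ref{lem:aggregateholdings}. By Proposition~\ref{prop:solutionmethod}, a feasible matrix $\nv Q\in\mathcal{D}$ is $f$-efficient if and only if its network multiplier $\nv Q v$ coincides with the unique minimizer $y^*$ of the aggregation problem \eqref{eq:yproblem}. Applying this criterion to the specific matrix $\nv Q^{\rm diversified}=\tfrac{1}{T}q b^\top$, I would first compute its network multiplier. Since $(\tfrac{1}{T}q b^\top)v=\tfrac{1}{T}q(b^\top v)=\tfrac{b^\top v}{T}\,q$, the multiplier of full diversification is simply a positive scalar multiple of the total holdings vector $q$.

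Next I would invoke Lemma~\ref{lem:aggregateholdings}, which gives $y^*=\tfrac{b^\top v}{\nv 1_K^\top z}\,z$ with $z=\nv G^{-1}\nv 1_K$; this is a positive scalar multiple of $z$. Therefore the $f$-efficiency condition $\nv Q^{\rm diversified}v=y^*$ reduces to
\begin{equation*}
\frac{b^\top v}{T}\,q=\frac{b^\top v}{\nv 1_K^\top z}\,z.
\end{equation*}
Under Assumption~\ref{ass:v} we have $b^\top v\neq 0$, so this scalar factor cancels and the condition becomes the requirement that $q$ be a scalar multiple of $z$, i.e., that $q$ and $z$ be linearly dependent. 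This establishes the ``only if'' direction and, read in reverse, the ``if'' direction as well: if $q=\lambda z$ for some scalar $\lambda$, then matching the two sides forces $\lambda=\tfrac{T}{\nv 1_K^\top z}$ automatically, since both $q$ and $y^*$ sum their coordinates against the appropriate constraints ($\nv 1_K^\top q=T$ and $\nv 1_K^\top y^*=b^\top v$), so the proportionality constant is pinned down and the identity holds.

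The only subtlety to handle carefully is the role of the nonzero scalar $b^\top v$. Because $b^\top v\neq 0$ by Assumption~\ref{ass:v}, dividing through by it is legitimate, and linear dependence of $q$ and $z$ is genuinely equivalent to proportionality of $\tfrac{b^\top v}{T}q$ and $y^*$. I expect the main (minor) obstacle to be verifying the ``if'' direction cleanly: one must confirm that when $q$ and $z$ are linearly dependent the proportionality constant works out consistently with both the budget constraint $\nv 1_K^\top y^*=b^\top v$ and the supply constraint $\nv 1_K^\top q=T$. This is a one-line consistency check rather than a computation, since $\nv Q^{\rm diversified}\in\mathcal{D}$ already guarantees $\nv Q^{\rm diversified}\nv 1_N=q$ and $\nv 1_K^\top\nv Q^{\rm diversified}=b^\top$, so feasibility is automatic and only the multiplier-matching equation remains.
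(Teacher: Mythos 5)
Your proposal is correct and follows essentially the same route as the paper: both reduce f-efficiency of $\nv Q^{\rm diversified}$ via Proposition~\ref{prop:solutionmethod} to the condition $\nv Q^{\rm diversified}v=y^*$, substitute $\nv Q^{\rm diversified}v=\tfrac{b^\top v}{T}q$ and $y^*=\tfrac{b^\top v}{\nv 1_K^\top z}z$ from Lemma~\ref{lem:aggregateholdings}, and cancel $b^\top v\neq 0$ to obtain proportionality of $q$ and $z$. Your extra remark that the proportionality constant is pinned down by $\nv 1_K^\top q=T$ is a correct (and slightly more explicit) justification of the ``if'' direction that the paper leaves implicit.
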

A direct consequence of the above theorem is that full diversification is f-efficient if the system is completely homogeneous, i.e., all asset shocks have the same mean and variance, and the total initial holdings of the banks are the same. Full diversification is no longer f-efficient if a little amount of heterogeneity is introduced in the system. 
\begin{corollary}\label{cor:diversification}
	If either
	\begin{itemize}
		\item[a)] $q_1=\ldots=q_K$, $\sigma_1^2=\ldots=\sigma_K^2,$ $\mu_1=\ldots=\mu_{K-1}$ and $\mu_K=\mu_1+\varepsilon$ with $\varepsilon\not=-K \mu_1$, or
		\item[b)] $\mu_1=\ldots=\mu_{K}$, $\sigma_1^2=\ldots=\sigma_K^2,$ $q_1=\ldots=q_{K-1}$ and $q_K=q_1+\varepsilon$, or
		\item[c)] $q_1=\ldots=q_K$, $\mu_1=\ldots=\mu_{K}$, $\sigma_1^2=\ldots=\sigma_{K-1}^2$ and $\sigma_K^2=\sigma_1^2+\varepsilon$,
	\end{itemize}
	then $\nv Q^{\rm diversified}$ is f-efficient, if and only if $\varepsilon=0$.
\end{corollary}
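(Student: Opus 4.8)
The plan is to deduce the corollary directly from Theorem~\ref{thm:diversification}, which states that $\nv Q^{\rm diversified}$ is f-efficient if and only if $q$ and $z=\nv G^{-1}\nv 1_K$ are linearly dependent. Since $\nv G=\Diag(\sigma^2)+\mu\mu^\top$ is diagonal plus a rank-one perturbation, I would first obtain a closed form for $z$ via the Sherman--Morrison formula. Writing $s:=\sum_{\ell=1}^K \mu_\ell/\sigma_\ell^2$ and $c:=1+\sum_{\ell=1}^K\mu_\ell^2/\sigma_\ell^2>0$, this gives
\begin{equation*}
z_k=\frac{1}{\sigma_k^2}\Big(1-\frac{\mu_k s}{c}\Big)=\frac{c-\mu_k s}{c\,\sigma_k^2},\qquad k=1,\dots,K .
\end{equation*}
Note that $z\neq\nv 0$ always, because $\nv G$ is positive definite and $\nv 1_K\neq\nv 0$. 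With this formula in hand, the corollary reduces to a scalar check in each case: $q$ and $z$ are linearly dependent precisely when the obstruction to their proportionality vanishes, and I would show this obstruction is zero iff $\varepsilon=0$.

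In cases (b) and (c) one of the two vectors is automatically a multiple of $\nv 1_K$, so linear dependence reduces to showing the other vector is also proportional to $\nv 1_K$. In (c) all $\mu_k$ and all $q_k$ are equal, so $q\propto\nv 1_K$; substituting $\mu_k\equiv\mu$ into the formula produces the cancellation $c-\mu s=1$, whence $z_k=1/(c\,\sigma_k^2)$, and $z\propto\nv 1_K$ iff all $\sigma_k^2$ coincide, i.e.\ iff $\varepsilon=0$. In (b) all $\mu_k$ and all $\sigma_k^2$ are equal, so $z_k$ is independent of $k$ and $z\propto\nv 1_K$; then $q\propto\nv 1_K$ iff all $q_k$ coincide, i.e.\ iff $\varepsilon=0$. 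In case (a) all $q_k$ and all $\sigma_k^2$ are equal, so again $q\propto\nv 1_K$ and one needs $z\propto\nv 1_K$; since the $\sigma_k^2$ are equal, this holds iff $\mu_k s$ is constant in $k$.

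The main obstacle---and the reason for the side condition $\varepsilon\neq -K\mu_1$ in (a)---lies in this last reduction. With $\sigma_k^2\equiv\sigma^2$ one has $s=\sigma^{-2}\sum_\ell\mu_\ell=\sigma^{-2}(K\mu_1+\varepsilon)$, so $s=0$ exactly when $\varepsilon=-K\mu_1$. If $s=0$, then $z_k\equiv 1/\sigma^2$ irrespective of the $\mu_k$, so $z\propto\nv 1_K$ and diversification would be f-efficient even for $\varepsilon\neq0$; the hypothesis $\varepsilon\neq -K\mu_1$ excludes precisely this degeneracy. Under $s\neq 0$, constancy of $\mu_k s$ in $k$ is equivalent to constancy of $\mu_k$, which given $\mu_1=\dots=\mu_{K-1}$ and $\mu_K=\mu_1+\varepsilon$ holds iff $\varepsilon=0$. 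Collecting the three cases and invoking Theorem~\ref{thm:diversification} then completes the argument.
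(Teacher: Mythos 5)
Your proposal is correct and follows essentially the same route as the paper: both invoke Theorem~\ref{thm:diversification}, compute $z=\nv G^{-1}\nv 1_K$ explicitly via the Sherman--Morrison formula, reduce each case to checking whether the non-constant vector among $q$ and $z$ is proportional to $\nv 1_K$, and identify $\sum_j\mu_j=0$ (your $s=0$) as the degeneracy ruled out by $\varepsilon\neq-K\mu_1$ in case a). Your single closed form for $z_k$ used uniformly across the three cases is a slightly tidier packaging of the same computation.
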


The result in the above corollary is consistent with intuition. If assets are fully homogeneous and the total holdings of the banking sector in each asset are the same, there is no reason to prefer one asset over the other. In this case, full diversification minimizes the portfolio liquidation risk and is optimal. However, as soon as assets no longer have identical characteristics, an f-efficient allocation requires to allocate assets to banks in accordance with their systemic significance. 

\section{Comparative Statics and Examples}\label{sec:casestudies}

In this section, we construct case studies to analyze the structure of f-efficient holdings. In the first part of this section, we consider the stylized case $N=K=2$, and study the distance of f-efficient holdings from diversification as a function of key model parameters. For any matrix $\nv Q\in\bbr^{K\times N}$, we measure the distance from full diversification by the Frobenius norm
$d(\nv Q):=\|\nv Q-\nv Q^{\rm diversified}\|_F.$ In the second part of the section, we summarize our findings on the minimal distance of f-efficient holdings from full diversification for the case $N=K=3$. In the third part, we compare the impact of f-efficiency and diversification on the distribution of market capitalization under different economic scenarios. 

\subsection{The Case $N=K=2$}\label{subsec:NK2}

According to Theorem \ref{thm:efficientholdings} b), the f-efficient holding matrix for $N=K=2$ is unique and given by
$
\nv Q^{2\times2}:=\tfrac{1}{v_2-v_1}\begin{psmallmatrix}
v_2 q_1-y_1^* & y_1^*-v_1 q_1\\
v_2 q_2-y_2^* & y_2^*-v_1 q_2\\
\end{psmallmatrix}.
$
We normalize the total supply of assets within the banking sector, and the budgets of banks to $q_1=q_2=b_1=b_2=:x>0$;  following \cite{capponi2015price}, we choose $x=0.08$, where the total supply is normalized to 1 for each asset. Hence, the 
the total size of the banking sector is 8\% of the total size of the system. 
  
\subsubsection{Systemic Significance and Asset Riskiness}\label{subsec:Impactsigma}
We start with an exploratory analysis, where we plot the f-efficient holdings of bank 1 and the distance of f-efficient holdings from diversification $d(\nv Q^{2\times 2})$ as a function of the riskiness of the first asset (Figure \ref{fig:allocationSigma}).
\begin{figure}[htb]
	\centering
	\includegraphics[width=0.5\linewidth]{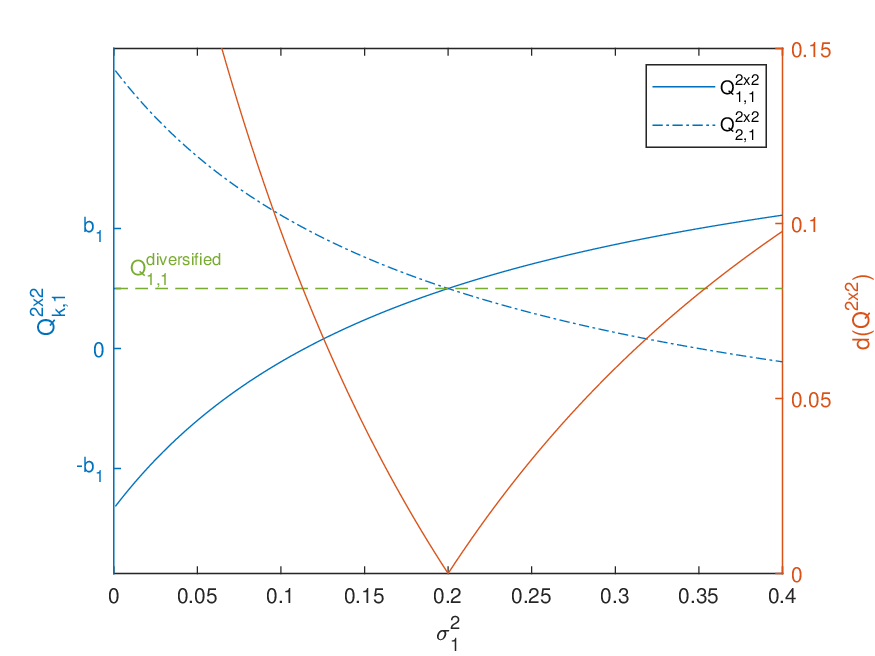}
	\caption{f-efficient holdings of bank 1 ${\nv Q}_{k,1}^{2\times 2}$ for assets $k=1,2$ and distance from diversification, $d(\nv Q^{2\times2})=\|\nv Q^{2\times2}-\nv Q^{\rm diversified}\|_F$, as a function of $\sigma_1^2$. We fix $\sigma_2^2=0.2$, $\mu=(0,0)^\top,$ $v=(0.04,0.07)^\top$, $q=b=(0.08,0.08)^\top$.}
	\label{fig:allocationSigma}
\end{figure}
Consistent with Corollary \ref{cor:diversification}, 
full diversification is f-efficient if and only if $\sigma_1^2=\sigma_2^2=0.2$ (cf. Corollary \ref{cor:diversification}).  Figure \ref{fig:allocationSigma} additionally suggests that a systemically more significant bank would have lower  f-efficient holdings in the riskier asset than a systemically less significant bank: as $\sigma_1$ increases, bank $2$ decreases its holdings in asset 1, while bank $1$ increases its holdings in that asset. We rigorously formalize these observations in the following lemma. 

\begin{lemma}\label{lem:sigma2}
	Let $q_1=q_2=b_1=b_2=x>0$ and $\mu_1=\mu_2$.
	\begin{itemize}
		\item[a)] For $\sigma_1>0$, we have $\tfrac{\partial }{ \partial \sigma_1} d(\nv Q^{2\times 2})\lessgtr 0,\text{ if $\sigma_1^2 \lessgtr \sigma_2^2$}.$ 
		\item[b)] If $v_2>v_1>0$, then $\frac{\partial  \nv Q^{2\times 2}_{11}}{\partial \sigma_1}>0$ for $\sigma_1>0$.
	\end{itemize}
\end{lemma}

\subsubsection{Systemic Significance and Shock Sizes}\label{subsec:Impactmu}
As in the previous subsection, we start with a graphical illustration of the sensitivity of bank 1's f-efficient holdings and their distance from diversification to changes in the expected shock size $\mu$. 
\begin{figure}[htb]
	\centering
	\includegraphics[width=0.5\linewidth]{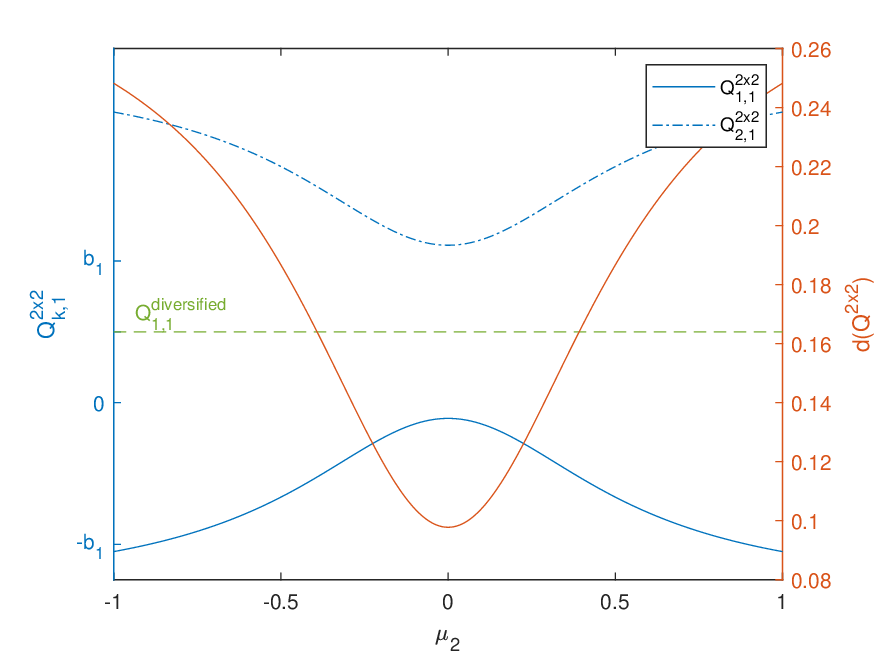}
	\caption{f-efficient holdings of bank 1 ${\nv Q}_{k,1}^{2\times 2}$ for assets $k=1,2$ and distance from diversification $d(\nv Q^{2\times2})=\|\nv Q^{2\times2}-\nv Q^{\rm diversified}\|_F$ as a function of $\mu_2$ for fixed $\sigma^2=(0.1,0.2)^\top$, $\mu_1=0,$ $v=(0.04,0.07)^\top$, $q=b=(0.08,0.08)^\top$.}
	\label{fig:allocationMu}
\end{figure}
Figure \ref{fig:allocationMu} indicates that, as the expected (absolute) size of price shocks for asset 2 increases, the f-efficient holdings of the least systemically significant bank (i.e., bank 1) increase. This can be, again, understood in terms of the banks' systemic significance: a more systemically significant bank, i.e., one that tracks a higher leverage ratio or which trades a larger fraction of illiquid assets, should reduce its holdings of asset 2, because the trading actions in response to the shock impose a higher pressure on the price, and hence large deviation of prices from fundamentals.

The distance from full diversification is minimal when the system achieves the highest possible degree of homogeneity, i.e., $\mu_2=\mu_1$. As the system becomes more heterogeneous, the distance increases. As shown in Corollary \ref{cor:diversification}, the minimal distance converges to zero as the system becomes fully homogeneous, i.e., $\sigma_1=\sigma_2$. We formalize these observations via the following lemma.
\begin{lemma}\label{lem:mu}
	Let $q_1=q_2=b_1=b_2=x>0$, $\sigma_1^2=\sigma_2^2$ and $\mu_1=0$.
	\begin{itemize}
			\item[a)] It holds that $\tfrac{\partial}{\partial\mu_2} d(\nv Q^{2\times 2}) \lesseqqgtr 0$, if $\mu_2 \lesseqqgtr 0$.
	\item[b)] If $v_2>v_1>0$, then $\tfrac{\partial \nv Q^{2\times 2}_{11}}{\partial\mu_2} \gtreqqless 0$, if $\mu_2 \lesseqqgtr 0$.
\end{itemize}
		
\end{lemma}


\subsubsection{The Influence of Systemic Significance on f-Efficient Holdings}\label{subsec:Impactv}
We analyze how heterogeneity in systemic significances impacts the degree of diversification of banks' holdings.
\begin{figure}[htb]
	\centering
	\includegraphics[width=0.5\linewidth]{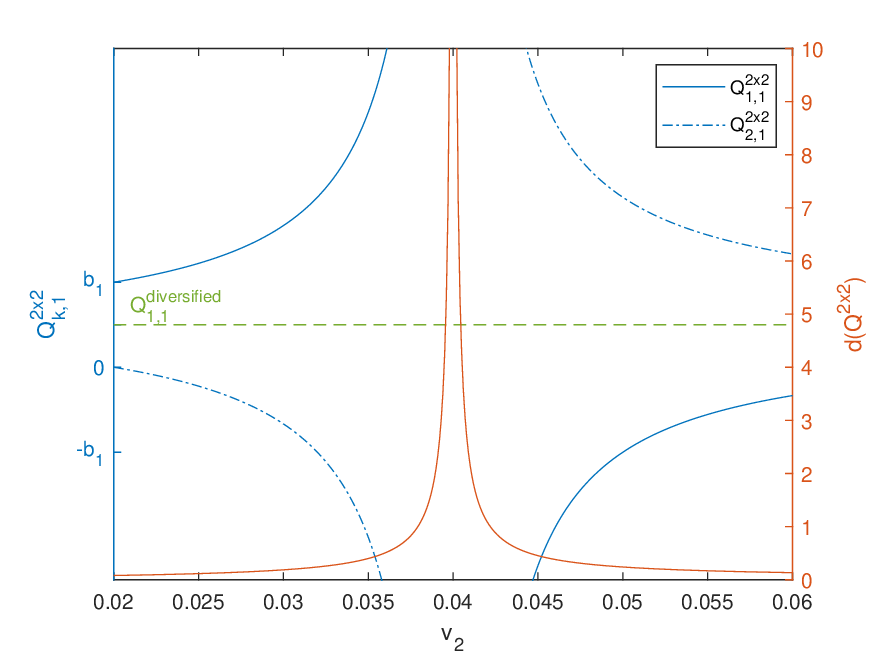}
	\caption{f-efficient holdings of bank 1 ${\nv Q}_{k,1}^{2\times 2}$ for assets $k=1,2$ and distance from diversification $d(\nv Q^{2\times2})=\|\nv Q^{2\times2}-\nv Q^{\rm diversified}\|_F$ as a function of systemic significance $v_2$ for fixed $\sigma^2=(0.1,0.2)^\top$, $\mu=(0,0)^\top,$ $v_1=0.04$, $q=b=(0.08,0.08)^\top$.}
	\label{fig:allocationV}
\end{figure}
Figure~\ref{fig:allocationV} highlights that, as the two banks become closer in terms of systemic significance, the distance of f-efficient holdings from diversification increases. This highlights the fundamental role of systemic significance on banks' f-efficient holdings: if two banks are similar in terms of systemic significance (for instance because they adopt similar trading strategies), then it is beneficial to sacrifice diversification benefits to reduce portfolio overlapping, and thus price pressures. These intuitions can be formalized via the following lemma. 

\begin{lemma}\label{lem:v}
	Let $q_1=q_2=b_1=b_2=x>0$, $v_1>0$ and $|z_1|\not=|z_2|$\footnote{A sufficient condition for $|z_1|\not= |z_2|$ is that $\mu_1=\mu_2$ and $\sigma_1^2\not=\sigma_2^2$.}.
	For $0<v_2\not= v_1$, it holds that $\tfrac{\partial}{\partial v_2} d(\nv Q^{2\times 2}) \gtrless 0$, if $v_2 \lessgtr v_1$.
\end{lemma}

\subsection{The Case $N=K=3$}\label{subsec:NK3}
Having established the results for a system consisting of two banks and two assets, we analyze numerically how the findings would change for a larger economy. If the number of banks is $N>2$, f-efficient holdings are no longer unique (see Theorem \ref{thm:efficientholdings} b)). In this case, we consider the f-efficient holdings whose Frobenius distance from diversification is minimal. 

Noticeably, the qualitative findings in the case $N=K=3$, shown in E-Companion \ref{app:NK3}, remain similar to the setting $N=K=2$. The f-efficient holdings get farther away from a full diversification strategy if heterogeneity in banks' systemic significance decreases. The intuition behind the result remains unchanged, i.e., in a system where banks are systemically very close, a full diversification strategy for each bank may lead to larger price pressures because all banks rebalance their portfolios in a similar fashion to meet their leverage targets. 

In E-Companion~\ref{app:nonuniqueness}, we analyze f-efficient holdings in an economy with more than two banks. We find that any f-efficient allocation prescribes the most systemically significant bank to have higher holdings of the less risky asset, and the least systemically significant bank to have higher holdings of the more risky asset. We also show that for a system of three banks, given a fixed f-efficient allocation, any other f-efficient allocation is obtained by shifting the holdings within each bank based on the difference in systemic significances of the other two banks. 
 
{\Red We conclude by discussing the implications of non-uniqueness. The set of f-efficient holdings includes allocation profiles which are (i) closer to full bank diversity, (ii) closer to full diversification of individual banks' portfolios, or (iii) closer to the observed banks' portfolios. Such multiplicity of f-efficient holdings gives a regulator flexibility to achieve the desired outcome of low market inefficiency. For instance, the regulator may find it cheaper and quicker to align banks' incentives with the socially desirable outcome, if he steers the system towards the matrix of f-efficient holdings which is closer to that currently observed in the banking system. Such a holding matrix may be estimated from the forms FR Y-9C, which financial institutions file with the Federal Reserve every quarter. These forms provide consolidated information on each bank's exposures, and are available through the Board's Freedom of Information Office.\footnote{\Red \cite{duarte2018fire} build their empirical model on FR Y-9C {balance sheet} data.} 
}

\subsection{Systemic Significance, Market Scenarios and Asset Holdings} \label{sec:assetholdings}

In this section, we analyze how the distribution of banks' holdings depends on banks' systemic significance, heterogeneity in the distributions of initial shocks, and illiquidity characteristics of the assets. We also validate the accuracy of the first order approximation of market capitalization used throughout the paper. 
Our results indicate that the mean squared deviation of the {\it actual} market capitalization from its fundamental value is low if the matrix of bank holdings coincides with the f-efficient holdings. This indicates that the solution to the (approximate) optimization problem, i.e., where the first order approximation of the systemicness matrix is used, yields a low value for the actual objective function where the exact expression of the systemicness matrix is used.

In the analysis below, we consider three asset classes, each consisting of assets with identical characteristics, for a total of ten assets. As a result, we demonstrate that the methodology proposed in this paper scales well to economies larger than those considered (analytically) in earlier sections.

\subsubsection{Market Setting}\label{subsec:marketsetting}
We consider a financial market consisting of $N=2$ banks.  
We choose $K=10$, normalize the total supply of each asset to 1, and set the holdings of banks in each asset to $0.08$. All parameters are consistent with \cite{capponi2015price}. The two banks are assumed to have the same budget ($b_1=b_2=0.8$), and the vector of leverage targets is $\kappa=(9,10)^\top$.

Banks are assumed to follow a proportional liquidation strategy, i.e., $\alpha^{ki}=1/10$ for all assets $k=1,\ldots,10$, and both banks $i=1,2$. The assets belong to three different groups: Assets 1 and 2 belong to group 1, assets 3 through 8 to group 2, and assets 9 and 10 to group 3. Within each of the three asset groups, the illiquidity characteristics of the assets, and the expectation and variance of asset price shocks are equal. 

We consider three scenarios, each characterized by a certain value of the shock variance and liquidity of the assets. We refer to the three scenarios as liquidity, intermediate crisis, and high risk high illiquidity scenarios. Across all scenarios, we set $\mu=(0.1,0.1,0.2,0.2,0.2,0.2,0.2,0.2,0.3,0.3)^\top$. 

The numerical values of the asset illiquidity characteristics should be thought as normalized to the corresponding characteristics of a reference asset, and are broadly consistent with the estimates reported in Table 4 of \cite{duarte2018fire}.\footnote{Their estimates are based on the Net Stable Funding Ratio of the Basel III regulatory framework. Their illiquidity parameter is the reciprocal of ours, i.e., in their setting a larger value  corresponds to a higher illiquidity of the asset. They take US Treasuries as the reference asset, i.e., the price impact of U.S. Treasuries is normalized to 1.} Consistent with empirical evidence, the higher the asset illiquidity (i.e., the lower $\gamma$), the larger the variance of the exogenous asset price shocks, capturing the fact that more illiquid securities have a higher volatility than liquid ones. 

\begin{itemize}
	\item[(L)] \textbf{Liquidity Scenario:}\\
	Banks invest in liquid assets, i.e, assets with high price elasticities, or equivalently, low illiquidity characteristics. The first asset class has the highest price elasticity and the lowest shock variance, and the third asset class has the lowest price elasticity and the highest variance. The second asset class has an intermediate value for those two quantities. Specifically, we choose
	$\gamma^L=(9,9,8,8,8,8,8,8,7,7)^\top,$ $\sigma^L=(0.1,0.1,0.2,0.2,0.2,0.2,0.2,0.2,0.3,0.3)^\top.$

	\item[(I)] \textbf{Intermediate Crisis Scenario:}
	In contrast to scenario (L), the third asset class is significantly more illiquid (i.e., lower price elasticity) and has higher shock variance. This situation is typical of the beginning of a crisis period, where one asset may experience a severe shock and then become hard to sell quickly due to the lack of outside investors (the nonbanking sector in our model) willing to purchase the asset.\footnote{For instance, the volume of agency mortgage backed securities, typically highly liquid assets, declined substantially from 2008 to 2014, which is an indicator of worsening liquidity.} The parameters corresponding to the other asset classes are not altered. This scenario is specified by
	$\gamma^{L}=(9,9,8,8,8,8,8,8,1,1)^\top,$ $\sigma^{L}=(0.1,0.1,0.2,0.2,0.2,0.2,0.2,0.2,1,1)^\top.$

	\item[(H)] \textbf{High Risk High Illiquidity Scenario:}
	Banks invest in assets with high illiquidity and volatility. This captures, for instance, a situation where banks invest in securities such as non-agency based mortgage, municipal bonds, or commercial and industry loans. All asset classes are thus characterized by a lower price elasticity, $\gamma^{H}=(3,3,2,2,2,2,2,2,1,1)^\top,$ and higher shock variance, $\sigma^{H}=(0.9,0.9,1.1,1.1,1.1,1.1,1.1,1.1,1.2,1.2)^\top$, compared to the previous two scenarios.
	 
\end{itemize}
In E-Companion \ref{app:fefficientmarketholdings}, we explicitly compute banks' systemic significances and f-efficient holdings for each of the above defined scenarios. Overall, we find that the higher the systemic significance of a bank, the lower its holdings of the safer asset relative to the riskier asset. In addition, in scenario (H), there is little heterogeneity in the riskiness of the assets, and high heterogeneity in banks' systemic significance. As a result, the f-efficient holdings in this scenario are more evenly distributed, i.e., closer to full diversification (see also the values of the distances given in Table \ref{tab:statistics} below).

\subsubsection{Diversification and  f-Efficiency}\label{sec:approxanalysis}
In this section, we compute the {\it exact} market capitalization, i.e., $MC^e=Q_{\rm tot}^\top(P_0+(\nv I-\nv S)^{-1}Z),$ both under f-efficient $(\nv{Q}^{*,\cdot})$ and fully diversified ($\nv Q^{\rm diversified}$) holding matrices. We suppose that the vector of shocks $Z$ follows a multivariate normal distribution with mean vector $\mu$, which is the same across all scenarios, and covariance matrix ${\rm Diag}(\sigma^2)$, where $\sigma$ depends on the considered scenario. In each scenario, we draw $100,000$ independent samples of the shock vector $Z$. 

We report the relevant statistics in Table \ref{tab:statistics}. Figure \ref{fig:AllScenarios} also provides a comparison of the probability density functions of market inefficiency and the corresponding box plots across all three scenarios. Consistent with intuition, the variance of market capitalization is the smallest in the liquidity scenario, and the highest in the high risk high illiquidity scenario. We find that diversification results in a higher variance than f-efficiency across all market scenarios. While this difference is not very significant in the liquidity scenario, it becomes considerable in the high risk high illiquidity scenario, especially if (as in the intermediate crisis scenario) the assets are shocked heterogeneously.

Observe, from the first row of the table, that the distance between f-efficient and fully diversified holding matrices is the lowest in the scenario (H). Nevertheless, the mean-squared deviation criterion and variance of the exact market capitalization is the highest (in absolute terms) in such a scenario. Taken together, these two observations imply that if banks invest in high risk high illiquid assets, it suffices to only slightly distort the holdings from f-efficiency to induce large increases in the variance and mean squared deviation of market capitalization from its fundamental value. This is because holdings that are not f-efficient, such as fully diversified holdings, create a price pressure which becomes increasingly larger as we move towards  balance sheets with highly volatile and illiquid assets.

\begin{table}[h!] 
	\begin{center}
		\centering
		\begin{tabular}{llll|ll}\toprule
			& L & I & H & I/L & H/L\\\midrule
			$d(\nv Q^{*,\cdot})$ & 8.61   & 8.74 &    1.08 &   1.02 & 0.13\\
			$\mathbb{E}(MC^{e})$ for $\nv Q^{*,\cdot}$ & 	12.07 &  12.20  & 13.45 & 1.01 & 1.11\\
			$\mathbb{E}(MC^{e})$ for $\nv Q^{\rm diversified}$ &  12.23 &  12.65 &  13.74 & 1.03 & 1.12 \\
			${\rm Var}(MC^{e})$ for $\nv Q^{*,\cdot}$ & 	0.44 &    2.23 &  37.20 & 5.07 & 84.55\\
			${\rm Var}(MC^{e})$ for $\nv Q^{\rm diversified}$ & 0.55  &  3.95 &  41.39 & 7.18 & 75.25\\
			$\mathbb{E}[(MC^{e}-MC^{f})^2]$ for $\nv Q^{*,\cdot}$ & 	0.02  &  0.10  &  9.66 & 5.00 & 483.00\\
			$\mathbb{E}[(MC^{e}-MC^{f})^2]$ for $\nv Q^{\rm diversified}$ & 	0.06  &  0.66 &  12.14 & 11.00 & 202.33\\\bottomrule
		\end{tabular}	
		\caption{Statistics from the Monte Carlo simulation used to estimate the exact market capitalization. The Frobenius distance $d(\nv Q^{*,\cdot})$ measures the difference between diversified and f-efficient holdings. The last two columns present the numbers relative to the liquidity scenario.}	
		\label{tab:statistics}
	\end{center}
\end{table}

\begin{figure}
	\vspace{-0.1cm}
	\centering
	\subfloat[]{\includegraphics[width=0.5\textwidth]{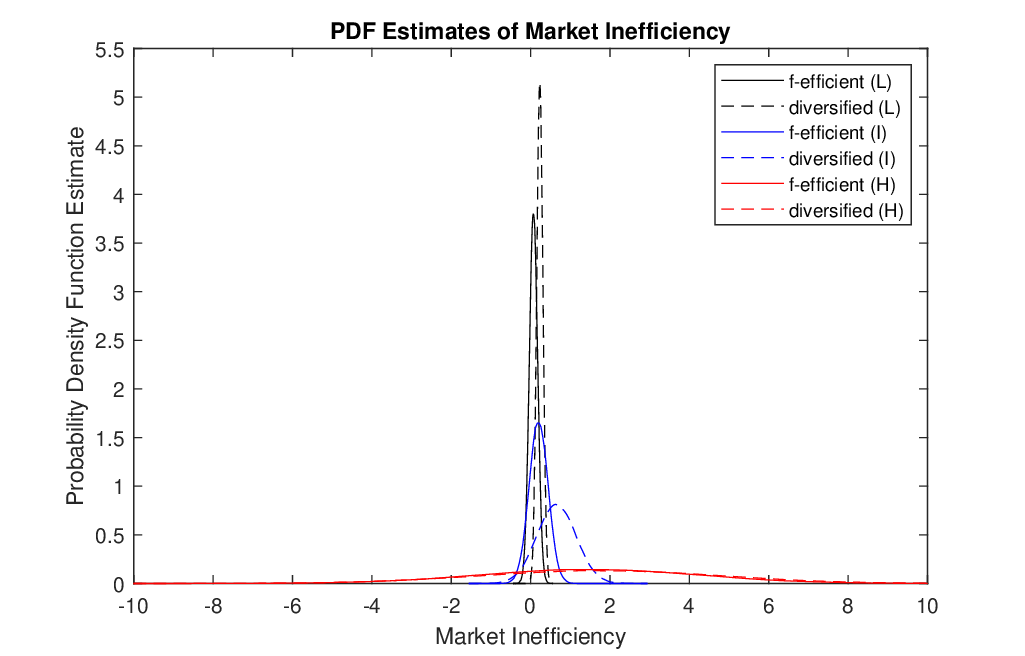}}
	\subfloat[]{\includegraphics[width=0.5\textwidth]{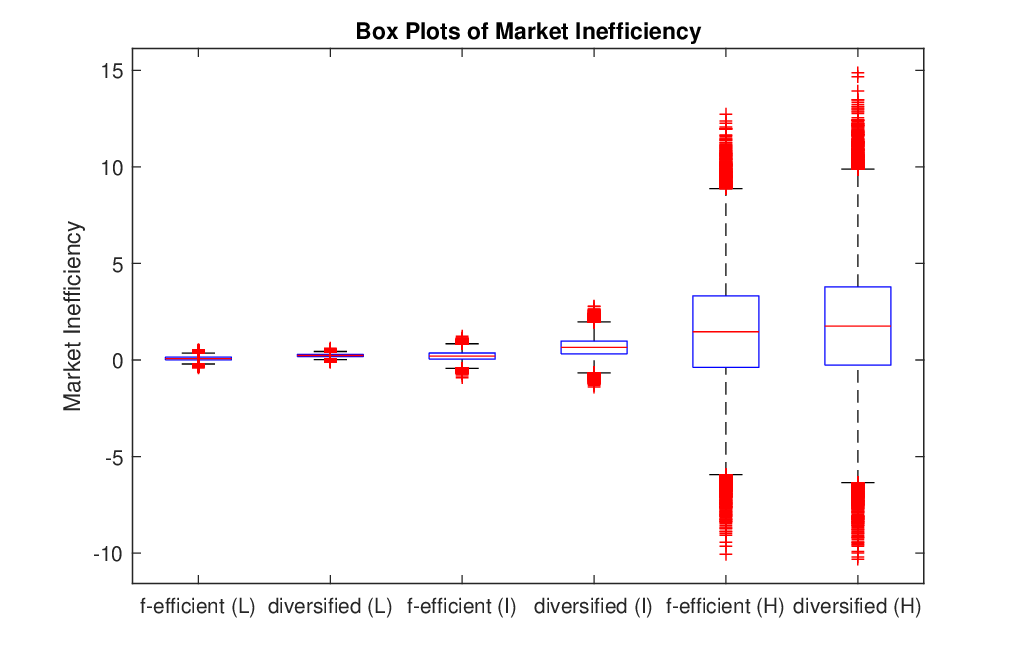}}
		\vspace{-0.1cm}
	\caption{Probability density function estimates (a) and box plots (b) of market inefficiency $MC^e-MC^f$ for f-efficient $(\nv{Q}^{*,\cdot})$ and fully diversified ($\nv Q^{\rm diversified}$) holdings, generated from 100,000 samples of $Z\sim\mathcal{N}_{10}(\mu,{\rm Diag}((\sigma^{\cdot})^2))$, in the three considered scenarios: (L) liquidity, (I) intermediate crisis, and (H) high risk high illiquidity.}
	\label{fig:AllScenarios}
\end{figure}

\section{Empirical Insights and Policy Implications}\label{sec:empirical}

We explain how the construction of our metrics is supported by existing empirical literature and highlight policy implications of our model. 

\noindent\textbf{Systemic significance.} The systemic significance of a bank increases with (i) the targeted leverage, (ii) the illiquidity characteristics of assets held, and (iii) the weights of its trading strategy on illiquid assets.  \cite{duarte2018fire} calibrate a measure of bank systemicness, which is high if the bank has a
high asset value and leverage, holds a large amount of illiquid assets, and is subject to a large initial shock. Their measure and ours share similarities, as they are both based on the same bank-specific characteristics (e.g., leverage and size), and bank-asset characteristics (illiquid portfolio overlapping). However, \cite{duarte2018fire} consider an ex-post scenario where a negative shock has already hit the system, and forced banks to deleverage. By contrast, we develop an ex-ante analysis where the distribution of future shocks -- but not their realization -- is known. Nevertheless, the empirical analysis of \cite{duarte2018fire} supports our claims that these three key dimensions should play an important role in measuring systemic risk. They find that the excessive growth of the banking sector is one of the main determinants for the increase of aggregate vulnerability from fire-sale spillovers. Banks' leverage and illiquidity concentration increased in the years leading to the crisis and declined again after 2008. They also find that Citigroup and Bank of America have the highest systemicness. Both of these banks increased leverage pre-crisis, and reduced both leverage and concentration on illiquid assets after the crisis. Follow-up empirical studies can calibrate the systemic significance measure identified by our study, and compare it against the measure of \cite{duarte2018fire} both in the cross section and in the time series (i.e., for different historical episodes). 

\noindent\textbf{Diversity.}  Our analysis indicates that a banking system in which each bank diversifies its  portfolio reduces asset price distortion only if there is enough systemic heterogeneity across banks. While banks are likely to target different leverages, they may adopt similar trading strategies when they adjust their balance sheets to restore their leverage targets. For instance, they may follow a pecking order of liquidation during fire-sale periods, where each bank tries to sell the most liquid assets first to minimizing the price impact of fire sales. Differently, if banks engage in asset fire purchases in response to a positive shock, they might initially purchase assets with high positive price pressure to increase returns. Alternative liquidation rules are also possible. For instance, regulatory requirements on risk-weighted assets such as liquidity coverage ratios provide incentives to all banks to sell assets with high risk-weights first (see, for instance, \cite{cifuentes2005liquidity}), and these assets are typically more illiquid. Regardless of the specific liquidation strategy, it is highly likely that banks adopt very similar trading strategies, which in turn reduces the level of systemic heterogeneity in the system. Under these circumstances, portfolio diversification within each bank becomes less desirable than a diverse system with little overlapping on assets with high illiquidity characteristics. 
This indicates that the disclosure of detailed information on changes in banks' portfolios would be beneficial for asset price stability. A regulator may then design policies which incentivise banks to hold portfolios where diversification benefits are sacrificed to achieve a more diverse system, even more if avalable information indicates that banks adopt very similar trading strategies.

\noindent\textbf{Adaptive capital buffers.} Large shocks and systemic significances imply high deviations of prices from fundamentals. Our model demonstrates that shocks hitting one bank spill over to other banks through the network of asset prices and harm price stability. The more systemically significant a bank is, the larger these spillovers are likely to be. Our analysis thus indicates that the government should mandate adaptive capital requirements. More systemically significant banks should hold larger capital buffers to avoid excessive price distortion. When shocks occur, governments can ease capital buffers to allow banks providing additional liquidity and intermediation services.  Another policy commitment device which can be used to stabilize prices in asset markets are large-scale asset purchase programs. Those were used extensively during the 2008 financial crisis especially in regards to Treasury securities. Based on our findings, the price stabilizing effect of these asset purchases would be stronger if they are executed when large shocks are expected to hit systemically significant banks.

{\Red
\noindent\textbf{f-efficiency and individually optimal holdings.} A social planner would optimally choose f-efficient holdings to minimize market inefficiency. If one were to account for banks' strategic behavior, the resulting equilibrium profile of banks' holdings may be far from f-efficient. Over longer time horizons, each bank would optimally choose its holdings to maximize a desired objective while accounting for regulatory constraints. For example, each bank may maximize its return on equity over a certain time period, subject to financial constraints (e.g., \cite{Froot}) or risk constraints.\footnote{\Red \cite{adrian2010liquidity} show that procyclical leverage follows directly from the counter-cyclical nature of unit value at risk.}

Denote by $\nv Q^{\rm cur}$ the holdings at the beginning of the time horizon that is relevant for regulatory purposes, and by
$\nv Q^{{\rm end}}$ the holdings at the end of this period when banks have adjusted their portfolios based on their objectives. The MSD criterion defined in \eqref{eq:MSE} is used to measure the influence of banks' choices on the deviations from fundamentals. Recall that the MSD depends on banks' holdings, the first two moments of exogenous price shocks, and the vector of systemic significances. In the following discussion, we assume that those quantities do not change significantly over the regulation period.

Set $\Delta Q^i=Q^{{\rm end},i}-Q^{{\rm cur},i}$. Using a first order Taylor series expansion, the impact of bank $i$ on the MSD can be approximated by the size $h= \|\Delta Q^i\|$ of the change in its holdings times the derivative of the MSD at $\nv Q^{\rm cur}$ in the direction of the unit vector $r = \tfrac{\Delta Q^i}{\|\Delta Q^i\|}  \in\bbr^K$, i.e., 
$D_r^iMSD(\nv Q^{\rm cur})=\left(\frac{\partial MSD}{\partial Q^i}(\nv Q^{{\rm cur}}) \right)^\top r,
$
where $\frac{\partial MSD}{\partial Q^i}$ denotes the gradient of the MSD with respect to bank $i$'s holdings $Q^i$. An explicit calculation yields
$D_r^iMSD(\nv Q^{\rm cur})=2\cdot v_i\cdot \sum_{j=1}^{N}v_j\cdot  ( Q^{{\rm cur},j})^\top {\nv G} r.$
The directional derivative depends on systemic significances, current holdings of all banks, and on the first two moments of asset shocks. The higher the systemic significance of a bank, the higher the directional derivative and, thus, the stronger the impact of its portfolio choice on the MSD. Directional derivatives capture how individual banks' actions influence the MSD and can, thus, be used as a basis for regulatory schemes. We next discuss two approaches to steer banks' holdings in the direction of f-efficiency: directional taxes and directional certificates.

{\bf Directional taxes.} For some multiplier $c_{\rm Tax}>1$, we define
$T_i(r,h)=c_{\rm Tax}\cdot D_{r}^i MSD(\nv Q^{\rm cur})\cdot h=V^i_{\rm Tax} r\cdot h$, where the vector $V^i_{\rm Tax}=c_{\rm Tax}\cdot (2\cdot v_i\cdot \sum_{j=1}^{N}v_j\cdot  ( Q^{{\rm cur},j})^\top {\nv G})$
determines how subsidies (negative sign) and taxes (positive sign) are computed from the direction $r$ and the size $h$ of the changes in the portfolio of bank $i$. The subsidy or tax $T_i(r,h)$ is proportional to the size $h$ of the changes of bank $i$'s portfolio and the derivative of the MSD in the direction $r$ of these changes. The scheme provides incentives for banks to move towards f-efficiency. Variations of this tax scheme include convex tax and subsidy schemes (e.g., progressive taxes), where banks are penalized or rewarded increasingly more, the more their actions contribute to increasing, respectively reducing, the deviations of prices from fundamentals.
  

{\bf Directional certificates.} In contrast to directional taxes where the \lq price tag\rq{} $c_{\rm Tax}$ for changing the MSD is exogenously fixed, directional certificates (that are themselves traded) endogenously assign a price to losses and gains of efficiency.\footnote{\Red The setup is analogous to emission trading schemes. A central authority sells a limited number of permits that allow a discharge of a specific quantity of a specific pollutant over a prespecified time period. Polluters are required to hold permits in an amount equal to their emissions. Polluters that want to increase their emissions must buy permits from others willing to sell them.} The regulator limits the supply of directional certificates (which could even be a negative total supply), and any bank $i$ needs to hold a number of certificates proportional to the size $h$ of the changes of its portfolio and the derivative of the MSD in direction $r$ of the changes. In particular, if changes in holdings and the corresponding directional derivatives are large -- indicating a significant increase of the MSD due to its actions -- a bank must hold many certificates. Conversely, if the bank's trades in assets decrease the MSD, the bank can go short on directional certificates. The key difference from directional taxes is that certificates are initially distributed to banks, and then traded between banks so that their unit price is determined in equilibrium.\footnote{\Red Banks which have stronger needs to trade in a direction that increases inefficiencies would be purchasing certificates from those whose trading motives are lower. Hence, certificates allow banks to internalize the externalities they are imposing on the system.} Certificates would thus cap the maximum inefficiency attainable in the market. 


}

\section{Conclusion}\label{sec:conclusion}

We developed a model to examine the ex-ante asset holdings which minimize market inefficiency in a systemic multi-asset economy. Price pressure arises in our model from the exogenous trading actions of banks which manage their portfolios to target specific leverage ratios. In the model, we quantify efficiency in terms of the mean squared deviation of fundamental versus market value capitalization. We find that inefficiencies are low if banks are not systemically significant, but become substantial if the overall systemic significance is high and banks are not sufficiently heterogeneous in systemic significance. We develop a procedure which constructs f-efficient holdings, and show that these depend on two key drivers, namely the banks' systemic significance and the first two moments of exogenous asset shocks. Our analysis reveals that increasing heterogeneity in banks' systemic significance moves f-efficient holdings closer to full diversification, while heterogeneity in the distribution (expectation and variance) of exogenous asset value shocks moves f-efficient holdings away from diversification. In balance sheet scenarios characterized by high risk and illiquidity, deviating from f-efficient holdings would result in high inefficiencies and large variance of market capitalization.


\bibliographystyle{plainnat}
\bibliography{DiversityBib}

\begin{thebibliography}{49}
\providecommand{\natexlab}[1]{#1}
\providecommand{\url}[1]{\texttt{#1}}
\expandafter\ifx\csname urlstyle\endcsname\relax
  \providecommand{\doi}[1]{doi: #1}\else
  \providecommand{\doi}{doi: \begingroup \urlstyle{rm}\Url}\fi

\bibitem[Acemoglu et~al.(2015)Acemoglu, Ozdaglar, and
  Tahbaz-Salehi]{acemoglu2015systemic}
D.~Acemoglu, A.~Ozdaglar, and A.~Tahbaz-Salehi.
\newblock Systemic risk and stability in financial networks.
\newblock \emph{American Economic Review}, 105\penalty0 (2):\penalty0 564--608,
  2015.

\bibitem[Adrian and Shin(2010)]{adrian2010liquidity}
T.~Adrian and H.~S. Shin.
\newblock Liquidity and leverage.
\newblock \emph{Journal of Financial Intermediation}, 19\penalty0 (3):\penalty0
  418--437, 2010.

\bibitem[Adrian and Shin(2014)]{adrianShinRFS}
T.~Adrian and {H.S.} Shin.
\newblock Procyclical leverage and value-at-risk.
\newblock \emph{The Review of Financial Studies}, 27\penalty0 (2):\penalty0
  373--403, 2014.

\bibitem[Amini et~al.(2013)Amini, Filipovi\'{c}, and Minca]{AFM13}
H.~Amini, D.~Filipovi\'{c}, and A.~Minca.
\newblock Systemic risk with central counterparty clearing.
\newblock Swiss {Finance} {Institute} {Research} {Paper} {No.} 13-34, Swiss
  Finance Institute, 2013.

\bibitem[Awiszus(2020)]{awiszus2020}
K.~Awiszus.
\newblock \emph{Actuarial and Financial Risk Management in Networks}.
\newblock Dissertation, Gottfried Wilhelm Leibniz Universit\"at Hannover, 2020.

\bibitem[Beale et~al.(2011)Beale, Rand, Battey, Croxson, May, and
  Nowak]{beale2011individual}
N.~Beale, D.~G. Rand, H.~Battey, K.~Croxson, R.~M. May, and M.~A. Nowak.
\newblock Individual versus systemic risk and the regulator's dilemma.
\newblock \emph{Proceedings of the National Academy of Sciences}, 108\penalty0
  (31):\penalty0 12647--12652, 2011.

\bibitem[Bernstein(2005)]{bernstein2005matrix}
D.~S. Bernstein.
\newblock \emph{Matrix Mathematics: Theory, Facts, and Formulas}.
\newblock Princeton University Press, 2005.

\bibitem[Biagini et~al.(2019)Biagini, Fouque, Frittelli, and
  Meyer-Brandis]{bffm19}
F.~Biagini, J.-P. Fouque, M.~Frittelli, and T.~Meyer-Brandis.
\newblock A unified approach to systemic risk measures via acceptance sets.
\newblock \emph{Mathematical Finance}, 29\penalty0 (1):\penalty0 329--367,
  2019.

\bibitem[Brunnermeier and Pedersen(2008)]{brunnermeier2008market}
M.~Brunnermeier and L.~Pedersen.
\newblock Market liquidity and funding liquidity.
\newblock \emph{The Review of Financial Studies}, 22\penalty0 (6):\penalty0
  2201--2238, 2008.

\bibitem[Brunnermeier et~al.(2018)Brunnermeier, Sockin, and Xiong]{BrunnChina}
M.~Brunnermeier, M.~Sockin, and W.~Xiong.
\newblock China's model of managing its financial system.
\newblock \emph{Working Paper, Princeton University}, 2018.

\bibitem[Caccioli et~al.(2014)Caccioli, Shrestha, Moore, and
  Farmer]{caccioli2014stability}
F.~Caccioli, M.~Shrestha, C.~Moore, and J.~D. Farmer.
\newblock Stability analysis of financial contagion due to overlapping
  portfolios.
\newblock \emph{Journal of Banking \& Finance}, 46:\penalty0 233--245, 2014.

\bibitem[Capponi and Larsson(2015)]{capponi2015price}
A.~Capponi and M.~Larsson.
\newblock Price contagion through balance sheet linkages.
\newblock \emph{The Review of Asset Pricing Studies}, 5\penalty0 (2):\penalty0
  227--253, 2015.

\bibitem[Capponi et~al.(2016)Capponi, Chen, and Yao]{CapChenYao}
A.~Capponi, P.~C. Chen, and D.~Yao.
\newblock Liability concentration and systemic losses in financial networks.
\newblock \emph{Operations Research}, 64\penalty0 (5):\penalty0 1121--1134,
  2016.

\bibitem[Chen et~al.(2013)Chen, Iyengar, and Moallemi]{CIM13}
C.~Chen, G.~Iyengar, and C.~C. Moallemi.
\newblock An axiomatic approach to systemic risk.
\newblock \emph{Management Science}, 59\penalty0 (6):\penalty0 1373--1388,
  2013.

\bibitem[Chen et~al.(2016)Chen, Liu, and Yao]{CLY14}
N.~Chen, X.~Liu, and D.~D. Yao.
\newblock An optimization view of financial systemic risk modeling: Network
  effect and market liquidity effect.
\newblock \emph{Operations Research}, 64\penalty0 (5):\penalty0 1089--1108,
  2016.

\bibitem[Chen et~al.(2010)Chen, Goldstein, and Jiang]{chen2010payoff}
Q.~Chen, I.~Goldstein, and W.~Jiang.
\newblock Payoff complementarities and financial fragility: Evidence from
  mutual fund outflows.
\newblock \emph{Journal of Financial Economics}, 97\penalty0 (2):\penalty0
  239--262, 2010.

\bibitem[Cifuentes et~al.(2005)Cifuentes, Ferrucci, and
  Shin]{cifuentes2005liquidity}
R.~Cifuentes, G.~Ferrucci, and H.~S. Shin.
\newblock Liquidity risk and contagion.
\newblock \emph{Journal of the European Economic Association}, 3\penalty0
  (2-3):\penalty0 556--566, 2005.

\bibitem[Cont and Schaaning(2017)]{contschaaning2017}
R.~Cont and E.~Schaaning.
\newblock Fire sales, indirect contagion and systemic stress testing.
\newblock Working paper 02/2017, Norges Bank, 2017.

\bibitem[Coval and Stafford(2007)]{CovalStafford2007}
J.~Coval and E.~Stafford.
\newblock Asset fire sales (and purchases) in equity markets.
\newblock \emph{Journal of Financial Economics}, 86\penalty0 (2):\penalty0
  479--512, 2007.

\bibitem[Dattorro(2005)]{dattorro2005convex}
J.~Dattorro.
\newblock \emph{Convex Optimization \& Euclidean Distance Geometry}.
\newblock Meboo Publishing, 2005.

\bibitem[Detering et~al.(2020)Detering, Meyer-Brandis, Panagiotou, and
  Ritte]{Suff}
N.~Detering, T.~Meyer-Brandis, K.~Panagiotou, and D.~Ritte.
\newblock Suffocating fire sales.
\newblock \emph{Working Paper}, 2020.
\newblock URL \url{https://arxiv.org/pdf/2006.08110.pdf}.

\bibitem[Dow and Gorton(1997)]{Dow97}
J.~Dow and G.~Gorton.
\newblock Stock market efficiency and economic efficiency: Is there a
  connection?
\newblock \emph{The Journal of Finance}, 52\penalty0 (3):\penalty0 1087--1129,
  1997.

\bibitem[Dow and Rahi(2003)]{Dow2003}
J.~Dow and R.~Rahi.
\newblock Informed trading, investment, and welfare.
\newblock \emph{The Journal of Business}, 76\penalty0 (3):\penalty0 439--454,
  2003.

\bibitem[Duarte and Eisenbach(2021)]{duarte2018fire}
F.~Duarte and T.~M. Eisenbach.
\newblock Fire-sale spillovers and systemic risk.
\newblock \emph{Journal of Finance, Forthcoming}, 2021.

\bibitem[Eisenberg and Noe(2001)]{EN01}
L.~Eisenberg and T.H. Noe.
\newblock Systemic risk in financial systems.
\newblock \emph{Management Science}, 47\penalty0 (7):\penalty0 236--249, 2001.

\bibitem[Elliott et~al.(2014)Elliott, Golub, and Jackson]{EGJ14}
M.~Elliott, B.~Golub, and M.~O. Jackson.
\newblock Financial networks and contagion.
\newblock \emph{American Economic Review}, 104\penalty0 (10):\penalty0
  3115--3153, 2014.

\bibitem[Feinstein et~al.(2017)Feinstein, Rudloff, and Weber]{frw17}
Z.~Feinstein, B.~Rudloff, and S.~Weber.
\newblock Measures of systemic risk.
\newblock \emph{SIAM Journal on Financial Mathematics}, 8\penalty0
  (1):\penalty0 672--708, 2017.

\bibitem[Freund(2016)]{freund2016optimality}
R.~M. Freund.
\newblock \emph{Optimality conditions for constrained optimization problems}.
\newblock Massachusetts Institute of Technology, 2016.
\newblock Lecture Notes.

\bibitem[Froot et~al.(1993)Froot, Scharfstein, and Stein]{Froot}
{K. A.} Froot, {D. S.} Scharfstein, and J.~Stein.
\newblock Risk management: Coordinating corporate investment and financing
  policies.
\newblock \emph{Journal of Finance}, 48\penalty0 (5):\penalty0 1629--1658,
  1993.

\bibitem[Gai and Kapadia(2010)]{Gai}
P.~Gai and S.~Kapadia.
\newblock Contagion in financial networks.
\newblock \emph{Proceedings of the Royal Society A}, 466:\penalty0 2401--2423,
  2010.

\bibitem[Girardi et~al.(2021)Girardi, Kathleen, Stanislava, Pelizzon, and
  Getmansky]{Mila}
G.~Girardi, W.~Kathleen, N.~Stanislava, L.~Pelizzon, and M.~Getmansky.
\newblock Portfolio similarity and asset liquidation in the insurance industry.
\newblock \emph{Journal of Financial Economics, Forthcoming}, 2021.

\bibitem[Glasserman and Young(2015)]{GY14}
P.~Glasserman and H.~P. Young.
\newblock How likely is contagion in financial networks?
\newblock \emph{Journal of Banking and Finance}, 50:\penalty0 383--399, 2015.

\bibitem[Goldstein and Guembel(2008)]{Goldstein08}
I.~Goldstein and A.~Guembel.
\newblock {Manipulation and the Allocational Role of Prices}.
\newblock \emph{The Review of Economic Studies}, 75\penalty0 (1):\penalty0
  133--164, 2008.

\bibitem[Greenlaw et~al.(2008)Greenlaw, Hatzius, Kashyap, and Shin]{ShinGS}
D.~Greenlaw, J.~Hatzius, A.~Kashyap, and H.S. Shin.
\newblock Leveraged losses: Lessons from the mortgage market meltdown.
\newblock \emph{Proceedings of the U.S. Monetary Policy Forum}, 2008.

\bibitem[Greenwood et~al.(2015)Greenwood, Landier, and
  Thesmar]{greenwood2015vulnerable}
R.~Greenwood, A.~Landier, and D.~Thesmar.
\newblock Vulnerable banks.
\newblock \emph{Journal of Financial Economics}, 115\penalty0 (3):\penalty0
  471--485, 2015.

\bibitem[Hayek(1945)]{Hayek45}
F.~A. Hayek.
\newblock The use of knowledge in society.
\newblock \emph{The American Economic Review}, 35\penalty0 (4):\penalty0
  519--530, 1945.

\bibitem[Horn and Johnson(1985)]{horn1985matrix}
R.~A. Horn and C.~R. Johnson.
\newblock \emph{Matrix analysis}.
\newblock Cambridge University Press, 1985.

\bibitem[Ibragimov et~al.(2011)Ibragimov, Jaffee, and
  Walden]{ibragimov2011diversification}
R.~Ibragimov, D.~Jaffee, and J.~Walden.
\newblock Diversification disasters.
\newblock \emph{Journal of Financial Economics}, 99\penalty0 (2):\penalty0
  333--348, 2011.

\bibitem[Khandani and Lo(2011)]{khandani2011happened}
A.~E. Khandani and A.~W. Lo.
\newblock What happened to the quants in august 2007? evidence from factors and
  transactions data.
\newblock \emph{Journal of Financial Markets}, 14\penalty0 (1):\penalty0 1--46,
  2011.

\bibitem[Leland(1992)]{Leland92}
H.~E. Leland.
\newblock Insider trading: Should it be prohibited?
\newblock \emph{Journal of Political Economy}, 100\penalty0 (4):\penalty0
  859--887, 1992.

\bibitem[Miao and Wang(2018)]{BubblesCredit}
Jianjun Miao and Pengfei Wang.
\newblock Asset bubbles and credit constraints.
\newblock \emph{American Economic Review}, 108\penalty0 (9):\penalty0
  2590--2628, 2018.

\bibitem[Nutz and Scheinkman(2020)]{NutzSchein}
Marcel Nutz and Jose Scheinkman.
\newblock Shorting in speculative markets.
\newblock \emph{Journal of Finance}, 75\penalty0 (2):\penalty0 995--1036, 2020.

\bibitem[Pirrong(2011)]{pirrong2011economics}
C.~Pirrong.
\newblock \emph{The Economics of Central Clearing: Theory and Practice}.
\newblock International Swaps and Derivatives Association, 2011.

\bibitem[Raffestin(2014)]{raffestin2014diversification}
L.~Raffestin.
\newblock Diversification and systemic risk.
\newblock \emph{Journal of Banking \& Finance}, 46:\penalty0 85--106, 2014.

\bibitem[Rogers and Veraart(2013)]{RV13}
L.~C.~G. Rogers and L.~A.~M. Veraart.
\newblock Failure and rescue in an interbank network.
\newblock \emph{Management Science}, 59\penalty0 (4):\penalty0 882--898, 2013.

\bibitem[Shleifer and Vishny(1992)]{shleifer1992liquidation}
A.~Shleifer and R.~Vishny.
\newblock Liquidation values and debt capacity: A market equilibrium approach.
\newblock \emph{The Journal of Finance}, 47\penalty0 (4):\penalty0 1343--1366,
  1992.

\bibitem[Subrahmanyam and Titman(2001)]{Subrahmanyam2001}
A.~Subrahmanyam and S.~Titman.
\newblock Feedback from stock prices to cash flows.
\newblock \emph{The Journal of Finance}, 56\penalty0 (6):\penalty0 2389--2413,
  2001.

\bibitem[Wagner(2011)]{wagner2011systemic}
W.~Wagner.
\newblock Systemic liquidation risk and the diversity--diversification
  trade-off.
\newblock \emph{The Journal of Finance}, 66\penalty0 (4):\penalty0 1141--1175,
  2011.

\bibitem[Weber and Weske(2017)]{weber2017joint}
S.~Weber and K.~Weske.
\newblock The joint impact of bankruptcy costs, fire sales and cross-holdings
  on systemic risk in financial networks.
\newblock \emph{Probability, Uncertainty and Quantitative Risk}, 2\penalty0
  (9):\penalty0 1--38, 2017.

\end{thebibliography}

\clearpage
\begin{appendices}
	\setcounter{page}{1}
	\renewcommand{\thesection}{EC.\arabic{section}}
	\appendixpage
	{\Red
	\section{Additional Details and Proofs}\label{app:proofs}

	\subsection{The Leverage Targeting Model}\label{CappLarsreview}

	{\bf Banking Sector.} The economy consists of $N$ banks, whose stylized balance sheets consist of assets, debt, and equity. Banks manage their balance sheets by buying or selling assets so to keep their leverage ratios (debt to equity ratios) at specified target levels. 
	Banks hold a portfolio of assets at time $0$. Then, price shocks occur, and banks purchase or sell assets to restore leverage. These actions impose pressure on prices, and as a result, the market value of bank portfolios at time 1 deviates from its fundamental value. 
	
	The quantity (number of units) of asset~$k$ held by bank $i$ at time $t$ is denoted by $Q^{ki}_t$. We use $Q^i_t = (Q^{1i}_t , Q^{2i}_t , \ldots , Q^{Ki}_t)^\top \in\mathbb{R}^{K}$ to denote the vector of bank $i$'s holdings at $t$, and $\nv Q:=(Q_0^{ki})_{k=1,\dots,K,i=1,\ldots,N}\in\mathbb{R}^{K\times N}$ to denote the matrix of banks' holdings at time zero. We write
	$
	A^i_t = (A^{1i}_t , A^{2i}_t , \ldots , A^{Ki}_t)^\top,
	$
	where $A^{ki}_t = P^k_t Q^{ki}_t$ is the market value of the $i$'th bank's holdings in asset $k$ at $t$. The total market value of the $i$'th bank is given by ${\bf 1}^\top A^i_t=\sum_{k=1}^K A^{ki}_t$. 
	
	Banks finance purchases by issuing debt. We use $D^i_t$ to denote the total amount of debt issued by bank~$i$ at time $t$, and assume that the interest rate on the debt is zero.\footnote{\Red Accounting for an exogenous nonzero interest rate would not qualitatively affect the results. Because our focus is on the price inefficiencies caused by banks' trading responses to shocks, we opt for a simpler model that highlights these effects.} The main behavioral assumption in the model is that each bank $i$ targets a fixed leverage ratio (debt to equity ratio) $\kappa^i$, i.e.,\footnote{\Red In practice, banks do not immediately revert to the target leverage. \cite{duarte2018fire} estimate the speed of leverage adjustment, and find that leverage adjustment speed is roughly constant until 2006, before increasing by over $50\%$ and spiking in 2008 due to the greater delevering through balance sheet contraction. Because we consider a one period snapshot of the economy, we may simply view the target leverage as a short-term target leverage.}
	
	\begin{equation} \label{eq:L}
	\frac{D^i_t}{{\bf 1}^\top A^i_t - D^i_t} =\kappa^i,\quad t=0,1,\quad i=1,\ldots,N.
	\end{equation}
	Each bank thus needs to adjust its debt and asset holdings in response to price changes.
 Each bank executes an exogenous trading strategy $\alpha^{i}\in\bbr^K$, which specifies how a change in the amount of debt is offset by purchases or sales of the different assets in the portfolio.
	 Hence, it holds that $\sum_{k=1}^K\alpha^{ki}=1$.      
	{\Red To be more specific, the fundamental cash-flow equation is given by:
	\begin{equation} \label{eq:CF}
	P^k_{1} \Delta Q^{ki} = \alpha^{ki} \Delta D^i, \quad k=1,\ldots,K,
	\end{equation}
	where $\Delta Q^{ki}:=Q^{ki}_1-Q^{ki}_0$ denotes the change in quantities from period 0 to period 1, and $\Delta D^i$ denotes the change in debt for bank $i$. The left-hand side is thus the change in value of bank $i$'s holdings of asset $k$, while the right-hand side is the change in debt needed to target the leverage, multiplied by~$\alpha^{ki}$. }
	For future purposes, let $\nv \alpha:=(\alpha^{ki})_{k=1,\dots,K,i=1,\ldots,N}\in\mathbb{R}^{K\times N}$ denote the trading strategy matrix, and let $L^i=\kappa^i/(1+\kappa^i)$. Hence, by the leverage equation \eqref{eq:L}, it always holds that $D^i_t =L^i \nv 1^\top A_t^i$ for $t=0,1$.
	
	The banks' demand curves, where $\Delta Q^{ki}:=Q^{ki}_1-Q^{ki}_0$ denotes the change in quantities from period 0 to period 1, admit an explicit expression.
	
	\begin{proposition}[\cite{capponi2015price}] \label{A:P:dc}
		The  incremental demand of bank $i$ for asset~$k$ is given by
		\begin{equation} \label{eq:dc}
		\Delta Q^{ki} = \alpha^{ki} \kappa^i Q_0^{i\top} \frac{ \Delta P}{P^k_{1}} .
		\end{equation}
	\end{proposition}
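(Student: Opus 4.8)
The plan is to track bank $i$'s balance sheet through the two stages of the period---first the exogenous repricing, then the leverage-restoring rebalance---and to read off $\Delta Q^{ki}$ from the resulting accounting identities. The two facts I would lean on are that leverage targeting pins debt to equity, so that by \eqref{eq:L} we have $D^i_t=\kappa^i E^i_t$ with $E^i_t:={\bf 1}^\top A^i_t-D^i_t$ the equity, and that rebalancing at the prevailing prices is value-neutral for equity (buying or selling at market prices merely swaps debt/cash for assets of equal worth). Together these should reduce the whole derivation to bookkeeping.

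First I would compute the mark-to-market gain induced by the price move $\Delta P=P_1-P_0$ while holdings are still frozen at $Q^i_0$: this gain is $g^i:=\sum_{k=1}^K Q^{ki}_0(P^k_1-P^k_0)=Q_0^{i\top}\Delta P$. Since debt is a nominal claim that does not move with prices, post-shock equity is $E^i_1=E^i_0+g^i$, and because the subsequent trades are value-neutral this remains the equity after rebalancing. Applying the leverage identity at both dates then gives the required change in debt,
\[
\Delta D^i=D^i_1-D^i_0=\kappa^i\bigl(E^i_1-E^i_0\bigr)=\kappa^i\,Q_0^{i\top}\Delta P.
\]
Next I would argue that this new borrowing is exactly what finances the net purchases: since total assets equal $(1+\kappa^i)E^i_t$, the change in total asset value is $(1+\kappa^i)g^i$, and subtracting off the purely valuational part $g^i$ leaves the dollar value of net trades equal to $\kappa^i g^i=\Delta D^i$. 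Equivalently, one decomposes ${\bf 1}^\top A^i_1-{\bf 1}^\top A^i_0=\sum_{k=1}^K P^k_1\Delta Q^{ki}+g^i$ and matches the two expressions for the change in total assets. Finally, the strategy vector $\alpha^i$ by definition splits this dollar rebalancing across assets, so the dollar trade in asset $k$ is $P^k_1\Delta Q^{ki}=\alpha^{ki}\Delta D^i=\alpha^{ki}\kappa^i Q_0^{i\top}\Delta P$; dividing by $P^k_1$ yields \eqref{eq:dc}.

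I expect the only genuine obstacle to be conceptual rather than computational: fixing precisely what ``a change in the amount of debt is offset by purchases or sales'' means, and verifying that equity is invariant under the rebalance so that the total dollar value of net trades coincides with $\Delta D^i$ and the weights $\alpha^{ki}$ (with $\sum_{k=1}^K\alpha^{ki}=1$) act on exactly that total. Once this accounting is pinned down, every step is a one-line identity and the stated formula follows immediately.
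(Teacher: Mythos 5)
Your argument is correct, and it reaches \eqref{eq:dc} by a genuinely different and more elementary route than the paper. Both proofs rest on the same modeling primitive --- the cash-flow/allocation rule $P^k_1\Delta Q^{ki}=\alpha^{ki}\Delta D^i$ with $\sum_k\alpha^{ki}=1$, which the paper calls the fundamental cash-flow equation and which you correctly flag as the one assumption that must be pinned down rather than proved. From there the paths diverge. The paper keeps everything in vector form, arrives at the linear system $\left(\Diag(P_1)-L^i\alpha^iP_1^\top\right)\Delta Q^i=\alpha^iL^iQ_0^{i\top}\Delta P$, and inverts the rank-one perturbation via Sherman--Morrison. You instead collapse the problem to a scalar first: summing the cash-flow rule over $k$ gives the budget identity $P_1^\top\Delta Q^i=\Delta D^i$, which combined with $\nv 1^\top A^i_1-\nv 1^\top A^i_0=P_1^\top\Delta Q^i+Q_0^{i\top}\Delta P$ yields $\Delta E^i=Q_0^{i\top}\Delta P$ and hence, by the leverage identity $D^i_t=\kappa^iE^i_t$, the single scalar $\Delta D^i=\kappa^iQ_0^{i\top}\Delta P$; plugging back into the allocation rule finishes the proof with no matrix inversion at all. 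What your approach buys is brevity and economic transparency (equity moves only by the mark-to-market gain on the initial holdings, so the required new borrowing is immediate); what the paper's approach buys is alignment with the general multi-period derivation in \cite{capponi2015price}, where the time-varying leverage term $(L^i_1-L^i_0)\nv 1^\top A^i_0$ does not vanish and the full matrix machinery is needed. One small caution: your ``reprice first, then rebalance'' narrative should not be read as a literal intermediate stage, since $P_1$ already internalizes the rebalancing; but your algebra never uses such a stage --- it only uses the exact identity $P_1^\top Q_1^i-P_0^\top Q_0^i=P_1^\top\Delta Q^i+Q_0^{i\top}\Delta P$ together with the budget constraint --- so the proof is sound as written.
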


\begin{proof}
	This proof is obtained by specializing Proposition 1.1 in \cite{capponi2015price} to a setting with only one period, and assuming zero revenue shocks therein, i.e., $\Delta R^i=0$. 
	Writing the fundamental cash-flow equation~\eqref{eq:CF} in vector form yields
	\[
	\Diag(P_{1}) \Delta Q^i = \alpha^i \Delta D^i.
	\]
	Substituting for $D^i_t$ in the above equation the expressions for $L^i$, we obtain 
	\begin{align}\notag
		\Diag(P_{1}) \Delta Q^i &= \alpha^i \left( L_1^i \nv 1^\top A_1^i - L_0^i \nv 1^\top A_0^i  \right) \\
		&= \alpha^i \left( L_1^i \nv 1^\top (A_1^i - A_0^i) + (L_1^i-L_0^i) \nv 1^\top A_0^i  \right)
	\end{align}
	Rearranging the above expression leads to 
	\begin{equation} \label{eq:P:dc:1}
		\left( \Diag(P_{1}) - L^i \alpha^i P_{1}^\top \right) \Delta Q^i = \alpha^i L^i Q^{i\top}_0 \Delta P.
	\end{equation}
	The matrix multiplied by $\Delta Q^i$ can be inverted using the Sherman-Morrison formula. First, since $\nv 1^\top \alpha^i=1$, we have
	\[
	1-L^i P^\top_{1} \Diag(P_{1})^{-1} \alpha^i = 1-L^i\nv 1^\top \alpha^i = 1 - L^i \neq 0,
	\]
	so invertibility is guaranteed. The inverse is given by
	\[
	\Diag(P_{1})^{-1} + \kappa^i \Diag(P_{1})^{-1} \alpha^i P^\top_{1} \Diag(P_{1})^{-1},
	\]
	which simplifies to $\Diag(P_{1})^{-1} (\nv I + \kappa^i \alpha^i \nv 1^\top)$. From~\eqref{eq:P:dc:1} we therefore obtain
	\begin{align*}
		\Diag(P_{1})\Delta Q^i
		&= (\nv I + \kappa^i  \alpha^i \nv 1^\top) \left(\alpha^i L^i Q^{i\top}_0 \Delta P  \right) \\
		&= (1 + \kappa^i) \alpha^i \left( L^i Q^{i\top}_0 \Delta P \right),
	\end{align*}
	where the second equality uses the identity $(\nv I + \kappa^i \alpha^i \nv 1^\top)\alpha^i = (1 + \kappa^i) \alpha^i$, which follows from
	$\nv 1^\top\alpha^i=1$. Noting that $(1 + \kappa^i)L^i=\kappa^i$, we obtain 
	\begin{align*}
		\Diag(P_{1})\Delta Q^i &=  \kappa^i \alpha^i Q^{i\top}_0 \Delta P,\\
	\end{align*}
	and the stated expression~\eqref{eq:dc} follows.
\end{proof}
	
	\paragraph{The Nonbanking Sector.}
	Unlike banks, the nonbanking sector consists of institutions that are primarily equity funded (e.g., mutual funds, money market funds, insurances, and pension funds) and thus do not engage in leverage targeting. Nevertheless, they trade the same assets as the banking sector. 
	This gives rise to additional demand, which we refer to as the {\em nonbanking demand} and model it in a reduced form.  We assume that demand curves are decoupled across assets, i.e., the nonbanking demand for asset $k$ only depends on the price of asset $k$, and not on the prices of other assets.\footnote{\Red This modeling choice allows us to focus only on the price impact caused by the banks' needs of tracking their leverage requirements.} 
	
	At time $0$, the nonbanking sector holds a quantity $Q^{k,\,\rm nb}_0$ of asset $k$, and we write $A^{k,\,\rm nb}_0=P_0^k Q^{k,\,\rm nb}_0$ for the corresponding asset value.  
	Unlike the banking sector, whose demand function is upward sloping, the nonbanking sector has a downward sloping demand curve: it sells an asset if its price is above the fundamental value, and purchases an asset if its price is below its fundamental value. Hence, the nonbanking sector acts as the liquidity provider when there are shocks, and exerts a stabilizing force on the pressure imposed by banks. The demand for asset $k$ is given by
	\begin{equation}\label{eq:exd}
	\Delta Q^{k,\,\rm nb} \;= \; -\, \gamma^k  Q^{k,\,\rm nb}_0  \frac{ \Delta P^k   -   Z^k} {P_{1}^k},
	\end{equation}
	where $\gamma^k$ is a positive constant.  
	This choice of demand function admits the following interpretation. Assume no shock occurs, i.e., $Z^k=0$. Then
	\begin{equation}
	\frac{\Delta Q^{k,\,\rm nb}}{Q^{k,\,\rm nb}_0} = -\gamma^k \frac{\Delta P^k}{P^k_{1}}.
	\label{eq:deltaoutside}
	\end{equation}
	The parameter $\gamma^k$ can thus be interpreted as the elasticity of the nonbanking demand for asset~$k$, similar to $\kappa^i\alpha^{ki}$ in~\eqref{eq:dc} for the banking sector. We refer to $\gamma^k$ as the illiquidity characteristic of asset $k$. Unlike equation \eqref{eq:deltaoutside}, equation \eqref{eq:exd} includes the correction term $Z_k$, because nonbanking demand is due to deviations from fundamental values. 
}

	\subsection{Proofs of Section \ref{sec:model}}	
	
	\subsubsection{Proof of Proposition \ref{A:P:dyn}}
	This proof is obtained by specializing Proposition 2.1 in \cite{capponi2015price} to a setting with only one period, and assuming zero revenue shocks therein, i.e., $\Delta R^i=0$.
	First, we use the market-clearing condition $Q^{\rm nb}_t + \sum\nolimits_{i=1}^N Q^i_t = Q_{\rm tot},$ $t = 0,1,$ and then the expressions~\eqref{eq:dc} and~\eqref{eq:exd} for the demand functions to get
	\begin{align*}
	\boldsymbol{0} &= P_{1} \circ \Delta Q^{\rm nb} + \sum_{i=1}^N P_{1}\circ \Delta Q^i \\
	&= \Diag(\gamma\circ Q^{\rm nb}_0) \left( Z - \Delta P \right) + \sum_{i=1}^N \alpha^i \kappa^i Q^{i\top}_0\Delta P .
	\end{align*}
	Multiplying from the left by $\Diag(\gamma\circ Q^{\rm nb}_0)^{-1}$ and rearranging yields
	\begin{equation} \label{eq:P:price1}
	\left[ \Id - \sum_{i=1}^N \frac{\alpha^i}{\gamma\circ Q^{\rm nb}_0} \kappa^i Q^{i\top}_0\right]\Delta P
	= Z .
	\end{equation}
	The left-hand side is thus equal to $(\Id-\nv S)\Delta P$. We now simply multiply both sides of the equality \eqref{eq:P:price1} from the left by $(\Id-\nv S)^{-1}$ to arrive at the stated price change.


	\subsection{Proofs of Section \ref{sec:efficientholdings}}
	
	\subsubsection{Proof of Proposition \ref{prop:solutionmethod}}
	\begin{itemize}
		\item We first prove the direction (i)$\Rightarrow$(ii): Let $\nv Q^*\in\underset{\nv Q\in\mathcal{D}}{{\argmin}}\quad (\nv Q v)^\top \nv G (\nv Qv)$ be an f-efficient holding matrix. Since the objective function does only depend on the vector $\nv Q v\in\bbr^K$, the following statements are equivalent:
		\begin{equation}\label{eq:equivalentproblemy}
		\nv Q^*\in\underset{\nv Q\in\mathcal{D}}{{\argmin}}\quad (\nv Q v)^\top \nv G (\nv Qv)\;\Leftrightarrow\;\widetilde{y}:=\nv Q^* v\in\underset{y\in\widetilde{\mathcal{D}}}{{\argmin}}\quad y^\top \nv G y,
		\end{equation}
		where $\widetilde{\mathcal{D}}:=\{y\in\bbr^K\mid \exists\;\nv Q\in\mathcal{D}\;\text{s.t. } y=\nv Q v \}.$
		Since every vector of the form $y=\nv Q v$ for one $\nv Q\in\mathcal{D}$ satisfies $$\nv{1}_K^\top y=\nv{1}_K^\top\nv Q v=b^\top v,$$
		it obviously holds that $\widetilde{\mathcal{D}}\subseteq \mathcal{D}_y=\{y\in\bbr^K\mid \boldsymbol{1}_K^\top y=b^\top v\}$ yielding $\underset{y\in\widetilde{\mathcal{D}}}{\min}\quad y^\top\nv G y\ge \underset{y\in\mathcal{D}_y}{\min}\quad y^\top\nv G y.$ However, as shown by Proposition \ref{prop:existenceallocation}, if $y^*\in\mathcal{D}_y$ solves the aggregate problem, i.e., $y^*\in\underset{y\in\mathcal{D}_y}{{\argmin}}\quad y^\top \nv G y$, then there exists a matrix $ \widetilde{\nv Q}$ such that $y^*=\widetilde{\nv Q}v$, i.e., $y^*\in\widetilde{\mathcal{D}}$. Hence,
		\[\widetilde{y}^\top\nv Q\widetilde{y}= \underset{y\in\widetilde{\mathcal{D}}}{\min}\quad y^\top\nv G y = \underset{y\in\mathcal{D}_y}{\min}\quad y^\top\nv G y=(y^*)^\top\nv G y^*,  \]
		i.e. $\widetilde{y}=\nv Q^* v$ solves the aggregate problem \eqref{eq:yproblem}.
		\item Second, we prove the direction (ii)$\Rightarrow$ (i): Let $y^*=\nv Q^* v\in\underset{y\in\mathcal{D}_y}{{\argmin}}\quad y^\top \nv G y$ solve the aggregate problem. Since $\underset{y\in\widetilde{\mathcal{D}}}{\min}\quad y^\top\nv G y\ge \underset{y\in\mathcal{D}_y}{\min}\quad y^\top\nv G y$, with $\widetilde{\mathcal{D}}$ as defined in the previous step, and since $y^*=\nv Q^* v$ for some $\nv Q^*\in\mathcal{D}$, we have that $y^*=\nv Q^* v\in\underset{y\in\widetilde{\mathcal{D}}}{{\argmin}}\quad y^\top \nv G y$, which, finally, is equivalent to $\nv Q^*$ being f-efficient, see \eqref{eq:equivalentproblemy}.
	\end{itemize}
	
	\subsubsection{Proof of Lemma \ref{lem:aggregateholdings}}
	
	The KKT conditions for minimizing $y^\top {\nv G} y$ subject to $y\in\mathcal{D}_y$ read as
	\begin{equation}\label{eq:bigbankKKT}
	\underbrace{\begin{pmatrix}
		\nv G & \boldsymbol{1}_K\\
		\boldsymbol{1}^\top_K & 0\\
		\end{pmatrix}}_{=:\nv M} \cdot \begin{pmatrix}
	y\\\lambda
	\end{pmatrix}
	=
	\begin{pmatrix}
	\boldsymbol{0}_K\\
	b^\top v
	\end{pmatrix}.
	\end{equation}
	The inverse of the KKT matrix $\nv M$ is by blockwise inversion (see, e.g., Proposition 2.8.7 in \cite{bernstein2005matrix}) given by 
	\[	\nv M^{-1}=
	\begin{pmatrix}
	\nv G^{-1}-\nv G^{-1}\boldsymbol{1}_K(\boldsymbol{1}_K^\top \nv G^{-1}\boldsymbol{1}_K)^{-1}\boldsymbol{1}_K^\top \nv G^{-1} & \nv G^{-1}\boldsymbol{1}_K(\boldsymbol{1}_K^\top \nv G^{-1}\boldsymbol{1}_K)^{-1}\\
	(\boldsymbol{1}_K^\top \nv G^{-1}\boldsymbol{1}_K)^{-1}\boldsymbol{1}^\top_K \nv G^{-1} & -(\boldsymbol{1}_K^\top \nv G^{-1}\boldsymbol{1}_K)^{-1}\\
	\end{pmatrix}=:\begin{pmatrix}
	\nv A & \nv B\\
	\nv C & \nv D\\
	\end{pmatrix},  \]
	with $\nv A\in\bbr^{K\times K}$, $\nv B\in\bbr^{K\times 1}$, $\nv C\in\bbr^{1\times K}$ and $\nv D\in\bbr$. Note that this inverse matrix exists because, firstly, $\nv G$ is invertible because it is symmetric and positive definite, with inverse given by the Sherman-Morrison formula  
	\[\nv G^{-1}={\rm Diag}(\tfrac{1}{\sigma^2})- \frac{\tfrac{\mu}{\sigma^2}(\tfrac{\mu}{\sigma^2})^\top}{1+(\tfrac{\mu}{\sigma^2})^{\top} \mu} , \]
	and, secondly, $$\boldsymbol{1}_K^\top \nv G^{-1}\boldsymbol{1}_K=(\tfrac{1}{\sigma^2})^\top \nv{1}_K-\frac{(\nv{1}_K^\top\tfrac{\mu}{\sigma^2})^2}{1+(\tfrac{\mu}{\sigma^2})^\top \mu} \not=0,$$
	since with $c:=(\tfrac{1}{\sigma^2})^\top \nv{1}_K>0$
	\begin{align*}
	(\tfrac{1}{\sigma^2})^\top \nv{1}_K-\frac{(\nv{1}_K^\top\tfrac{\mu}{\sigma^2})^2}{1+(\tfrac{\mu}{\sigma^2})^\top \mu}=0\quad &\Leftrightarrow\quad  c(1+(\tfrac{\mu}{\sigma^2})^\top \mu)-(\nv{1}_K^\top\tfrac{\mu}{\sigma^2})^2=0\\
	&\Leftrightarrow\quad c+c\cdot (\tfrac{\mu}{\sigma})^\top(\tfrac{\mu}{\sigma})-(\tfrac{\mu}{\sigma})^\top\tfrac{1}{\sigma}(\tfrac{1}{\sigma})^\top\tfrac{\mu}{\sigma}=0\\
	&\Leftrightarrow\quad (\tfrac{\mu}{\sigma})^\top(\underbrace{\nv{I}-\tfrac{1}{c}\cdot\tfrac{1}{\sigma}(\tfrac{1}{\sigma})^\top}_{=:\nv{E}})\tfrac{\mu}{\sigma}=-1,
	\end{align*}
	and this cannot be fulfilled for any $\tfrac{\mu}{\sigma}\in\mathbb{R}^K$ since $\nv{E}\in\mathbb{R}^{K\times K}$ is a positive semidefinite matrix due to its only eigenvalues $0$ and $1$ (cf. \cite{dattorro2005convex}, Appendix B.3).
	
	Hence, the f-efficient solution to problem \eqref{eq:yproblem} is derived from multiplying both sides of equation \eqref{eq:bigbankKKT} by $\nv M^{-1}$ yielding
	\begin{align*}
	y^{*}&=\nv A\cdot\nv 0_K+ \nv B\cdot (b^\top v) =b^\top v\cdot \nv G^{-1}\boldsymbol{1}_K(\boldsymbol{1}_K^\top \nv G^{-1}\boldsymbol{1}_K)^{-1}
	=\frac{b^\top v}{\nv{1}_K^\top z} z,
	\end{align*}
	where $z:=\nv G^{-1}\nv{1}_K$.
	
	\subsubsection{Proof of Proposition \ref{prop:existenceallocation}}
	We define the matrix
	\begin{equation}\label{eq:Qp}
	\begin{aligned}
	\nv Q^p&:=\left(\tfrac{1}{v_2-v_1}[v_2 q-y^*-(\sum\limits_{i=3}^{N}(v_2-v_i)b_i)e_1^K],\;\tfrac{1}{v_2-v_1}[y^*-v_1 q-(\sum\limits_{i=3}^{N}(v_i-v_1)b_i)e_1^K],\; b_3 e_1^K,\;\ldots,\;b_N e_1^K\right)\\&\in\bbr^{K\times N},
	\end{aligned}
	\end{equation}	
	where $e_1^K=(1,0,\ldots,0)^\top\in\bbr^K$. Without loss of generality, we have assumed here that $v_2\not=v_1$ (cf. Assumption \ref{ass:v}).
	This matrix satisfies $\nv Q^p\in\mathcal{D}$ as well as $y^*=\nv Q^p v$, for $y^*=\tfrac{b^\top v}{\nv{1}_K^\top z} z$, the solution of the aggregate problem \eqref{eq:yproblem} derived in Lemma \ref{lem:aggregateholdings}: 
	\begin{itemize}
		\item \underline{$y^*=\nv Q^p v$:} It holds:
		\begin{align*}
		(\nv Q^p v)_1&=\frac{1}{v_2-v_1}\left(v_1v_2q_1-y_1^*v_1-v_1\sum\limits_{i=3}^{N}(v_2-v_i)b_i+y_1^*v_2-v_1v_2q_1-v_2\sum\limits_{i=3}^{N} (v_i-v_1)b_i \right)\\&\quad +\sum\limits_{i=3}^{N} b_i v_i =\frac{1}{v_2-v_1}\left((v_2-v_1)y_1^*-(v_2-v_1)\sum\limits_{i=3}^{N}b_i v_i \right)+\sum\limits_{i=3}^{N} b_i v_i=y_1^*,
		\end{align*}
		and for $k=2,\ldots,K$:
		\begin{align*}
		(\nv Q^p v)_k=\frac{1}{v_2-v_1}\left(v_1v_2q_k-y_k^*v_1+y_k^*v_2-v_1v_2q_k \right)=y_k^*.
		\end{align*}
		\item \underline{$\nv Q^p\nv{1}_N=q$:} We obtain
		\begin{align*}
		(\nv Q^p\nv{1}_K)_1&=\frac{1}{v_2-v_1}\left(v_2q_1-y_1^*-\sum\limits_{i=3}^{N}(v_2-v_i)b_i+y_1^*-v_1q_1-\sum\limits_{i=3}^{N} (v_i-v_1)b_i \right)+\sum\limits_{i=3}^{N} b_i\\
		&=\frac{1}{v_2-v_1}\left((v_2-v_1)q_1+(v_2-v_1)\sum\limits_{i=3}^{N} b_i \right)+\sum\limits_{i=3}^{N} b_i=q_1,
		\end{align*}
		and for $k=2,\ldots,K$:
		\begin{align*}
		(\nv Q^p\nv{1}_N)_k=\frac{1}{v_2-v_1}\left(v_2q_k-y_k^*+y_k^*-v_1q_k \right)=q_k.
		\end{align*}
		\item \underline{$\nv{1}_K^\top\nv Q^p=b^\top$:} Obviously, it is $(\nv{1}_K^\top \nv Q^p)_i=b_i$, for $i=3,\ldots,N$. For the first and second entry, it holds that
		\begin{align*}
		(\nv{1}_K^\top\nv Q^p)_1&=\frac{1}{v_2-v_1}\left(v_2\nv{1}_K^\top q-\nv{1}_K^\top y^*-\sum\limits_{i=3}^{N}(v_2-v_i)b_i \right)\\
		&=\frac{1}{v_2-v_1}\left(v_2\sum\limits_{i=1}^{N}b_i-\sum\limits_{i=1}^{N}b_iv_i-\sum\limits_{i=3}^{N}v_2b_i+\sum\limits_{i=3}^{N}b_iv_i \right)\\
		&=\frac{1}{v_2-v_1}(v_2(b_1+b_2)-(v_1b_1+v_2b_2))=b_1,
		\end{align*}
		where in the first step, we have used that $\sum\nolimits_{k=1}^{K} q_k=T=\sum\nolimits_{i=1}^N b_i$ and that $\nv{1}_K^\top y^*=b^\top v$. Using the same arguments, we obtain
		\begin{align*}
		(\nv{1}_K^\top\nv Q^p)_2&=\frac{1}{v_2-v_1}\left(\nv{1}_K^\top y^*-v_1\nv{1}_K^\top q-\sum\limits_{i=3}^{N}(v_i-v_1)b_i \right)\\
		&=\frac{1}{v_2-v_1}\left(\sum\limits_{i=1}^{N} b_i-v_1\sum\limits_{i=1}^{N}b_iv_i-\sum\limits_{i=3}^{N}b_iv_i+\sum\limits_{i=3}^{N}v_1b_i \right)\\
		&=\frac{1}{v_2-v_1}(b_1v_1+b_2v_2-(v_1b_1+v_1b_2))=b_2,
		\end{align*}
		which completes the proof.
	\end{itemize}

	\begin{remark}\label{rem:FullSolutionSpace}
		Due to our two-step solution method proven in Proposition \ref{prop:solutionmethod}, finding an f-efficient holding matrix $\nv{Q}^*\in\bbr^{K\times N}$ is equivalent to solving the linear system
		\begin{align}\label{eq:LSEAllocationMatrix}
		\underbrace{\begin{pmatrix}
			\boldsymbol{1}_K^\top & 0 & \cdots & 0\\
			0 & \nv{1}_K^\top & \ddots & \vdots \\
			\vdots & \ddots & \ddots &0\\
			0 & \cdots & 0 & \nv{1}_K^\top\\
			\nv{I}_K & \nv{I}_K & \cdots & \nv{I}_K\\
			v_1\nv{I}_K & v_2\nv{I}_K&\cdots & v_N\nv{I}_K
			\end{pmatrix}}_{=:\nv F\in\mathbb{R}^{2K+N\times KN}}	{\rm vec}(\nv Q^*) =\begin{pmatrix}
		b\\q\\y^*
		\end{pmatrix},
		\end{align}
		where ${\rm vec}(\nv Q^*)\in\bbr^{KN}$ denotes the vectorized version of the matrix $\nv Q^*$, obtained by stacking its columns on top of one another. The null space corresponding to the matrix $\nv F$ is spanned by the $(K-1)(N-2)$ column vectors of the matrix (w.l.o.g. $v_1\not=v_2$, cf. Assumption \ref{ass:v})
		\begin{equation}\label{eq:nullspace}
		\nv{O}:=
		\begin{pmatrix}
		\tfrac{v_2-v_3}{v_2-v_1}\begin{psmallmatrix}
		-\nv{1}_{K-1}^\top\\
		\nv{I}_{K-1}
		\end{psmallmatrix}
		&\tfrac{v_2-v_4}{v_2-v_1}\begin{psmallmatrix}
		-\nv{1}_{K-1}^\top\\
		\nv{I}_{K-1}
		\end{psmallmatrix}
		&\cdots &
		\tfrac{v_2-v_N}{v_2-v_1}\begin{psmallmatrix}
		-\nv{1}_{K-1}^\top\\
		\nv{I}_{K-1}
		\end{psmallmatrix}\\
		\tfrac{v_3-v_1}{v_2-v_1}\begin{psmallmatrix}
		-\nv{1}_{K-1}^\top\\
		\nv{I}_{K-1}
		\end{psmallmatrix}
		&
		\tfrac{v_4-v_1}{v_2-v_1}\begin{psmallmatrix}
		-\nv{1}_{K-1}^\top\\
		\nv{I}_{K-1}
		\end{psmallmatrix}
		&\cdots& \tfrac{v_N-v_1}{v_2-v_1}\begin{psmallmatrix}
		-\nv{1}_{K-1}^\top\\
		\nv{I}_{K-1}
		\end{psmallmatrix}\\
		\begin{psmallmatrix}
		\nv{1}_{K-1}^\top\\
		-\nv{I}_{K-1}
		\end{psmallmatrix}
		& \nv{0}_{K\times K-1}
		&\cdots & \nv{0}_{K\times K-1}\\
		\nv{0}_{K\times K-1}
		& 	\begin{psmallmatrix}
		\nv{1}_{K-1}^\top\\
		-\nv{I}_{K-1}
		\end{psmallmatrix}&
		\ddots &
		\vdots\\
		\vdots & \ddots & \ddots &\nv{0}_{K\times K-1}\\
		\nv{0}_{K\times K-1} & \cdots & \nv{0}_{K\times K-1} & \begin{psmallmatrix}
		\nv{1}_{K-1}^\top\\
		-\nv{I}_{K-1}
		\end{psmallmatrix}
		\end{pmatrix}\in\bbr^{KN\times(K-1)(N-2)}.
		\end{equation}
		The fact that the linearly independent column vectors of $\nv O$ lie in the null space of $\nv F$ is easily checked via direct calculation; the fact that the dimension of the null space is equal to $(K-1)(N-2)$ follows from a rank-nullity argument given in the proof of Theorem \ref{thm:efficientholdings} b) below.
		Hence, we are able to fully characterize the set of f-efficient holding matrices as
		\[\left\{\nv Q^*\in\bbr^{K\times N}\mid {\rm vec}(\nv Q^*)={\rm vec}(\nv Q^p)+\sum\limits_{j=1}^{(K-1)(N-2)}\lambda_j C_j(\nv O) ,\,\lambda_1,\ldots,\lambda_{(K-1)(N-2)}\in\bbr\right\}, \]
		with particular solution $\nv Q^p$ as defined in \eqref{eq:Qp} and where $C_j(\nv O)$, for $j=1,\ldots,(K-1)(N-2)$, denotes the $j$-th column vector of the null space matrix $\nv O$.
	\end{remark}
	
	\subsubsection{Proof of Theorem \ref{thm:efficientholdings}}
	\begin{itemize}
		\item[a)] The existence of an f-efficient holding matrix for every $N,K\ge 2$ directly follows from Propositions \ref{prop:existenceallocation} and \ref{prop:solutionmethod}. Let $y^*=\tfrac{b^\top v}{\nv{1}_K^\top z} z$, the solution of the aggregate problem \eqref{eq:yproblem} derived in Lemma \ref{lem:aggregateholdings} with $z=\nv{G}^{-1}\nv{1}_K$. The mean squared deviation under an f-efficient holding matrix $\nv Q^*$ is given by:
		\begin{align*}
		MSD(\nv Q^*)&=(\nv Q^*v)^\top \nv G (\nv Q^* v)= (y^*)^\top\nv{G}y^*\\
		&=\frac{(b^\top v)^2}{(\nv{1}_K^\top z)^2} z^\top\nv{G}z=\frac{(b^\top v)^2}{(\nv{1}_K^\top z)^2} z^\top \nv{G}\nv G^{-1}\nv{1}_K\\
		&=\frac{(b^\top v)^2}{(\nv{1}_K^\top z)^2} (z^\top\nv{1}_K)=\frac{(b^\top v)^2}{\nv{1}_K^\top z}=\frac{(b^\top v)^2}{\nv{1}_K^\top \nv{G}^{-1}\nv{1}_K}.
		\end{align*}
		\item[b)] As outlined above in Remark \ref{rem:FullSolutionSpace}, finding an f-efficient holding matrix $\nv{Q}^*\in\bbr^{K\times N}$ is equivalent to solving the linear system
		\eqref{eq:LSEAllocationMatrix}.
		We have the following result: The rank of the matrix $\nv F$ is equal to $2K+N-2$.
		This can be proven via standard Gaussian elimination. First, resorting the rows of $\nv F$ and adding the new first $K$ rows multiplied by $-v_1$ to the second $K$ rows yields
		\begin{align*}
		\begin{pmatrix}
		\nv{I}_K & \nv{I}_K & \cdots & \nv{I}_K\\
		v_1\nv{I}_K & v_2\nv{I}_K&\cdots & v_N\nv{I}_K\\
		\boldsymbol{1}_K^\top & 0 & \cdots & 0\\
		0 & \nv{1}_K^\top & \ddots & \vdots \\
		\vdots & \ddots & \ddots &0\\
		0 & \cdots & 0 & \nv{1}_K^\top\\
		\end{pmatrix}
		\rightarrow
		\begin{pmatrix}
		\nv{I}_K & \nv{I}_K & \cdots & \nv{I}_K\\
		\nv{0}_{K\times K} & (v_2-v_1)\nv{I}_K&\cdots & (v_N-v_1)\nv{I}_K\\
		\boldsymbol{1}_K^\top & 0 & \cdots & 0\\
		0 & \nv{1}_K^\top & \ddots & \vdots \\
		\vdots & \ddots & \ddots &0\\
		0 & \cdots & 0 & \nv{1}_K^\top\\
		\end{pmatrix}.
		\end{align*}
		Numbering the rows in this last matrix as $r_1,\ldots,r_{2K+N}$, we observe:
		\begin{align*}
		r_{2K+1}=\sum\limits_{i=1}^{K}r_i-\sum\limits_{i=2K+2}^{2K+N} r_i
		\end{align*}
		and
		\begin{align*}
		r_{2K+2}=\frac{1}{v_2-v_1}\left(\sum\limits_{i=K+1}^{2K}r_i-\sum\limits_{i=2K+3}^{2K+N}(v_{i-2K}-v_1)r_i \right),
		\end{align*}
		where, as above, w.l.o.g. $v_2\not=v_1$ (cf. Assumption \ref{ass:v}). Hence, these two rows can be eliminated from the matrix yielding the row-echelon form:
		\begin{equation*}
			\begin{pmatrix}
			\nv{I}_K & \nv{I}_K & \nv{I}_K & \nv{I}_K& \cdots & \nv{I}_K\\
			\nv{0}_{K\times K} & (v_2-v_1)\nv{I}_K&(v_3-v_1)\nv{I}_K & (v_4-v_1)\nv{I}_K&\cdots&(v_N-v_1)\nv{I}_K\\
			\boldsymbol{0}_K^\top & \boldsymbol{0}_K^\top & \cdots & &&\boldsymbol{0}_K^\top\\
			\boldsymbol{0}_K^\top & \nv{0}_K^\top & \cdots & &&\boldsymbol{0}_K^\top \\
			\boldsymbol{0}_K^\top & \boldsymbol{0}_K^\top & \boldsymbol{1}_K^\top &0&\cdots&0\\
			0 & \cdots & 0 & \nv{1}_K^\top&\ddots &\vdots\\
			\vdots & \ddots  &\vdots &\ddots &\ddots&0\\
			0&\cdots&0&\cdots&0&\nv{1}_K^\top
			\end{pmatrix},
		\end{equation*}	
		which shows that the matrix $\nv F$ possesses the rank $2K+N-2$.	
		The rank-nullity theorem now gives rise to the dimension of the null space (a basis is given by the column vectors of the matrix $\nv O$ in Remark \ref{rem:FullSolutionSpace} above):
		\[KN={\rm rank}(\nv F)+{\rm null}(\nv F)\;\Rightarrow\; {\rm null}(\nv F)=KN-(2K+N-2)=(N-2)(K-1). \]
		and, hence, the dimension of the null space is zero, i.e., the solution $\nv Q^p$ stated in the proof of Proposition \ref{prop:existenceallocation} is unique, if and only if either $K=1$ or $N=2$; i.e., if we additionally assume that $K\ge 2$, then this is equivalent to $N=2$. Finally, the formula for $\nv Q^{N=2}$ directly follows from the definition of $\nv Q^p$ in \eqref{eq:Qp}.
	\end{itemize}
	
	\subsubsection{Proof of Theorem \ref{thm:diversification}}
	Recall that $\nv Q^{\rm diversified}=\tfrac{1}{T}qb^\top=\tfrac{1}{\nv{1}_K^\top q}qb^\top$. Thus, it holds that
	\begin{align*}
	& y^*= \nv Q^{\rm diversified} v\\
	&\Leftrightarrow \quad \frac{b^\top v}{\nv{1}_K^\top z}\cdot z = \frac{1}{\nv{1}_K^\top q}q(b^\top v)\\
	&\overset{b^\top v\not=0}{\Leftrightarrow} \quad \frac{1}{\nv{1}_K^\top z} z = \frac{1}{\nv{1}_K^\top q}q,
	\end{align*}
	i.e., if and only if $q$ and $z=\nv{G}^{-1}\nv{1}_K$ are linearly dependent.
	
	\subsubsection{Proof of Corollary \ref{cor:diversification}}
	\begin{itemize}
		\item For the proof of statements a) and c), we first observe that if $q_1=\ldots=q_K$, then 
		\begin{align*}
		&\frac{1}{\nv{1}_K^\top z} z = \frac{1}{\nv{1}_K^\top q}q\quad\Leftrightarrow\quad \frac{1}{\nv{1}_K^\top z} z = \frac{1}{q_1\cdot K}q_1\cdot\nv{1}_K\\
		&\Leftrightarrow\quad  z =\frac{1}{K} \cdot\nv{1}_K\nv{1}_K^\top z\quad \Leftrightarrow\quad K\cdot z = (\nv{1}_K\nv{1}_K^\top) z,
		\end{align*}
		i.e., that $z$ is an eigenvector to the eigenvalue $K$ of the all-one matrix $\nv{1}_{K\times K}=\nv{1}_K\nv{1}_K^\top$. This eigenvector is given as $z=c\cdot\nv{1}_K$ for a constant $c\in\bbr$. Hence, $z_k=z_\ell$ for all $k,\ell=1,\ldots,K$ is equivalent to $\nv Q^{\rm diversified}$ being f-efficient under the assumption that $q_1=\ldots=q_K$.
		
		To prove part a), we now set $\sigma_1^2=\ldots=\sigma_K^2$. According to the proof of Lemma \ref{lem:aggregateholdings}, the vector $z$ is in this situation given by
		\begin{align*}
		z=\nv{G}^{-1}\nv{1}_K=\frac{1}{\sigma_1^2}\cdot\nv{1}_K-\frac{(\tfrac{1}{\sigma_1^2})^2\cdot\mu^\top\nv{1}_K}{1+\tfrac{1}{\sigma_1^2\cdot\mu^\top\mu}}\cdot\mu.
		\end{align*}
		This means that the condition $z_k=z_\ell$ for all $k,\ell=1,\ldots,K$ is equivalent to 
		\begin{align*}
		&z_k=\frac{1}{\sigma_1^2}-\frac{(\tfrac{1}{\sigma_1^2})^2\cdot\sum\nolimits_{j=1}^K \mu_j}{1+\tfrac{1}{\sigma_1^2}\cdot\sum\nolimits_{j=1}^{K}\mu_j^2}\cdot \mu_k \overset{!}{=}\frac{1}{\sigma_1^2}-\frac{(\tfrac{1}{\sigma_1^2})^2\cdot\sum\nolimits_{j=1}^K \mu_j}{1+\tfrac{1}{\sigma_1^2}\cdot\sum\nolimits_{j=1}^{K}\mu_j^2}\cdot \mu_\ell=z_\ell\\
		&\Leftrightarrow\quad \frac{(\tfrac{1}{\sigma_1^2})^2\cdot\sum\nolimits_{j=1}^K \mu_j}{1+\tfrac{1}{\sigma_1^2}\cdot\sum\nolimits_{j=1}^{K}\mu_j^2}\cdot \mu_k =\frac{(\tfrac{1}{\sigma_1^2})^2\cdot\sum\nolimits_{j=1}^K \mu_j}{1+\tfrac{1}{\sigma_1^2}\cdot\sum\nolimits_{j=1}^{K}\mu_j^2}\cdot \mu_\ell\\
		&\Leftrightarrow\quad (\sum\nolimits_{j=1}^K \mu_j)\cdot\mu_k=(\sum\nolimits_{j=1}^K\mu_j)\cdot\mu_\ell,
		\end{align*}
		for all $k,\ell=1,\ldots,K$, i.e., $\mu_k=\mu_{\ell}$ or $\sum\nolimits_{j=1}^K\mu_j=0$. Hence, if $\mu_1=\ldots=\mu_{K-1}$, and $\mu_K=\mu_1+\varepsilon$, then, finally, $\nv{Q}^{\rm diversified}$ being f-efficient is equivalent to either $\varepsilon=0$, yielding $\mu_k=\mu_{\ell}$ for all $k,\ell=1,\ldots,K$, or $\varepsilon=-K\mu_1$, yielding $\sum\nolimits_{j=1}^K\mu_j=0$.
		
		To prove part c), we set $\mu_1=\ldots=\mu_K$, which (see the proof of Lemma \ref{lem:aggregateholdings}) leads to the vector $z$ given as:
		\begin{align*}
		z=\nv{G}^{-1}\nv{1}_K=\frac{1}{\sigma^2}-\frac{\mu_1^2\cdot(\tfrac{1}{\sigma^2})^\top\nv{1}_K}{1+\mu_1^2\cdot(\tfrac{1}{\sigma^2})^\top\nv{1}_K}\cdot \frac{1}{\sigma^2}.
		\end{align*}
		Thus, the condition $z_k=z_\ell$ for all $k,\ell=1,\ldots,K$ reads as
		\begin{align*}
		&z_k=\left(1-\frac{\mu_1^2\cdot\sum\nolimits_{j=1}^K\tfrac{1}{\sigma_j^2} }{1+\mu_1^2\cdot\sum\nolimits_{j=1}^K\tfrac{1}{\sigma_j^2}}\right)\frac{1}{\sigma_k^2}\overset{!}{=} \left(1-\frac{\mu_1^2\cdot\sum\nolimits_{j=1}^K\tfrac{1}{\sigma_j^2} }{1+\mu_1^2\cdot\sum\nolimits_{j=1}^K\tfrac{1}{\sigma_j^2}}\right)\frac{1}{\sigma_\ell^2}=z_\ell\\
		&\Leftrightarrow\quad \sigma_k^2=\sigma_\ell^2
		\end{align*}
		for all $k,\ell=1,\ldots,K$. Hence, if $\sigma_1=\ldots=\sigma_{K-1}$ and $\sigma_K=\sigma_1+\varepsilon$, this condition is fulfilled if and only if $\varepsilon=0$.
		
		\item For the remaining proof of part b), observe that if $\mu_1=\ldots=\mu_K$ and $\sigma_1^2=\ldots=\sigma_K^2$, then 
		\[z=\nv{G}^{-1}\nv{1}_K=\frac{1}{\sigma_1^2}\cdot\nv{1}_K-\frac{(\tfrac{\mu_1}{\sigma_1^2})^2\cdot K}{1+\tfrac{\mu_1^2}{\sigma_1^2}K}\cdot\nv{1}_K, \]
		in particular: $z_1=\ldots=z_K$. Hence:
		\begin{align*}
		&\frac{1}{\nv{1}_K^\top z} z = \frac{1}{\nv{1}_K^\top q}q\quad\Leftrightarrow\quad \frac{1}{z_1\cdot K} z_1\cdot\nv{1}_K =\frac{1}{\nv{1}_K^\top q}q\\
		&\Leftrightarrow\quad \nv{1}_K\nv{1}_K^\top q=K\cdot q
		\end{align*}
		which is equivalent to $q$ being an eigenvector to the eigenvalue $K$ of the all-one matrix $\nv{1}_{K\times K}=\nv{1}_K\nv{1}_K^\top$, i.e., $q_1=\ldots=q_K$. Hence, if $q_1=\ldots=q_{K-1}$ and $q_K=q_1+\varepsilon$, this condition is equivalent to $\varepsilon=0$, which completes the proof.
		
	\end{itemize}

	\subsection{Proofs of Section \ref{sec:casestudies}}
	
	\subsubsection{Proof of Lemma \ref{lem:sigma2}}
	First, we derive the explicit formulas for $\nv Q^{2\times 2}_{11}$ and $d(\nv Q^{2\times 2})$ in the special case $N=K=2$, $q_1=q_2=b_1=b_2=x$. Note that, here, the unique f-efficient holding matrix is given by
	\begin{align*}
	\nv Q^{2\times2}=\frac{1}{v_2-v_1}\begin{pmatrix}
	v_2 q_1-y_1^* & y_1^*-v_1 q_1\\
	v_2 q_2-y_2^* & y_2^*-v_1 q_2\\
	\end{pmatrix}
	=\frac{x}{(v_2-v_1)(z_1+z_2)}\begin{pmatrix}
	{v_2 z_2-v_1 z_1} & {v_2 z_1-v_1 z_2}\\
	{v_2 z_1-v_1 z_2} &{v_2 z_2-v_1 z_1}\\
	\end{pmatrix},
	\end{align*}
	due to $y^*=x\cdot\tfrac{v_1+v_2}{z_1+z_2}\cdot z$. Since, $\nv Q^{\rm diversified}=\tfrac{x}{2}\cdot\nv{1}_{2\times 2}$, it holds that 
	\begin{align*}
	\nv Q^{2\times 2}_{11}-\nv Q^{\rm diversified}_{11}&=x\cdot\frac{2v_2 z_2-2v_1 z_1-(v_2-v_1)(z_1+z_2)}{2(v_2-v_1)(z_1+z_2)}=x\cdot\frac{v_2z_2-v_1z_1-v_2z_1+v_1z_2}{2(v_2-v_1)(z_1+z_2)}\\
	&=x\cdot\frac{(z_2-z_1)(v_1+v_2)}{2(v_2-v_1)(z_1+z_2)}=\nv Q^{2\times 2}_{22}-\nv Q^{\rm diversified}_{22},
	\end{align*}
	and
	\begin{align*}
	\nv Q^{2\times 2}_{12}-\nv Q^{\rm diversified}_{12}&=x\cdot\frac{2v_2 z_1-2v_1 z_2-(v_2-v_1)(z_1+z_2)}{2(v_2-v_1)(z_1+z_2)}=x\cdot\frac{v_2z_1-v_1z_2+v_1z_1-v_2z_2}{2(v_2-v_1)(z_1+z_2)}\\
	&=x\cdot\frac{(z_1-z_2)(v_1+v_2)}{2(v_2-v_1)(z_1+z_2)}=\nv Q^{2\times 2}_{21}-\nv Q^{\rm diversified}_{21}.
	\end{align*}
	Hence,
	\begin{align}\label{eq:dQvz}
	d(\nv Q^{2\times 2})=\|\nv Q^{2\times 2}-\nv Q^{\rm diversified}\|_F=\sqrt{4\cdot x^2\cdot\frac{(z_1-z_2)^2(v_1+v_2)^2}{4(v_2-v_1)^2(z_1+z_2)^2}}=x\cdot\sqrt{\frac{(z_1-z_2)^2(v_1+v_2)^2}{(v_2-v_1)^2(z_1+z_2)^2}}.
	\end{align}
	Moreover, a direct calculation of $z=\nv{G}^{-1}\nv{1}_K$ in the 2-by-2-case shows that
	\begin{align}\label{eq:Q2by2msv}
	\nv Q^{2\times2}_{11}=x\cdot\frac{v_2(\mu_1^2+\sigma_1^2-\mu_1\mu_2)-v_1(\mu_2^2+\sigma_2^2-\mu_1\mu_2)}{(v_2-v_1)(\mu_1^2+\mu_2^2-2\mu_1\mu_2+\sigma_1^2+\sigma_2^2)},
	\end{align}
	and that
	\begin{align}\label{eq:dQ2msv}
	d(\nv Q^{2\times 2})=x\cdot\sqrt{\frac{(\mu_1^2-\mu_2^2+\sigma_1^2-\sigma_2^2)^2(v_1+v_2)^2}{(v_2-v_1)^2(\mu_1^2+\mu_2^2-2\mu_1\mu_2+\sigma_1^2+\sigma_2^2)^2}}.
	\end{align}
	\begin{itemize}
		\item[a)] Under the conditions $q_1=q_2=x$ and $\mu_1=\mu_2$, it holds that $d(\nv Q^{2\times 2})=0$ as a function of $\sigma_1$, if and only if $\sigma_1^2=\sigma_2^2$ (cf. Corollary \ref{cor:diversification}), and it is strictly positive everywhere else. Hence, we can equivalently analyze the monotonicity behavior of $d(\nv Q^{2\times 2})^2$. According to \eqref{eq:dQ2msv}, its derivative with respect to $\sigma_1$ is given by
		\[\frac{\partial}{\partial\sigma_1} d(\nv Q^{2\times 2})^2 = \frac{8\sigma_1\sigma_2^2(\sigma_1^2-\sigma_2^2)(v_1+v_2)^2 x^2}{(\sigma_1^2+\sigma_2^2)^3(v_1-v_2)^2}\begin{cases}
		<0,&\quad\text{if $\sigma_1^2<\sigma_2^2$,}\\
		=0,&\quad \text{if $\sigma_1^2=\sigma_2^2$,}\\
		>0,&\quad \text{if $\sigma_1^2>\sigma_2^2$,}
		\end{cases} \]
		for $\sigma_1>0$ under the given assumptions, which proves the statement.
		
		\item[b)] Now assume $x>0$ and $v_2>v_1>0$. The derivative of \eqref{eq:Q2by2msv} with respect to $\sigma_1$ is given by
		\begin{align*}
		\frac{\partial \nv Q^{2\times 2}_{11}}{\partial \sigma_1}=x\cdot\frac{2\sigma_1(\mu_2^2+\sigma_2^2-\mu_1\mu_2)(v_1+v_2)}{(v_2-v_1)(\mu_1^2+\mu_2^2-2\mu_1\mu_2+\sigma_1^2+\sigma_2^2)^2}\overset{\mu_1=\mu_2}{=} x\cdot\frac{2\sigma_1\sigma_2^2(v_1+v_2)}{(\sigma_1^2+\sigma_2^2)^2(v_2-v_1)}>0,
		\end{align*}
		for $\sigma_1>0$ under the given assumptions.
		
	\end{itemize}
	\subsubsection{Proof of Lemma \ref{lem:mu}}
	\begin{itemize}
		\item[a)] As in the proof of Lemma \ref{lem:sigma2}, $d(\nv Q^{2\times 2})$ as a function of $\mu_2$ is non-negative and, under the given assumptions, strictly positive except for the case $\mu_2=\mu_1$ (cf. Corollary \ref{cor:diversification}). Hence, we can again equivalently analyze the monotonicity behavior of the squared distance $d(\nv Q^{2\times 2})^2$. Its derivative with respect to $\mu_2$ is given by
		\begin{align*}
		\frac{\partial}{\partial\mu_2}d(\nv Q^{2\times 2})^2&=x^2\cdot\frac{4(\mu_1^2-\mu_2^2+\sigma_1^2-\sigma_2^2)(-2\mu_2\sigma_1^2+\mu_1(\mu_1^2+\mu_2^2-2\mu_1\mu_2+\sigma_1^2-\sigma_2^2))(v_1+v_2)^2}{(\mu_1^2-2\mu_1\mu_2+\mu_2^2+\sigma_1^2+\sigma_2^2)^3(v_2-v_1)^2}\\
		&\overset{\sigma_1^2=\sigma_2^2,\mu_1=0}{=} x^2\cdot\frac{8\mu_2^3\sigma_1^2(v_1+v_2)^2}{(\mu_2^2+2\sigma_1^2)^3(v_2-v_1)^2}\begin{cases}
		<0,&\quad\text{if $\mu_2<0$,}\\
		=0,&\quad \text{if $\mu_2=0$,}\\
		>0,&\quad \text{if $\mu_2>0$,}
		\end{cases}
		\end{align*}
		which proves the lemma.
		
		\item[b)] The derivative of \eqref{eq:Q2by2msv} with respect to $\mu_2$ is given by
		\begin{align*}
		\frac{\partial}{\partial\mu_2}\nv Q^{2\times 2}_{11}&=x\cdot\frac{(-2\mu_2\sigma_1^2+\mu_1(\mu_1^2+\mu_2^2-2\mu_1\mu_2+\sigma_1^2-\sigma_2^2))(v_1+v_2)}{(\mu_1^2+\mu_2^2-2\mu_1\mu_2+\sigma_1^2+\sigma_2^2)^2(v_2-v_1)}\\
		&\overset{\sigma_1^2=\sigma_2^2,\mu_1=0}{=} x\cdot\frac{-2\mu_2\sigma_1^2(v_1+v_2)}{(\mu_2^2+2\sigma_1^2)^2(v_2-v_1)}\begin{cases}
		>0,&\quad\text{if $\mu_2<0$,}\\
		=0,&\quad \text{if $\mu_2=0$,}\\
		<0,&\quad \text{if $\mu_2>0$,}
		\end{cases}
		\end{align*}
		under the assumptions $v_2>v_1>0$, $x>0$, $\sigma_1^2=\sigma_2^2$ and $\mu_1=0$.

	\end{itemize}
	
	\subsubsection{Proof of Lemma \ref{lem:v}}
	The derivative of \eqref{eq:dQvz} with respect to $v_2$ is given as
	\begin{align*}
	\frac{\partial}{\partial v_2}d(\nv Q^{2\times 2})=x\cdot\frac{2v_1}{(v_1+v_2)(v_1-v_2)}\cdot\sqrt{\frac{(v_1+v_2)^2(z_1-z_2)^2}{(v_2-v_1)^2(z_1+z_2)^2}}.
	\end{align*}
	Under the given assumptions $x,v_1>0$, $|z_1|\not=|z_2|$, and $v_2\not= v_1$, this term is strictly positive for $v_2>0$, if $v_2<v_1$ and strictly negative if $v_2>v_1$. This is the statement of the lemma.
	
	\section{Discussion of Assumption \ref{ass:SmatrixInvertibleAndSpectralRadius}}\label{app:spectralradius}
	
	The spectral radius $\rho(\nv S)$ for non-negative $\nv S$ is bounded from above by (cf. \cite{capponi2015price} and \cite{horn1985matrix}, Corollary 8.1.29): 
	\begin{equation*}
		\rho(\nv S)\le \underset{k=1,\ldots,K}{\max}  \frac{\sum\nolimits_{i=1}^N \kappa^i\alpha^{ki} \sum\nolimits_{\ell=1}^K Q_0^{\ell i}}{\gamma^k Q^{k,\rm nb}}.
	\end{equation*}
	The spectral radius is small if the size of the nonbanking sector is large in comparison to the size of the leverage targeting banking sector, leverage targets are not too large and price elasticities are not too small.
	
	\section{Approximation Accuracy}\label{app:accuracy}
	The accuracy of the first order approximation to market capitalization depends on the spectral radius of the matrix $\nv S$, as stated in the following lemma. 
	
	\begin{lemma}\label{lem:normaccuracy}
		Assume that the spectral radius $\rho(\nv S)$ is smaller than one, as stated in Assumption \ref{ass:SmatrixInvertibleAndSpectralRadius}. For every $\varepsilon>0$ such that $\rho(\nv S)+\varepsilon<1$, there exists a matrix norm $\|\cdot\|$ such that $\| \nv S \| \le \rho(\nv S)+\varepsilon$ and
		$$
		\|(\nv I-\nv S)^{-1}-(\nv I+\nv S)\| \leq \frac{\|\nv S\|^2}{1-\|\nv S\|} \leq \frac{(\rho(\nv S)+\varepsilon)^2}{1-(\rho(\nv S)+\varepsilon)} .
		$$
		This implies
		$$\rho((\nv I-\nv S)^{-1}-(\nv I+\nv S)) \leq \frac{\rho(\nv S)^2}{1-\rho(\nv S)}.$$
	\end{lemma}
	
	\begin{proof}
		Let $\varepsilon>0$ such that $\rho(\nv S)+\varepsilon<1$. There exists a matrix norm $\|\cdot\|$ such that $\|\nv S\|\le \rho(\nv S)+\varepsilon<1$. It then holds that
		\begin{align*}
			\|(\nv I-\nv S)^{-1}-(\nv I+\nv S)\|=\|\sum_{j=2}^{\infty} \nv S^j\|&\le \sum_{j=2}^{\infty} \|\nv S^j\|\le \sum_{j=2}^{\infty} \|\nv S\|^j=\sum_{j=0}^{\infty} \|\nv S\|^j-(\|\nv I\|+\|\nv S\|)\\
			&=\frac{1}{1-\|\nv S\|}-(1+\|\nv S\|)=\frac{1}{1-\|\nv S\|}-\frac{(1+\|\nv S\|)(1-\|\nv S\|)}{1-\|\nv S\|}\\
			&=\frac{1-(1-\|\nv S\|^2)}{1-\|\nv S\|}=\frac{\|\nv S\|^2}{1-\|\nv S\|}\leq \frac{(\rho(\nv S)+\varepsilon)^2}{1-(\rho(\nv S)+\varepsilon)},
		\end{align*}
		which proves the first claim. The implication follows from, first, observing that the following inequality holds for any matrix norm $$\rho((\nv I-\nv S)^{-1}-(\nv I+\nv S))\leq \|(\nv I-\nv S)^{-1}-(\nv I+\nv S)\|$$ and, second, letting $\varepsilon$ on the right-hand side approach zero.
	\end{proof}
	
	Observe that the function $x^2/(1-x)$ is strictly increasing for $x\in[0,1)$ and converges to zero as $x\rightarrow 0$. Hence, the smaller the spectral radius $\rho(\nv S)$, the smaller the approximation error. If $\rho(\nv S)$ approaches zero, the approximation error converges monotonically to zero.

	{\Red
	\section{Correlated Shocks}\label{app:dependentshocks}
	
	Assumption~\ref{as:assets} is typically not satisfied for primary assets in the market. Nevertheless, starting from these primary assets, one can easily construct portfolios with normalized asset prices that are uncorrelated and span the same space of trading opportunities.  
	
	We denote the primary asset shocks by $\widetilde{Z}_1,\ldots,\widetilde{Z}_K$ with corresponding prices $P_{\widetilde{Z}_1},\ldots,P_{\widetilde{Z}_K}$, and set $\widetilde{Z}=(\widetilde{Z}_1,\ldots,\widetilde{Z}_K)^\top$, $P_{\widetilde{Z}} = (P_{\widetilde{Z}_1},\ldots,P_{\widetilde{Z}_K})^\top$. Since the covariance matrix ${\rm Cov}(\widetilde{Z})$ is a real symmetric matrix, there exists an orthogonal matrix $\nv{T}$ such that $\nv{T} {\rm Cov}(\widetilde{Z}) \nv{T}^\top = \nv D$ is diagonal. 
	
	If we now define a vector of new asset shocks by $U := \nv{T} \widetilde{Z}$, then the corresponding new assets span the same space of portfolios as the primary assets, but the asset shocks are now uncorrelated because ${\rm Cov} (U) =\nv D$. The prices of these assets are the components of the vector $P_U=\nv{T} P_{\widetilde{Z}}$. Finally, we construct assets with normalized prices and uncorrelated shocks $Z= (Z_1, \dots, Z_K)^\top$ via a componentwise normalization
	\[Z:=\frac{U}{P_{U}} =  {\rm Diag}(\tfrac{1}{P_U}) \nv{T}  \widetilde{Z} .\]
	Suppose that $\tilde q\in \bbr^K$ is a vector whose components are holdings in the primary assets with shocks $\tilde Z$. In our context, $\tilde q$ is placeholder for holdings of banks, nonbanks, and total holdings. The corresponding holdings in the newly constructed assets with uncorrelated shocks and normalized prices are equal to
	$$q = {\rm Diag}(\tfrac{1}{P_U}) \nv{T} \tilde q . $$
	
	All results of this paper apply to the new assets with uncorrelated shocks and normalized prices, if both liquidation strategies and illiquidity characteristics are given in terms of these assets. We also stress that our model assumes that the demand of nonbanks for any individual asset depends on price changes of this asset only, but not on price changes of other assets. In such new setting, one would need to assume that the nonbanking demand is decoupled across the newly constructed uncorrelated and normalized assets. 
}

	\section{Discussion of Assumption \ref{ass:v}}\label{app:ass:v}

		If $v_1=\ldots=v_N$, then it does not matter how each given asset is distributed across the banks because they all have the same systemic significance. As a result, $MSD(\nv Q)$ is constant for all $\nv Q\in\mathcal{D}$, taking into account the constraint $\nv Q\boldsymbol{1}=q$.
		The requirement that systemic significance is not identical across banks is satisfied by any economy, which is not fully homogeneous in terms of targeted leverage and trading strategy. Empirically, \cite{duarte2018fire}, see Table 4 therein, find substantial variation in banks' leverage targets, with a size-weighted average of 13.6, an equal-weighted average of 11.5, and a standard deviation of 3.9.\footnote{Their sample includes the largest 100 banks by assets every quarter, in a sample period from the third quarter of 1999 to the third quarter of 2016 at the quarterly frequency. They also find that $5\%$ and $95\%$ of the leverage target distribution are, respectively, 6.8 and 16.9, and that there is more cross-sectional than time-series variation.} 
		
		If $b^\top v=0$, then there must exist banks in the system which are short some of the assets. Then, the negative price pressure imposed by some banks in the system would be compensated by a positive price pressure created by other banks. In this case, diversification would be f-efficient, and lead to zero deviation of asset prices from fundamental values, i.e., $MSD(\nv Q^{\rm diversified})=(\frac{1}{T} q b^\top v)^\top \nv G (\frac{1}{T} q b^\top v)=0$. 
		In practice, however, $b^\top v>0$  because the budget and the systemic significance of any bank in the system are both positive. Banks are long their assets, including consumer loans, agency, non-agency securities, municipal securities, etc., see, again, Table 4 in \cite{duarte2018fire}. 
	
	\section{Distance from Diversification in the Case $N=K=3$}\label{app:NK3}
	
	In this section, we analyze numerically how our findings established in the case $N=K=2$ would change for a larger economy. As shown in Theorem \ref{thm:efficientholdings} b), if the number of banks is $N>2$, f-efficient holdings are no longer unique. In this case, we consider the f-efficient holdings whose Frobenius distance from diversification is minimal.
	\begin{figure}[h!] 
		\centering
		{\includegraphics[width=0.7\textwidth]{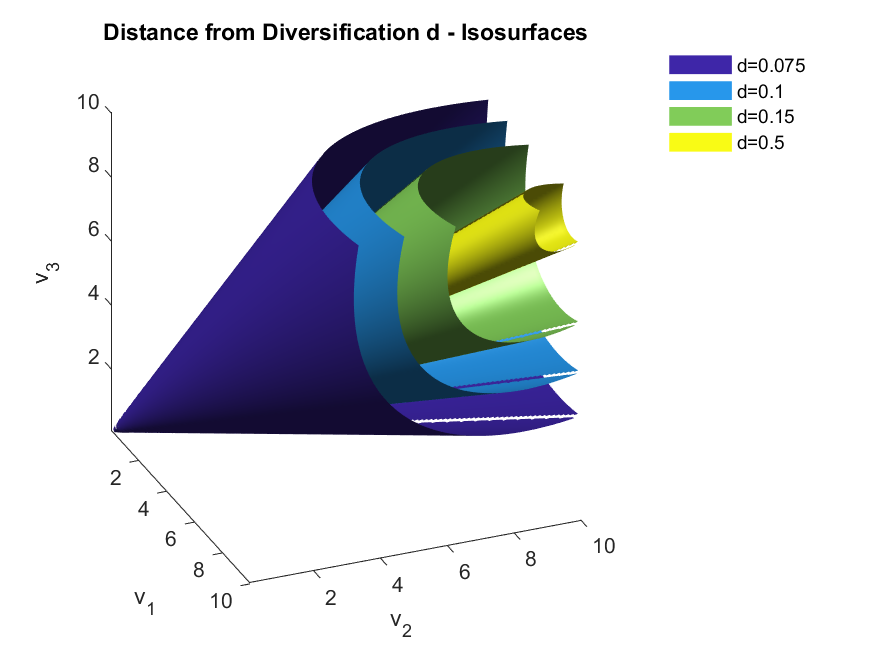}}\\
		\caption{Isosurface plot of the Frobenius distance $d(\nv Q^{\min})$ from diversification for the f-efficient holding matrix $\nv Q^{\min}$ with the smallest distance to diversification. We vary the systemic significance parameters $v_1, v_2$ and $v_3$, and keep fixed shock characteristics, i.e., $\mu=(0.1,0.125,0.15)^\top$ and $\sigma^2=(0.1,0.15625,0.225)^\top$. To ensure comparability with the results of Section \ref{subsec:NK2}, we choose $q_k=b_i=0.08$, $i,k=1,2,3$,  and normalize the total supply of each asset to 1.}
		\label{fig:ExampleN3}
	\end{figure} 
	Figure \ref{fig:ExampleN3} plots the minimal distance of f-efficient holdings from diversification for the case $N=K=3$. We observe that the qualitative findings remain similar to the setting $N=K=2$. The f-efficient holdings get farther away from a full diversification strategy if heterogeneity in banks' systemic significance decreases. The intuition behind the result remains unchanged, i.e., in a system where banks are systemically very close, a full diversification strategy for each bank may lead to larger price pressures because all banks rebalance their portfolios in a similar fashion to meet their leverage targets. 
	
	\section{Non-Uniqueness of Asset Holdings}\label{app:nonuniqueness}
	
	We provide an example to show how the interplay between systemic significance and asset riskiness influences the structure of f-efficient holdings. We also discuss the intuition behind the non-uniqueness of holdings when we move from an economy with $N=2$ banks to one with $N>2$ banks. Consider an economy with $N=3$ banks and $K=3$ assets, where we normalize the total supply of assets within the banking sector and budgets of banks to $q_1=q_2=q_3=b_1=b_2=b_3=x$ with $x:=0.08$. The total supply of each asset is normalized to 1. Asset 1 constitutes the least and asset 3 the most risky asset: $\mu=(0,0,0)^\top$ and $\sigma^2=(0.15,0.2,0.3)^\top$. The three banks are different in their systemic significance parameters $v=(0.15,0.1,0.05)^\top$, i.e., bank 1 is the most and bank 3 the least significant to the system. According to Theorem \ref{thm:efficientholdings} b), f-efficient holdings in this financial system are not unique. Every f-efficient holding matrix is of the form
	\begin{equation}\label{eq:ExNonuniqueSolutions}
	\nv Q^{*1}+\lambda_1\begin{pmatrix}
	1 & -2 & 1\\
	-1 & 2 & -1\\
	0 & 0 & 0
	\end{pmatrix}+\lambda_2\begin{pmatrix}
	1 & -2 & 1\\
	0 & 0 & 0\\
	-1 & 2 & -1
	\end{pmatrix},\quad\lambda_1,\lambda_2\in\mathbb{R},
	\end{equation} 
	where
	\[\nv{Q}^{*1}=x\cdot \begin{pmatrix}
	\frac{2}{3} & \frac{1}{3} & 0\\[0.2em]
	\frac{1}{3} & \frac{1}{3} & \frac{1}{3}\\[0.2em]
	0 & \frac{1}{3} & \frac{2}{3}
	\end{pmatrix}\]
	is a particular solution: the f-efficient holding matrix with the smallest Frobenius distance to fully diversified holdings $\nv{Q}^{\rm diversified}=x\cdot\tfrac{1}{3}\cdot\nv{1}_{3\times 3}$. Hence, $\nv Q^{*1}$ represents a lower bound on how far holdings need to move away from the classical diversification benchmark in order to become f-efficient. Setting $\lambda_1=x\cdot 1/3$ and $\lambda_2=-x\cdot1/6$ in equation \eqref{eq:ExNonuniqueSolutions}, we obtain a second particular solution:
	\[\nv{Q}^{*2}=x\cdot \begin{pmatrix}
	\frac{5}{6} & 0 & \frac{1}{6}\\[0.2em]
	0 & 1 & 0\\[0.2em]
	\frac{1}{6} & 0 & \frac{5}{6}
	\end{pmatrix},\]
	which represents those f-efficient holdings with the smallest distance to a fully diverse holding matrix $\nv Q^{\rm diverse}=x\cdot \nv I_{3}$. Note, first, that this type of diverse holdings are only defined in the case $N=K$. Second, observe that holdings with the smallest distance to diversity do not maximize the distance to diversification.\footnote{Since $\lambda_1,\lambda_2$ in formula \eqref{eq:ExNonuniqueSolutions} are unbounded, the distance to diversification within the set of f-efficient holdings is unbounded.}
	
	The example provides the following insights:
	\begin{itemize}
		\item In both $\nv Q^{*1}$ and $\nv Q^{*2}$, the most risky asset 3 is held in the largest proportion by the least systemically significant bank 3. Conversely, the least risky asset 1 is held in the largest proportion by the most significant bank 1. 
		
		This can be generalized as follows: For every f-efficient holding matrix (see \eqref{eq:ExNonuniqueSolutions}), the holdings of the most significant bank 1 in the least risky asset 1 ($x\cdot 2/3+\lambda_1+\lambda_2$) are larger than the holdings of the least significant bank 3 in this asset ($\lambda_1+\lambda_2$). Conversely, for every f-efficient matrix, the holdings of the least significant bank 3 in the most risky asset 3 ($x\cdot 2/3-\lambda_2$) are larger than bank 1's holdings in this asset ($-\lambda_2$).

		\item Note that the null space in Equation \eqref{eq:ExNonuniqueSolutions} is equivalently written as (cf. Remark \ref{rem:FullSolutionSpace} in E-Companion \ref{app:proofs}):
		\[\lambda_1\begin{pmatrix}
		-(v_2-v_3) & v_1-v_3 & -(v_1-v_2)\\
		v_2-v_3 & -(v_1-v_3) & v_1-v_2\\
		0 & 0 & 0
		\end{pmatrix}+\lambda_2\begin{pmatrix}
		-(v_2-v_3) & v_1-v_3 & -(v_1-v_2)\\
		0 & 0 & 0\\
		v_2-v_3 & -(v_1-v_3) & v_1-v_2
		\end{pmatrix}\]
		for $\lambda_1,\lambda_2\in\mathbb{R}$. Hence, any transfer between two assets in each bank's holdings that is done somewhat proportionally to the differences between the systemic significances (of the other two banks) does not alter the mean-squared deviation. By contrast, if there are only two banks in the system, there would be no \lq\lq other two banks\rq\rq, and thus, there exist no transfers which are neutral with respect to the mean-squared deviation. 
		
	\end{itemize}

	\section{f-Efficient Holdings in the Market Scenarios}\label{app:fefficientmarketholdings}
	
	In this section, we provide the f-efficient holdings and banks' systemic significances for each of the scenarios defined in Section \ref{subsec:marketsetting}.
	\begin{itemize}
		\item In the liquidity scenario (L), the systemic significances of the banks equal  $v_1^L\approx  1.23 < 1.37\approx v_2^L$. The second bank---tracking a higher leverage ratio---is systemically more significant than the first bank. f-efficient holdings are given by
		$$
		\nv{Q}^{*,L}\approx\begin{psmallmatrix}
			-3.79 &  -3.79 &   0.87  &  0.87  &  0.87  &  0.87  &  0.87 &   0.87 &   1.37  &  1.37\\
			3.87  &  3.87 &  -0.79  & -0.79 &  -0.79 &  -0.79 &  -0.79 &  -0.79 &  -1.29 &  -1.29
		\end{psmallmatrix}^\top.$$
		The systemically more significant bank 2 is endowed with a higher number of assets of class 1 (high elasticity, low variance); the least significant bank 1 holds a larger portion of the other assets (smaller elasticity, higher variance).

	\item In scenario (I), bank 2 is still systemically more significant than bank 1, i.e., $v_1^{I}\approx  2.91 < 3.23\approx v_2^{I}$; in comparison to scenario (B), both banks' systemic significances increase due to the increased illiquidity and shock variances of assets from group 3. The f-efficient holdings are given by
	$$\nv{Q}^{*,I}\approx\begin{psmallmatrix}
		-3.87  & -3.87  &  1.07 &   1.07  &  1.07  &  1.07  &  1.07 &   1.07  &  0.87 &   0.87\\
		3.95   & 3.95 &  -0.99 &  -0.99 &  -0.99 &  -0.99 &  -0.99 &  -0.99 &  -0.79 &  -0.79
	\end{psmallmatrix}^\top.$$
	Again, the higher the systemic significance of a bank, the lower its holdings of the safer asset relative to the riskier asset.
	
	\item In scenario (H), bank 2 remains systemically more significant than bank 1, with $v_1^{H}\approx 5.54 < 6.16\approx v_2^{H}$, and the significance parameters are higher than in the two other scenarios. The f-efficient holdings are
	$$\nv{Q}^{*,H}\approx\begin{psmallmatrix}
		-0.41 &  -0.41  &  0.10  &  0.10  &  0.10  &  0.10  &  0.10  &  0.10  &  0.31  &  0.31\\
		0.49  &  0.49  & -0.02 &  -0.02 &  -0.02 &  -0.02 &  -0.02 &  -0.02 &  -0.23  & -0.23
	\end{psmallmatrix}^\top.
	$$

	In this scenario, there is little heterogeneity in the riskiness of the assets, and high heterogeneity in banks' systemic significance. As a result, the f-efficient holdings are more evenly distributed, i.e., closer to full diversification (see also the values of the distances given in Table \ref{tab:statistics}).	
\end{itemize}

	\section{f-Efficient Liquidation Strategies}\label{app:liquidationstrategies}
	
	\subsection{General Derivation}
	We want to minimize the mean squared deviation \eqref{eq:MSE} as a function of 
	$$\nv \alpha =\begin{pmatrix}
	\alpha^1 \;\mid\; \cdots \; \mid \alpha^N
	\end{pmatrix}\in\mathbb{R}^{K\times N}, \;\text{for}\; \alpha^i\in\mathbb{R}^{K}\;\text{with }\boldsymbol{1}_K^\top \alpha^i=1\text{ and }\alpha^i\ge\boldsymbol{0}_K\;\text{for all}\; i=1,\ldots,N.$$
	Thus, each bank $i$ is allowed to choose its own personal liquidation strategy $\alpha^i$ and we do not allow for short-selling. In the following lemma, we rewrite the minimization problem as a function of 
	$${\rm vec}(\nv \alpha):=(\alpha^{1\top},\ldots,\alpha^{N\top})^\top\in\mathbb{R}^{KN},$$ i.e., ${\rm vec}(\nv \alpha)$ denotes the vectorization of the matrix $\nv \alpha$. 
	
	\begin{lemma}\label{lem:liquidationVec}
		Minimizing the mean squared deviation as a function of the liquidation strategy matrix $\nv \alpha$ with $\nv \alpha\ge \boldsymbol{0}$ is equivalent to the following problem:
		\begin{align*}\tag{G}
		&\underset{{\rm vec}(\nv \alpha)\in\mathbb{R}^{KN}}{\min} \quad \tfrac{1}{2}\,{{\rm vec}(\nv \alpha)^\top (\nv C \otimes \tfrac{Q_{\rm tot}}{\gamma\circ Q_0^{\rm nb}}(\tfrac{Q_{\rm tot}}{\gamma\circ Q_0^{\rm nb}})^\top ) {\rm vec}(\nv \alpha) } \\[1em]
		&\text{s.t.}\qquad \begin{pmatrix}
		\boldsymbol{1}_K^\top & & 0\\
		& \ddots &\\
		0 & & \boldsymbol{1}_K^\top\\
		\end{pmatrix} {\rm vec}(\nv \alpha)=\boldsymbol{1}_N,\quad {\rm vec}(\nv \alpha)\ge \boldsymbol{0}_{KN},
		\end{align*}
		where $\nv C=(C^{ij})_{i,j=1,\ldots,N}\in\mathbb{R}^{N\times N}$ with $C^{ij}:=2 (\nv Q{\rm Diag}(\kappa)e^i)^\top (\mu\mu^\top+{\rm Diag}(\sigma^2)) (\nv Q{\rm Diag}(\kappa)e^j)$, $e^i\in\mathbb{R}^{N}$ denotes the $i$'th basis vector (i.e., $e^i_j=1$ for $j=i$ and zero, otherwise), and $\otimes$ denotes the Kronecker product\footnote{For two matrices $\nv A\in\bbr^{M\times N}, \nv B \in\bbr^{P\times R}$, the Kronecker product is defined by multiplying every entry of the matrix $\nv A$ by the entire matrix $\nv B$, i.e., $$\nv A\otimes \nv B:=\begin{pmatrix}
			A^{11} \nv B &\cdots& A^{1N}\nv B\\
			\vdots&\ddots &\vdots\\
			A^{M1}\nv B &\cdots & A^{MN}\nv B\\
			\end{pmatrix}\in\bbr^{MP\times NR}.$$}.
	\end{lemma}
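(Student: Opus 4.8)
The plan is to recognize that this lemma is a purely algebraic restatement: the map $\nv\alpha\mapsto MSD$ is a quadratic form in the entries of $\nv\alpha$, and all that is required is to track how the objective \eqref{eq:MSE} and the constraints transform under vectorization. I would begin by abbreviating the fixed vector $w:=\tfrac{Q_{\rm tot}}{\gamma\circ Q_0^{\rm nb}}\in\bbr^K$, so that the definition \eqref{eq:Definitionv} of systemic significance reads $v=\Diag(\kappa)\nv\alpha^\top w$, i.e.\ $v^i=\kappa^i(\alpha^i)^\top w$ for each bank $i$. Substituting this into the network multiplier gives $\nv Q v=\sum_{i=1}^N\kappa^i\big((\alpha^i)^\top w\big)\,Q^i$, where $Q^i:=\nv Q e^i$ denotes the $i$-th column of $\nv Q$.

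Next I would expand the objective $MSD=(\nv Q v)^\top\nv G(\nv Q v)$ into the double sum
$$MSD=\sum_{i,j=1}^N\kappa^i\kappa^j\big((\alpha^i)^\top w\big)\big((\alpha^j)^\top w\big)\,(Q^i)^\top\nv G\,Q^j,$$
and then match the coefficients against the definition of $\nv C$. Since $\nv Q\Diag(\kappa)e^i=\kappa^i Q^i$, the entry $C^{ij}=2\,(\nv Q\Diag(\kappa)e^i)^\top\nv G\,(\nv Q\Diag(\kappa)e^j)$ equals $2\kappa^i\kappa^j(Q^i)^\top\nv G\,Q^j$, so the displayed double sum is exactly $\tfrac12\sum_{i,j}C^{ij}\big((\alpha^i)^\top w\big)\big((\alpha^j)^\top w\big)$; the factor $2$ baked into $C^{ij}$ is what the prefactor $\tfrac12$ in problem (G) will cancel.

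The remaining step is to reorganize this scalar double sum into Kronecker form. The block identity states that $\nv C\otimes ww^\top$ has $(i,j)$-block equal to $C^{ij}ww^\top\in\bbr^{K\times K}$, so with ${\rm vec}(\nv\alpha)=(\alpha^{1\top},\ldots,\alpha^{N\top})^\top$ one has ${\rm vec}(\nv\alpha)^\top(\nv C\otimes ww^\top){\rm vec}(\nv\alpha)=\sum_{i,j}C^{ij}(\alpha^i)^\top ww^\top\alpha^j=\sum_{i,j}C^{ij}\big((\alpha^i)^\top w\big)\big((\alpha^j)^\top w\big)=2\,MSD$. Multiplying by $\tfrac12$ recovers $MSD$ verbatim, and replacing $w$ by $\tfrac{Q_{\rm tot}}{\gamma\circ Q_0^{\rm nb}}$ reproduces the matrix $\nv C\otimes\tfrac{Q_{\rm tot}}{\gamma\circ Q_0^{\rm nb}}\big(\tfrac{Q_{\rm tot}}{\gamma\circ Q_0^{\rm nb}}\big)^\top$ as written in (G). I would close by translating the constraints: the normalizations $\boldsymbol{1}_K^\top\alpha^i=1$ stack into the block-diagonal matrix of $\boldsymbol{1}_K^\top$-rows applied to ${\rm vec}(\nv\alpha)$ equalling $\boldsymbol{1}_N$, and the no-short-selling condition $\nv\alpha\ge\boldsymbol{0}$ becomes ${\rm vec}(\nv\alpha)\ge\boldsymbol{0}_{KN}$ componentwise.

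There is no genuine obstacle here beyond bookkeeping: the entire content lies in correctly lining up the column-stacking convention of ${\rm vec}$ with the block structure of the Kronecker product. The only points that warrant care are the placement of the factor $2$ in $C^{ij}$ (absorbed by the $\tfrac12$ prefactor) and the bilinearity step $(\alpha^i)^\top ww^\top\alpha^j=\big((\alpha^i)^\top w\big)\big((\alpha^j)^\top w\big)$, which is where the rank-one structure $ww^\top$ makes the identification transparent.
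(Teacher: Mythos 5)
Your proposal is correct and follows essentially the same route as the paper's proof: substitute $v=\Diag(\kappa)\nv\alpha^\top w$ into $MSD=(\nv Qv)^\top\nv G(\nv Qv)$, expand into the double sum over banks, identify the coefficients with $C^{ij}$ (absorbing the factor $2$ via the $\tfrac12$ prefactor), and reassemble via the rank-one Kronecker block structure, with the constraints translating directly under vectorization. No gaps.
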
 
	\begin{proof}
		It holds that $\nv Q{\rm Diag}(\kappa)\nv \alpha^\top\tfrac{Q_{\rm tot}}{\gamma\circ Q_0^{\rm nb}}=\nv Q{\rm Diag}(\kappa)\cdot(\sum\limits_{i=1}^{N} ({\alpha^i}^\top \tfrac{Q_{\rm tot}}{\gamma\circ Q_0^{\rm nb}}) e^i
		).$ Inserting this expression into formula \eqref{eq:MSE} yields
		\begin{align*}
		& MSD(\alpha^1,\ldots,\alpha^N)=\\
		&\left((\sum\limits_{i=1}^{N} ({\alpha^i}^\top \tfrac{Q_{\rm tot}}{\gamma\circ Q_0^{\rm nb}}) )\nv Q{\rm Diag}(\kappa)e^i\right)^\top (\mu\mu^\top+{\rm Diag}(\sigma^2))\left((\sum\limits_{j=1}^{N} ({\alpha^j}^\top \tfrac{Q_{\rm tot}}{\gamma\circ Q_0^{\rm nb}}) )\nv Q{\rm Diag}(\kappa)e^j\right)\\
		=&\sum\limits_{i=1}^{N}\sum\limits_{j=1}^{N} \left(({\alpha^i}^\top \tfrac{Q_{\rm tot}}{\gamma\circ Q_0^{\rm nb}}) \nv Q{\rm Diag}(\kappa)e^i\right)^\top (\mu\mu^\top+{\rm Diag}(\sigma^2)) \left(({\alpha^j}^\top \tfrac{Q_{\rm tot}}{\gamma\circ Q_0^{\rm nb}}) \nv Q{\rm Diag}(\kappa)e^j\right)\\
		=& \sum\limits_{i=1}^{N}\sum\limits_{j=1}^{N} {\alpha^i}^\top \tfrac{Q_{\rm tot}}{\gamma\circ Q_0^{\rm nb}}(\tfrac{Q_{\rm tot}}{\gamma\circ Q_0^{\rm nb}})^\top {\alpha^j} (\nv Q{\rm Diag}(\kappa)e^i)^\top (\mu\mu^\top+{\rm Diag}(\sigma^2)) (\nv Q{\rm Diag}(\kappa)e^j)\\
		=& \frac{1}{2} \sum\limits_{i=1}^{N}\sum\limits_{j=1}^{N} {\alpha^i}^\top (C^{ij}\cdot\tfrac{Q_{\rm tot}}{\gamma\circ Q_0^{\rm nb}}(\tfrac{Q_{\rm tot}}{\gamma\circ Q_0^{\rm nb}})^\top ){\alpha^j}.
		\end{align*} 
		This proves the formula for the objective function. The linear constraint follows from ${\boldsymbol{1}_K^\top \alpha^i=1}$ for all $i=1,\ldots,N$.
	\end{proof}
	
	The following proposition now provides a locally f-efficient liquidation strategy. Its interpretation is given in Remark \ref{rem:mostliquidstrategy}.
	
	\begin{proposition}\label{prop:liqstrategy}
		Let $m:={\max}_{k\in\{1,\ldots,K\}} \tfrac{\gamma^k Q_0^{k,\rm nb}}{Q^k_{\rm tot}} $ denote the maximum entry in the vector $\tfrac{\gamma\circ Q_0^{\rm nb}}{Q_{\rm tot}}$ and denote by
		$$k_m=\{k\in\{1,\ldots,K\}\mid \tfrac{\gamma^k Q_0^{k,\rm nb}}{Q^k_{\rm tot}}=m\}$$ the corresponding index set with cardinality $\# k_m$. For every fixed ${\nv Q\in\mathbb{R}^{K\times N}}$, a local minimizer of the mean squared deviation as a function of the liquidation strategy matrix is given by the matrix $\nv{\alpha}^*$ which is defined by its columns:
		\[ {\alpha^{ik}}^*:=\begin{cases}
		\frac{1}{\# k_m}, &\text{if $k\in k_m$,}\\
		0, & \text{otherwise,}
		\end{cases} \quad (k=1,\ldots,K,\quad i=1,\ldots,N).\]
	\end{proposition}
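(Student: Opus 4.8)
The plan is to treat the program (G) of Lemma~\ref{lem:liquidationVec} as a convex quadratic program over a product of simplices and to exhibit $\nv\alpha^*$ as a KKT point. First I would recast the objective in a more transparent form. Writing $w:=\tfrac{Q_{\rm tot}}{\gamma\circ Q_0^{\rm nb}}$ and $s_i:=(\alpha^i)^\top w$ for the $w$-weighted mass chosen by bank $i$, the computation carried out in the proof of Lemma~\ref{lem:liquidationVec} shows that, with $s=(s_1,\dots,s_N)^\top$,
\[
g(\nv\alpha)=\Big(\sum_{i=1}^N \kappa^i s_i Q_0^i\Big)^\top \nv G \Big(\sum_{j=1}^N \kappa^j s_j Q_0^j\Big)=\big(\nv Q\,{\rm Diag}(\kappa)\,s\big)^\top \nv G\,\big(\nv Q\,{\rm Diag}(\kappa)\,s\big).
\]
Since $\nv G$ is symmetric positive definite (as noted in the proof of Lemma~\ref{lem:aggregateholdings}) and $s$ is a linear function of $\nv\alpha$, $g$ is a convex quadratic; together with the feasible set being the product of $N$ unit simplices (a polytope), this makes any KKT point a global, a fortiori local, minimizer. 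So it suffices to verify the first-order conditions at $\nv\alpha^*$.

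The key structural observation I would exploit is that the gradient factorizes. Differentiating $g$ and using $\partial s_i/\partial\alpha^{ki}=w_k$ gives
\[
\frac{\partial g}{\partial \alpha^{ki}}=2\kappa^i w_k\,(Q_0^i)^\top \nv G\Big(\sum_{j=1}^N\kappa^j s_j Q_0^j\Big)=w_k\, d_i,\qquad d_i:=2\kappa^i (Q_0^i)^\top \nv G\,\big(\nv Q\,{\rm Diag}(\kappa)\,s\big).
\]
Thus, for each bank $i$, the marginal cost of placing weight on asset $k$ separates into a bank-level scalar $d_i$ times the asset-level weight $w_k$. Heuristically, whenever $d_i\ge 0$ the best response of bank $i$ concentrates its weight on the assets with the smallest $w_k$, which are exactly the indices in $k_m$ (recall that $m=\max_k \tfrac{\gamma_k Q_0^{k,\rm nb}}{Q_{\rm tot}^k}$ selects the largest $1/w_k$, i.e.\ the smallest $w_k$); the uniform spread $\nv\alpha^*$ over $k_m$ is one such minimizer.

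To turn this into a rigorous KKT check, I would evaluate $d_i$ at $\nv\alpha^*$. There every $s_i$ equals $w_{\min}:=\min_k w_k=1/m$, so $s^*=w_{\min}\boldsymbol 1_N$ and $\nv Q\,{\rm Diag}(\kappa)\,s^*=w_{\min}\,\nv Q\kappa$, whence $d_i=2\kappa^i w_{\min}(Q_0^i)^\top \nv G\,\nv Q\kappa$. Taking the equality multiplier $\lambda_i:=w_{\min} d_i$ for $\boldsymbol 1_K^\top\alpha^i=1$ and the inequality multipliers $\nu_{ki}:=(w_k-w_{\min})d_i$ for $\alpha^{ki}\ge 0$, the stationarity identity $\partial g/\partial\alpha^{ki}=\lambda_i+\nu_{ki}$ holds for all $k,i$, and complementary slackness holds because $\nu_{ki}=0$ on the support $k\in k_m$ (where $w_k=w_{\min}$) while $\alpha^{ki}=0$ off it. Dual feasibility $\nu_{ki}\ge 0$ then reduces to the single sign condition $d_i\ge 0$ for every $i$.

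The main obstacle is precisely this sign condition, $(Q_0^i)^\top \nv G\,\nv Q\kappa\ge 0$ for all $i$. It is not automatic for arbitrary $\nv Q\in\bbr^{K\times N}$: with mixed long/short positions one can strictly lower $g$ by unbalancing the $s_i$, so the corner need not even be a local minimizer. It does hold in the economically relevant regime where holdings are long ($Q_0^i\ge \boldsymbol 0$) and the shock means share a common sign ($\mu\ge\boldsymbol 0$): then $\nv G=\mu\mu^\top+{\rm Diag}(\sigma^2)$ is entrywise nonnegative, $\nv Q\kappa=\sum_{j}\kappa^j Q_0^j\ge\boldsymbol 0$, and hence $(Q_0^i)^\top \nv G\,\nv Q\kappa\ge 0$. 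Under this positivity the KKT system above is satisfied, and convexity upgrades $\nv\alpha^*$ to a global—and therefore local—minimizer, completing the argument.
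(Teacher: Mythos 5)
Your proposal follows the same basic strategy as the paper's proof---exhibiting $\nv\alpha^*$ as a KKT point of the quadratic program (G) from Lemma~\ref{lem:liquidationVec}---but it differs in two substantive ways, both to your advantage. First, where the paper verifies first-order conditions and then invokes a second-order sufficiency theorem to conclude only a \emph{local} minimum, you observe that the objective is the composition of the linear map $\nv\alpha\mapsto s=\bigl(w^\top\alpha^1,\dots,w^\top\alpha^N\bigr)^\top$ (with your $w=\tfrac{Q_{\rm tot}}{\gamma\circ Q_0^{\rm nb}}$) with the convex quadratic $s\mapsto(\nv Q\Diag(\kappa)s)^\top\nv G\,(\nv Q\Diag(\kappa)s)$, hence convex, so any KKT point is automatically a global minimizer; this is cleaner and strictly stronger. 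Second, and more importantly, you have isolated a hypothesis that the paper's proof uses silently. The paper's inequality multipliers are $s^{i*}=\tfrac{\sum_j C^{ij}}{m}\bigl(w-\tfrac{1}{m}\nv 1_K\bigr)$ with $w-\tfrac{1}{m}\nv 1_K\ge\nv 0$ componentwise, so dual feasibility $s^{i*}\ge\nv 0$ holds if and only if $\sum_j C^{ij}=2\kappa^i(Q_0^i)^\top\nv G\,\nv Q\kappa\ge 0$, which is exactly your condition $d_i\ge 0$. For an arbitrary $\nv Q\in\bbr^{K\times N}$ this can fail, and your remark that the feasible direction shifting bank $i$'s weight from $k_m$ to a less liquid asset $k'$ then has strictly negative directional derivative $(w_{k'}-w_{\min})d_i<0$ shows the proposed corner is in that case not even a local minimizer. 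So the proposition as stated for ``every fixed $\nv Q\in\bbr^{K\times N}$'' genuinely requires the sign condition, or a positivity assumption such as the one you give ($Q_0^i\ge\nv 0$, $\kappa>0$, $\mu\ge\nv 0$, under which $\nv G$ is entrywise nonnegative and $d_i\ge 0$). This is a real gap in the paper's ``easily checked'' KKT verification that your write-up correctly surfaces and repairs; aside from that, your argument is complete.
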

	
	\begin{proof}
		It is easily checked that the triplet $({\rm vec}(\nv \alpha^*),\lambda^*,s^*)$, defined as follows, solves the KKT conditions belonging to the optimization problem (G): ${\rm vec}(\nv \alpha^*)=({{\alpha^1}^*}^\top,\ldots,{{\alpha^N}^*}^\top)^\top$ as defined in Proposition \ref{prop:liqstrategy}, $\lambda^*=(\lambda^*_1,\ldots,\lambda_N^*)\in\mathbb{R}^N$, and $s^*=({s^1}^*,\ldots,{s^N}^*)\in\mathbb{R}^{KN}$, where
		$$\lambda_i^*=\tfrac{\sum\nolimits_{j=1}^{N}C^{ij}}{m^2}, \quad {s^i}^*=\tfrac{\sum\nolimits_{j=1}^{N}C^{ij}}{m}(\tfrac{Q_{\rm tot}}{\gamma\circ Q_0^{\rm nb}}-\tfrac{1}{m}\boldsymbol{1}_K),\quad\text{for all}\; i=1,\ldots,N.$$
		Next, we need to check f-efficiency. Let $g_{ik}(\nv\alpha):=-\alpha^{ik}$, and $h_i(\nv\alpha):=\sum\nolimits_{k=1}^K \alpha^{ik}-1$ describe the non-negativity and linear conditions of the optimization problem (G), i.e., the conditions translate into $g_{ik}(\nv\alpha)\le 0$ and $h_i(\nv\alpha)=0$ for all $i=1,\ldots,N$ and $k=1,\ldots,K$. Let $d=(d_1^\top,\ldots,d_N^\top)\in\mathbb{R}^{KN}$, where $d_i\in\mathbb{R}^K$ for all $i=1,\ldots,N$ and define 
		{\small\begin{align*}
			\mathcal{F}(\nv\alpha)&:=\{d\not=\nv 0 \mid \nabla g_{ik}(\nv\alpha)^\top d \begin{cases}
			= 0, \;k\notin k_m,\\
			\le 0, \;k\in k_m,\\
			\end{cases}\hspace{-0.3cm}\text{and} \; \nabla h_i(\nv \alpha)^\top d = 0,\;\forall i\in\{1,\ldots,N\},\;\forall k\in\{1,\ldots,K\} \}.
			\end{align*}}
		Applying the second order sufficiency conditions (cf. Theorem 5.2 in \cite{freund2016optimality}), the KKT point $({\rm vec}(\nv \alpha^*),\lambda^*,s^*)$ constitutes a local minimum if for all $d\in\mathcal{F}(\nv \alpha^*)$ it holds that $d^\top (\nv C\otimes (\tfrac{Q_{\rm tot}}{\gamma\circ Q_0^{\rm nb}})(\tfrac{Q_{\rm tot}}{\gamma\circ Q_0^{\rm nb}})^\top) d>0$. We have
		{\small\begin{align*}
			\mathcal{F}(\nv\alpha)&=\{d\not=\nv 0 \mid d_{ik} \begin{cases}
			= 0,\; k\notin k_m,\\
			\ge 0, \; k\in k_m,
			\end{cases} \hspace{-0.3cm}  \text{and} \; \sum_{k=1}^K d_{ik} = 0, \;\forall i\in\{1,\ldots,N\} \}=\emptyset\\
			\end{align*}}
		for all $\nv\alpha$ and, hence, the second order sufficiency condition is always fulfilled. Thus, $({\rm vec}(\nv \alpha^*),\lambda^*,s^*)$ constitutes a local minimum.
	\end{proof}
	
	\begin{remark}\label{rem:mostliquidstrategy}
		\begin{itemize}
			\item Note that we may characterize liquidity of asset $k$ by its product of elasticity and supply in the nonbanking sector weighted by total supply, i.e., by ${\gamma^k\cdot Q_0^{k,\rm nb}}/{Q^k_{\rm tot}}$. Proposition \ref{prop:liqstrategy} thus shows that the f-efficient liquidation strategy of banks is given by selling solely the most liquid asset. We will refer to this f-efficient strategy as the \textbf{most-liquid-strategy}.
			\item Note that the most-liquid-strategy depends on (the row sums of) the holding matrix $\nv Q$ since it holds that $Q_0^{k,\rm nb}=Q^k_{\rm tot}-\sum\nolimits_{i=1}^{N} Q^{ki}$ for all assets $k=1,\ldots,K$.
			\item In the special case that we ex-ante assume that all banks follow the same liquidation strategy, the most-liquid-strategy even constitutes a global minimizer of the mean squared deviation. This example is analyzed in E-Companion \ref{app:liqstrategyexample}.
		\end{itemize}
	\end{remark}

\subsection{Liquidation Strategy Example}\label{app:liqstrategyexample}

In this case study, we ex-ante assume that all banks act homogeneously in that they liquidate their portfolios in the exact same way. This assumption leads to the following structure of the liquidation strategy matrix:
\[\nv\alpha=\left(\widetilde{\alpha}\; | \cdots |\; \widetilde{\alpha} \right), \]
for a vector $\widetilde{\alpha}\in\mathbb{R}^{K}$, with ${\nv 1}^\top \widetilde{\alpha}=1$, specifying the banks' liquidation of each asset. 
For the mean squared deviation, this leads to the equation
\[ MSD(\widetilde{\alpha},\nv Q)=(\widetilde{\alpha}^\top \tfrac{Q_{\rm tot}}{\gamma\circ Q_0^{\rm 
		nb}})^2\cdot (\nv Q \kappa)^\top (\mu\mu^\top+{\rm Diag}(\sigma^2))(\nv Q \kappa). \]
For a fixed given holding matrix $\nv Q$, we define an \textit{f-efficient bank-independent liquidation strategy} $\widetilde{\alpha}$ as a minimizer of $MSD(\cdot,\nv Q)$ over all $\widetilde{\alpha}\in\mathbb{R}_{\ge 0}^K$ with ${\nv 1}^\top \widetilde{\alpha}=1$. We have the following result.
\begin{proposition}\label{prop:BIalpha}
	For every fixed ${\nv Q\in\mathbb{R}^{K\times N}}$, the most-liquid-strategy constitutes a globally f-efficient bank-independent liquidation strategy.
\end{proposition}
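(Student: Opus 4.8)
The plan is to exploit the factorized form of the objective, which cleanly separates the dependence on the liquidation vector $\widetilde{\alpha}$ from the dependence on the fixed holding matrix $\nv Q$. Writing $w:=\tfrac{Q_{\rm tot}}{\gamma\circ Q_0^{\rm nb}}\in\bbr^K$ for the componentwise ratio vector and $c:=(\nv Q\kappa)^\top\nv G(\nv Q\kappa)$ for the scalar prefactor, the displayed expression for the mean squared deviation reads $MSD(\widetilde{\alpha},\nv Q)=c\cdot(\widetilde{\alpha}^\top w)^2$. The first step is to record that $c$ is a nonnegative constant, independent of $\widetilde{\alpha}$: since $\nv G=\mu\mu^\top+\Diag(\sigma^2)$ is symmetric positive definite (as already used in the proof of Lemma~\ref{lem:aggregateholdings}), we have $c\ge 0$. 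If $c=0$ then $MSD(\cdot,\nv Q)\equiv 0$, so every feasible $\widetilde{\alpha}$, and in particular the most-liquid-strategy, is trivially a global minimizer; hence we may assume $c>0$ for the remainder.

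Next I would reduce the quadratic minimization to a linear one. Every entry of $w$ is strictly positive, and $\widetilde{\alpha}$ ranges over the probability simplex $\{\widetilde{\alpha}\ge\nv 0_K:\ \nv 1_K^\top\widetilde{\alpha}=1\}$, so $\widetilde{\alpha}^\top w$ is a convex combination of the positive numbers $w_k$ and is therefore itself strictly positive. Because the squaring map is strictly increasing on the positive half-line, minimizing $(\widetilde{\alpha}^\top w)^2$ over the simplex is equivalent to minimizing the linear functional $\widetilde{\alpha}\mapsto\widetilde{\alpha}^\top w$ over the simplex.

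The third step is the elementary fact that a linear functional over the simplex attains its minimum by placing all mass on the coordinates where the coefficient is smallest, with minimal value $\min_k w_k$. Observing that $w_k=\big(\gamma_k Q_0^{k,\rm nb}/Q^k_{\rm tot}\big)^{-1}$, the entries of $w$ are smallest precisely on the index set $k_m$ where $\gamma_k Q_0^{k,\rm nb}/Q^k_{\rm tot}$ is maximal, and the minimal value is $1/m$. The most-liquid-strategy $\widetilde{\alpha}^*$ places weight $1/|k_m|$ on each index in $k_m$ and zero elsewhere; it is feasible and yields $\widetilde{\alpha}^{*\top}w=1/m$, hence attains the global minimum. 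This shows that $\widetilde{\alpha}^*$ is a globally f-efficient bank-independent liquidation strategy, which is exactly the claim.

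There is no genuine obstacle here: the entire content is carried by the factorization $MSD(\widetilde{\alpha},\nv Q)=c\,(\widetilde{\alpha}^\top w)^2$, which collapses the problem to minimizing a monotone transform of a convex combination of positive weights. The only points needing a line of justification are the nonnegativity of $c$ (to legitimize discarding it) and the strict positivity of $\widetilde{\alpha}^\top w$ (to legitimize replacing the square by the linear functional); both follow at once from positive definiteness of $\nv G$ and from $w>\nv 0_K$, respectively. I note that this argument upgrades the merely \emph{local} optimality obtained in Proposition~\ref{prop:liqstrategy} to global optimality, the gain coming precisely from the simpler, rank-one structure of the bank-independent objective.
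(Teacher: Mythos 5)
Your proof is correct, and it takes a genuinely different route from the paper's. The paper proceeds via the KKT conditions of the problem (BI): it exhibits explicit multipliers $\lambda^*=1/m^2$ and $s^*=\tfrac{1}{m}\bigl(\tfrac{Q_{\rm tot}}{\gamma\circ Q_0^{\rm nb}}-\tfrac{1}{m}\boldsymbol{1}\bigr)$, verifies that the most-liquid-strategy is a KKT point, and then upgrades to global optimality by noting that the objective is convex because the rank-one matrix $\tfrac{Q_{\rm tot}}{\gamma\circ Q_0^{\rm nb}}(\tfrac{Q_{\rm tot}}{\gamma\circ Q_0^{\rm nb}})^\top$ is positive semidefinite. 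You instead strip away the KKT machinery entirely: after discarding the nonnegative prefactor $c=(\nv Q\kappa)^\top\nv G(\nv Q\kappa)$, you observe that $\widetilde{\alpha}^\top w>0$ on the simplex, replace the square by the linear functional $\widetilde{\alpha}\mapsto\widetilde{\alpha}^\top w$ via strict monotonicity, and conclude by the elementary fact that a linear functional on the simplex is minimized by concentrating mass on the smallest coefficients. Note that both arguments ultimately hinge on the same inequality --- the paper's dual feasibility condition $s^*\ge\boldsymbol{0}$ is precisely the statement that $1/m$ is the smallest entry of $w$, which is the fact driving your third step --- but your version is shorter, avoids verifying stationarity and complementary slackness, and as a bonus characterizes the \emph{entire} set of global minimizers as the distributions supported on $k_m$, of which the uniform one in the proposition is just a representative. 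The only hypotheses you use beyond the paper's are the strict positivity of the entries of $w$ (implicit throughout the paper, since $\gamma_k$, $Q_0^{k,\rm nb}$, and $Q^k_{\rm tot}$ all appear in denominators or as positive quantities) and the positive semidefiniteness of $\nv G$, both of which are already established.
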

\begin{proof}
	Minimizing $MSD(\widetilde{\alpha},\nv Q)$ for a fixed $\nv Q$ over all $\widetilde{\alpha}\in\mathbb{R}_{\ge 0}^K$ with ${\nv 1}^\top \widetilde{\alpha}=1$ is equivalent to
	\begin{align*}
	&\underset{\widetilde{\alpha}\in\mathbb{R}^K}{\min} \quad \tfrac{1}{2}{\widetilde{\alpha}^\top \tfrac{Q_{\rm tot}}{\gamma\circ Q_0^{\rm nb}}(\tfrac{Q_{\rm tot}}{\gamma\circ Q_0^{\rm nb}})^\top \widetilde{\alpha} } \\\tag{BI}
	&\text{s.t.}\qquad {\nv 1}^\top \widetilde{\alpha}=1,\quad \widetilde{\alpha}\ge \boldsymbol{0}.
	\end{align*}
	The KKT conditions for this optimization problem read
	\begin{align*}
	&{\nv 1}^\top {\widetilde{\alpha}}=1,\quad
	\tfrac{Q_{\rm tot}}{\gamma\circ Q_0^{\rm nb}}(\tfrac{Q_{\rm tot}}{\gamma\circ Q_0^{\rm nb}})^\top  {\widetilde{\alpha}}-\lambda\boldsymbol{1}-s=\boldsymbol{0},\quad
	{\widetilde{\alpha}}\ge \boldsymbol{0}, s\ge \boldsymbol{0},\quad
	{\widetilde{\alpha}}_k s_k=0, \;(k=1,\ldots,K).
	\end{align*}
	Let $m,k_m$ and $\# k_m$ be defined as in Proposition \ref{prop:liqstrategy}. Direct calculation shows that a solution to the KKT conditions is given by $(\widetilde{\alpha}^*,\lambda^*,s^*)$ defined through
	\begin{align*}
	\widetilde{\alpha}^*_k:=\begin{cases}
	\frac{1}{\# k_m}, &\text{if $k\in k_m$,}\\
	0, & \text{otherwise.}\end{cases}, \;(k=1,\ldots,K),\qquad \lambda^*:=\tfrac{1}{m^2}, \qquad s^*:=\tfrac{1}{m}(\tfrac{Q_{\rm tot}}{\gamma\circ Q_0^{\rm nb}}-\tfrac{1}{m}\boldsymbol{1}).
	\end{align*}
	(BI) possesses a convex domain, linear constraints and a convex objective function, because $\tfrac{Q_{\rm tot}}{\gamma\circ Q_0^{\rm nb}}(\tfrac{Q_{\rm tot}}{\gamma\circ Q_0^{\rm nb}})^\top$ is positive semidefinite. Hence, $\widetilde{\alpha}^*$ constitutes a global minimizer of (BI). 
\end{proof}

\end{appendices}

\end{document}